\providecommand{\tabularnewline}{\\}
\theoremstyle{definition}
\newtheorem{defn}{\protect\definitionname}
\theoremstyle{plain}
\newtheorem{thm}{\protect\theoremname}
\theoremstyle{remark}
\newtheorem{claim}{\protect\claimname}
\theoremstyle{definition}
\theoremstyle{plain}
\providecommand{\definitionname}{Definition}
\providecommand{\remarkname}{Remark}
\providecommand{\theoremname}{Theorem}
\global\long\def\alp{\alpha}
\global\long\def\eps{\varepsilon}
\global\long\def\lam{\lambda}
\global\long\def\sig{\sigma}
\global\long\def\Sig{\Sigma}
\global\long\def\vrp{\varphi}
\global\long\def\Omg{\Omega}
\global\long\def\til#1{\tilde{#1}}
\global\long\def\what#1{\widehat{#1}}
\global\long\def\R{\mathbb{R}}
\global\long\def\zero{\boldsymbol{0}}
\global\long\def\fb{\boldsymbol{f}}
\global\long\def\sbb{\boldsymbol{s}}
\global\long\def\VV{\boldsymbol{V}}
\global\long\def\xx{\boldsymbol{x}}
\global\long\def\yy{\boldsymbol{y}}
\global\long\def\betaa{\boldsymbol{\beta}}
\global\long\def\epss{\boldsymbol{\eps}}
\global\long\def\ds{\text{{diag(\ensuremath{\sbb})}}}
\global\long\def\BC{\text{Barber and Cand\`es}}
\global\long\def\diag{\text{{diag}}}
\newcommand{\suppfig}{Supp.~Fig.~}
\newcommand{\suppfigs}{Supp.~Figs.~}
\newcommand{\suppsec}{Supp.~Sec.~}
\newcommand{\suppsecs}{Supp.~Secs.~}
\providecommand{\claimname}{Claim}
\providecommand{\definitionname}{Definition}
\providecommand{\theoremname}{Theorem}
\begin{document}
\title{Controlling the FDR in variable selection via multiple knockoffs}
\author{Kristen Emery and Uri Keich\\
 School of Mathematics and Statistics F07\\
 University of Sydney\\
}
\maketitle
\begin{abstract}
\BC~recently introduced a feature selection method called knockoff+
that controls the false discovery rate (FDR) among the selected features
in the classical linear regression problem. Knockoff+ uses the competition
between the original features and artificially created knockoff features
to control the FDR \cite{barber:controlling}. We generalize \BC'
knockoff construction to generate multiple knockoffs and use those
in conjunction with a recently developed general framework for multiple
competition-based FDR control \cite{emery:multiple2}.

We prove that using our initial multiple-knockoff construction the
combined procedure rigorously controls the FDR in the finite sample setting.
Because this construction has a somewhat limited utility we introduce a heuristic
we call ``batching'' which significantly improves the power of our multiple-knockoff
procedures.

Finally, we combine the batched knockoffs with a new context-dependent resampling
scheme that replaces the generic resampling scheme used in the general multiple-competition setup.
We show using simulations that the resulting ``multi-knockoff-select'' procedure  empirically controls
the FDR in the finite setting of the variable selection problem while
often delivering substantially more power than knockoff+.
\end{abstract}
\noindent \textsc{Keywords:} multiple knockoffs, false discovery rate,
variable selection, linear regression

\section{Introduction}

When using the classical linear regression model we posit that the
observed response vector $\yy\in\R^{n}$ satisfies
\begin{equation}
\yy=X\betaa+\epss,\label{eq:lin_model}
\end{equation}
where $X$ is the $n\times p$ known, real-valued, design matrix,
$\betaa\in\R^{p}$ is the unknown vector of coefficients, and $\epss\sim N(0,\sig^{2}I)$
is Gaussian noise. This model is ubiquitously utilized in many fields
of science when trying to explain observed response measurements using
a large number of potential explanatory features. A critical question
that scientists face when using the model is that of variable, or
model selection: which of the explanatory features (columns of $X$)
should be included in the model and which should not (e.g.,~\cite{James2013}).

Recently, G'Sell et al.~suggested using the notion of false discovery
rate (FDR) as a way of gauging and hence controlling the quality of
a selected set of variables~\cite{gsell:sequential}. Originally
introduced by Benjamini and Hochberg in the context of multiple hypotheses
testing~\cite{benjamini:controlling}, in our model selection context
FDR amounts to the expected proportion of the variables that were
erroneously added to the model.

Soon afterwards \BC\ introduced their knockoff+ procedure (KO+)
that rigorously controls the FDR in the finite variable selection
context~\cite{barber:controlling}. Briefly, knockoff+ relies on
introducing an $n\times p$ knockoff design matrix $\til X$, where
each column consists of a knockoff copy of the corresponding original
variable. These knockoff variables are constructed so that in terms
of the underlying regression problem the true null features (the ones
that are not included in the model) are in some sense indistinguishable
from their knockoff copies. The procedure then assigns to each null
hypothesis $H_{i}:\beta_{i}=0$ two test statistics $Z_{i},\til Z_{i}$
which correspond to the point $\lambda$ on the Lasso path \cite{tibshirani:regression}
at which feature $X_{i}$, respectively, its knockoff competition
$\tilde{X}_{i}$, first enters the model when regressing the response
$\yy$ on the augmented design matrix $\left[X\tilde{X}\right]$.\footnote{The knockoff+ procedure can utilize other statistics that satisfy
a certain exchangeability condition but the one presented here is the focus of \cite{barber:controlling}.}
The intuition here is that generally $Z_{i}>\til Z_{i}$ for true
model features, whereas for null features, $Z_{i}$ and $\til Z_{i}$
are identically distributed.

It is the competition between each $Z_{i}$ and its corresponding
$\til Z_{i}$ that allows \BC\ to define a selection procedure that
controls the FDR. Formally, this is done through their rigorous (Selective)
SeqStep+ procedure but in essence it is based on their ability to
estimate the FDR among the list of top $k$ original variable wins
in $T=\left\{ i\,:\,Z_{i}>\til Z_{i}\right\} $ using the number of
knockoff wins ($Z_{i}<\til Z_{i}$). Specifically, if we let $Z_{k}^{*}$
denote the score $Z_{i}$ of the $k$th largest feature in $T$, then
(ignoring possible ties) the FDR among the top $k$ features in $T$
is estimated as the ratio of (one plus) the number of knockoff wins
$\til Z_{i}>\max\{Z_{k}^{*},Z_{i}\}$ to $k$. The knockoff+ procedure
selects the largest subset of top $k=k(\alp)$ features in the set $T$ of all original feature
wins so that the above estimated FDR is still $\le\alp$.

Thus, at its core knockoff+ implements FDR control via competition
which applies in a much more general setting. Indeed, exactly this
kind of competition based FDR control has been widely used in computational
mass spectrometry for over a decade using the alternative terminology
of target vs.~decoy instead of original vs.~knockoff~ \cite{elias:target,cerqueira:mude,jeong:false,elias:target2}.

In their paper \BC\ suggest that creating multiple knockoffs for
each feature could potentially increase the power of knockoff+ ---
something that has since been done in other contexts of competition
based FDR control. Specifically, \cite{keich:progressive,keich:averaging}
utilize multiple decoys in the context of the spectrum identification
problem and Emery et al.~offer a more powerful approach to FDR control
in the context where for each observed score $Z_{i}$ we can generate
a small number of independent decoy scores $\til Z_{i}^{j}$ $j=1,\dots,d$
\cite{emery:multiple2}. Emery et al.~also point out that their
approach applies in a more general setting where the decoys satisfy
an extension of the ``null exchangeability'' of~\cite{barber:controlling}
that we will revisit below.

In attempting to create multiple knockoffs to which we can apply the
procedures of Emery et al.~we face several challenges. First, the
knockoff variables that \BC\ introduced do not allow an obvious
generalization to multiple knockoffs. Their paper discusses creating
a single deterministic knockoff for each variable, and while their
published code has a knockoff randomization option, the resulting
true null knockoffs are not independent of one another, nor do they
satisfy the aforementioned null exchangeability. Second, as we will
see below, the more intuitive approach for generalizing \BC' construction
to creating multiple knockoffs suffers from reduced power and limited
applicability. Here we explore remedying this loss of power and applicability
by introducing a heuristic that we refer to as ``batched knockoffs.''
The idea behind batching is that while we need to create all the $k$
knockoffs of each feature at the same time, we might not need to create
the knockoffs for all features at the same time.

Applying Emery et al.'s FDR controlling procedures to the batched
knockoffs we find empirically that the combined procedures seem to
maintain control of the FDR in the variable selection problem. Moreover,
their overall recommended procedure, LBM, often enjoys a non-negligible
power advantage over knockoff+. A critical component of LBM is its
resampling approach to determining the values of two tuning parameters
that are then used in conjunction with their mirandom mapping
to define their selection procedure. The resampling strategy of LBM
is constrained to fit the general context of independent or exchangeable
knockoffs/decoys but in the specific context of linear regression
we can do better. Indeed, we propose an alternative resampling strategy
that makes use of the underlying linear regression model to select
the same tuning parameters. We provide empirical evidence that using
this so-called model-aware resampling yields a more powerful procedure
that seemingly still controls the FDR even when we use it to optimize
not just the tuning parameters but the number of knockoffs as well.

\section{Constructing multiple knockoffs (I)\label{sec:multikosI}}

\BC' knockoff construction ensures that the correlation (technically,
inner product) between any two distinct original features remains
unchanged if we replace one or both of those with their knockoff copies.
Thus, in terms of the Lasso, each null variable ($\beta_{j}=0$) is
statistically indistinguishable from its knockoff. At the same time,
their construction tries to minimize the correlation between each
feature and its knockoff so that true variables ($\beta_{j}\ne0$)
would not be too similar to their knockoffs, lest the procedure's
power would be compromised.

Specifically, given the Gram matrix $\Sigma=X^{T}X$\footnote{We adopt the same convention of \cite{barber:controlling} that the
columns of $X$ are normalized so $\diag\left(\Sig\right)\equiv1$.} \BC\ define their set of knockoff features $\tilde{X}$ through
requiring that $\tilde{X}^{T}\til X=\Sigma$ and $X^{T}\til X=\Sigma_{0}$,
where $\Sigma_{0}\coloneqq\Sigma-\ds$, and $\sbb$ is a non-negative
vector that will be specified below. That is, the Gram matrix of the
$n\times2p$ dimensional augmented design matrix $\left[X\tilde{X}\right]$
satisfies 
\[
\left[X\tilde{X}\right]^{T}\left[X\tilde{X}\right]=\begin{bmatrix}\Sigma & \Sigma_{0}\\
\Sigma_{0} & \Sigma
\end{bmatrix}\eqqcolon G.
\]
\BC\ show that these latter equations can be solved if and only
if the vector $\sbb$ is chosen so that the above defined $G$ is
a non-negative definite matrix ($G\succeq0$).

Considering a constant vector $\sbb\equiv s_{0}$, we can minimize
($1-s_{0}$), the correlation between each feature and its knockoff,
by maximizing $s_{0}$ subject to the constraint that $G\succeq0$.
\BC' equi-correlated construction shows this maximization can be
achieved if we choose $s_{0}=2\lambda_{\min}\left(\Sig\right)\wedge1$,
where $\lambda_{\min}\left(\Sig\right)$ is the minimal eigenvalue
of $\Sig$. They then explicitly define a set of knockoff variables
that satisfies the above equations (Equation (2.2) in \cite{barber:controlling}):
\begin{equation}
\tilde{X}=X\left(I-\Sigma^{-1}\ds\right)+\til UC,\label{eq:BC_tilX}
\end{equation}
where $\til U\in\R^{n\times p}$ is an orthonormal matrix whose column
space is orthogonal to that of $X$, and $C^{T}C=2\ds-\ds\Sig^{-1}\ds$.

We next generalize this construction to create $d$ knockoffs per
feature by first finding an augmented $\left(d+1\right)p\times\left(d+1\right)p$-dimensional
Gram matrix $G$ and then finding an $n\times dp$-dimensional solution
$\til X$ for the equation $\left[X\tilde{X}\right]^{T}\left[X\tilde{X}\right]=G$.
Throughout this section we assume $n\ge(d+1)p$ (generalizing \BC'
assumption that $n\ge2p$). We will relax this assumption in Section
\ref{sec:Batched-KOs}.

\subsection{Creating a Gram matrix}

We first demonstrate our construction using $d=2$ knockoffs per feature.
The original matrix G suggests the following $3p\times3p$-dimensional
augmented Gram matrix:

\[
G\coloneqq\begin{bmatrix}\Sigma & \Sigma_{0} & \Sigma_{0}\\
\Sigma_{0} & \Sigma & \Sigma_{0}\\
\Sigma_{0} & \Sigma_{0} & \Sigma
\end{bmatrix},
\]
where $\Sig=X^{T}X$ and $\Sigma_{0}\coloneqq\Sigma-\ds$ as before.
The idea is that now the knockoff matrix will be $\til X=\left[\til X^{1}\,\til X^{2}\right]$,
where each $\til X^{i}$ corresponds to one complete set of knockoff
variables, so that each $\til X^{i}$ behaves exactly as a single
set of \BC' knockoffs. In addition, the correlations between the
two sets of knockoffs are the same as between each one of them and
the original design matrix $X$.

More generally, we define the $\left(d+1\right)p\times\left(d+1\right)p$-dimensional
augmented Gram matrix as a $\left(d+1\right)\times\left(d+1\right)$
block matrix, where each block is a $p\times p$ sub-matrix $B_{ij}$,
where $B_{ii}=\Sig$, and for $i\ne j$ $B_{ij}=\Sig_{0}$. This corresponds
to a knockoff matrix $\til X=\left[\til X^{1}\,\til X^{2}\dots\til X^{d}\right]$
that is made of $d$ blocks/copies $\til X^{i}$, $i=1,\dots,d$,
with the same correlation structure as discussed for the $d=2$ case
above.

We will next show how to construct $\til X$ so that $G$ is indeed
the Gram matrix of the augmented $n\times\left(d+1\right)p$ design
matrix $\left[X\tilde{X}\right]$. However, we can only do that if
$G\succeq0$, which in turn depends on $\sbb$. Again, we consider
the equi-correlated case of $\sbb\equiv s_{0}$, but we can no longer
use the same $s_{0}=2\lambda_{\min}\left(\Sig\right)\wedge1$ that
works for the $d=1$ case. That said, we empirically found that setting
\begin{equation}
s_{0}=\frac{d+1}{d}\lambda_{\min}\left(\Sig\right)\wedge1\label{eq:s0_equi_general}
\end{equation}
yields the optimal result in the general case. That is, with this
critical value, $G\succeq0$, and if $s_{0}<1$ then $G$ is also
rank deficient so $s_{0}$ cannot be any larger than this value. Notably,
this critical value, which generalizes \BC' expression for $d=1$,
decreases with $d$ --- a point we will return to below.

\subsection{Creating the knockoff variables with the given Gram matrix}

The original procedure \eqref{eq:BC_tilX} of deriving $\tilde{X}$
from $X$ is not clearly generalizable to our setting, so instead
we offer the following alternative procedure.

We first find $X_{0}$, a $\left(d+1\right)p\times\left(d+1\right)p$-dimensional
symmetric root of $G$ so that $X_{0}X_{0}=G$. Technically, we accomplish
this by starting with a singular value decomposition (SVD) of $G$:
$G=USV^{T}$, where $S$ is a diagonal matrix and $U,V$ are orthogonal
matrices. Since $G$ is symmetric, the SVD is in fact a spectral decomposition
of $G$: $G=USU^{T}$, so we can define $X_{0}\coloneqq US^{1/2}U^{T}$.

Note that the Gram matrix of the first $p$ columns of $X_{0}$ is
the corresponding $p\times p$ leading sub-matrix of $G$, which is
$\Sig$. Hence, assuming as we do that $n\ge(d+1)p$, there exists
an orthogonal map $\til U:\R^{(d+1)p}\mapsto\R^{n}$ that maps the
first $p$ columns of $X_{0}$ to $X$. Specifically, we can find
such a map by first doing a QR decomposition of $X_{0}$: 
\[
X_{0}=Q_{0}R_{0},
\]
where $Q_{0}$ is a $\left(d+1\right)p\times\left(d+1\right)p$ orthogonal
matrix, and $R_{0}$ is an upper triangular matrix of the same dimension.
We next find a thin QR decomposition \cite{golub:matrix} of 
\begin{equation}
[XA]=QR,\label{eq:QR_on_X}
\end{equation}
where $A$ is an arbitrary $n\times dp$ matrix, $Q$ is an $n\times\left(d+1\right)p$
matrix with orthonormal columns, and $R$ is a $\left(d+1\right)p\times\left(d+1\right)p$
upper triangular matrix. Subject to a sign normalization we discuss
below, the map $\til U$ we seek can be defined by the matrix 
\begin{equation}
\til U\coloneqq QQ_{0}^{T}.\label{eq:orthogonal_map}
\end{equation}

Defining 
\begin{equation}
X_{1}\coloneqq\til UX_{0}=QQ_{0}^{T}X_{0}=QR_{0},\label{eq:X1_from_X0}
\end{equation}
we note that $X_{1}$ is an $n\times\left(d+1\right)p$ matrix and
\[
X_{1}^{T}X_{1}=X_{0}^{T}Q_{0}Q^{T}QQ_{0}^{T}X_{0}=X_{0}^{T}Q_{0}Q_{0}^{T}X_{0}=G.
\]
Moreover, because the Gram matrices of the columns of $X$ and of
the first $p$ columns of $X_{0}$ are the same, and because the QR
decomposition is essentially just the Gram-Schmidt procedure, it follows
that the $p\times p$ leading minor of $R_{0}$ ($R_{0}(i,j)$ for
$i,j\le p$) agrees with $R$ up to row signs, which we can readily
match by adjusting the signs of the corresponding columns of $Q$.

Thus, without loss of generality, the first $p$ columns of $X_{1}$
coincide with the original design matrix $X$, and the next $dp$
columns are our knockoff variables. In other words, $X_{1}$ is the
augmented design matrix, where for each feature $i\in\left\{ 1,\dots,p\right\} $
the $i$th column of $X_{1}$ corresponds to the original $n$ variables,
and columns $i+jp$ for $j=1,\dots,d$ are its $d$ knockoff copies.

\subsection{The knockoff scores and conditional null exchangeability}

We can now describe the (first version of) our procedure for constructing
multiple-knockoff scores. Assuming $n\ge(d+1)p$, the procedure constructs
the $n\times\left(d+1\right)p$ augmented design matrix $\left[X\tilde{X}\right]$
as described above. Following the knockoff+ procedure, it then applies
the Lasso procedure (to $\yy$ and $\left[X\tilde{X}\right]$) to
generate the set of scores $\left\{ \til Z_{i}^{0}\coloneqq Z_{i},\til Z_{i}^{1},\dots,\til Z_{i}^{d}\right\} $
for each feature $i\in\left\{ 1,\dots,p\right\} $. Specifically,
each value is the point $\lambda$ on the Lasso path at which the
corresponding variable, the original $X_{i}$ or its $d$ knockoffs
$\til X_{i}^{j}$, $j=1,\dots,d$, first enters the model.

We next show that our procedure creates knockoff scores that satisfy
the null exchangeability condition of Emery et al.~ and hence applying
their meta-procedure with any pre-determined values of the tuning
parameters $(c,\lam)$ and the mirandom map $\vrp_{md}$ controls
the FDR in the finite variable selection setting~\cite{emery:multiple2}.\footnote{Note that the number of hypotheses here is $m=p$, the number of features
considered.}
\begin{defn}
\label{def:permissible_perm}Let $\Pi_{d+1}$ denote the set of all
permutations on $\left\{ 1,2,\dots,d+1\right\} $ and let $N\subset\left\{ 1,2,\dots,p\right\} $
be the indices of the true null features. A sequence of permutations
$\Pi=(\pi_{1},\dots,\pi_{p})$ with $\pi_{i}\in\Pi_{d+1}$ is a \emph{null-only
sequence} if $\pi_{i}=Id$ (the identity permutation) for all $i\notin N$.
\end{defn}
\begin{thm}
\label{Thm:main}Suppose $\yy$ is generated according to the linear
model (\ref{eq:lin_model}) with a given $n\times p$ design matrix
$X$ with $n\ge(d+1)p$. Let $\VV_{i}\coloneqq\left(\til Z_{i}^{0},\til Z_{i}^{1},\dots,\til Z_{i}^{d}\right)$,
where $\til Z_{i}^{0}\coloneqq Z_{i}$ is the $i$th original feature
score and $\til Z_{i}^{1},\dots,\til Z_{i}^{d}$ are its corresponding
$d$ knockoff scores defined above. For $\pi\in\Pi_{d+1}$ let $\VV_{i}\circ\pi\coloneqq\left(\til Z_{i}^{\pi(1)-1},\dots,\til Z_{i}^{\pi(d+1)-1}\right)$,
i.e., the permutation $\pi$ is applied to the indices of the vector
$\VV_{i}$ rearranging the order of its entries. Then for any null-only
sequence of permutations $\Pi=(\pi_{1},\dots,\pi_{p})$, the joint
distribution of $\VV_{1}\circ\pi_{1},\dots,\VV_{p}\circ\pi_{p}$ is
invariant of $\pi_{1},\dots,\pi_{p}$.
\end{thm}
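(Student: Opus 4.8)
The plan is to reduce the scores to a deterministic function of a pair of sufficient statistics and then exploit a built-in exchangeability of the Gram matrix $G$. Write $W\coloneqq[X\til X]$ for the $n\times(d+1)p$ augmented design, and index its columns by pairs $(i,j)$, with $i\in\{1,\dots,p\}$ and $j\in\{0,1,\dots,d\}$, where $(i,j)$ is the $j$th copy of feature $i$ (so $j=0$ is the original and $\til Z_i^j$ is the Lasso entry time of column $(i,j)$). The first observation is that the entire Lasso path, and hence every entry time $\til Z_i^j$, is a measurable function $\Phi$ of the pair $(W^T\yy,G)$ alone: the Lasso objective $\tfrac12\|\yy-W\beta\|^2+\lam\|\beta\|_1$ differs from $-\beta^TW^T\yy+\tfrac12\beta^TG\beta+\lam\|\beta\|_1$ only by the constant $\tfrac12\|\yy\|^2$, so its minimizer depends on the data only through $W^T\yy$ and $G=W^TW$. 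Moreover $\Phi$ is equivariant under column permutations, since permuting the columns of $W$ merely relabels the coordinates of the path.

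Next I would record the exchangeability of $G$. For a null-only sequence $\Pi=(\pi_1,\dots,\pi_p)$ let $P_\Pi$ be the permutation matrix that applies $\pi_i$ to the $d+1$ copies of each feature $i$, chosen so that $P_\Pi^T\Phi(W^T\yy,G)$ is exactly the stacked vector $(\VV_1\circ\pi_1,\dots,\VV_p\circ\pi_p)$. The key structural fact is that $P_\Pi^TGP_\Pi=G$ for every $\Pi$, not only the null-only ones. Indeed, by the block structure of $G$ the correlation between copies of two distinct features $i\neq k$ equals $\Sigma_{ik}$ regardless of the copy indices, while the correlation between two copies of the same feature $i$ equals $1$ if the copies coincide and $1-s_i$ otherwise (where $s_i$ is the $i$th entry of $\sbb$); both quantities are manifestly invariant under any within-feature permutation of copies.

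The step where nullity is used is the invariance in distribution of $W^T\yy$. Decompose $W^T\yy=W^TX\betaa+W^T\epss$. For the noise term, $W^T\epss\sim N(0,\sig^2G)$, so $P_\Pi^T(W^T\epss)\sim N(0,\sig^2P_\Pi^TGP_\Pi)=N(0,\sig^2G)$ has the same law; this uses only the exchangeability of $G$ above. For the deterministic mean term I would show $P_\Pi^T(W^TX\betaa)=W^TX\betaa$ exactly. The $(i,j)$ coordinate of $W^TX\betaa$ is $\sum_k(W^TX)_{(i,j),k}\beta_k$, and $(W^TX)_{(i,j),k}$ depends on the copy index $j$ only through the single entry $k=i$, which equals $1$ when $j=0$ and $1-s_i$ when $j\neq0$. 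Hence permuting the copy index of feature $i$ changes the mean only through the term carrying $\beta_i$; for $i\in N$ this term vanishes because $\beta_i=0$, and for $i\notin N$ we have $\pi_i=Id$ by the null-only assumption, so in either case the mean is unchanged. Combining the two terms gives $P_\Pi^T(W^T\yy)\stackrel{d}{=}W^T\yy$ for every null-only $\Pi$.

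Finally I would assemble the pieces. By equivariance of $\Phi$ together with $P_\Pi^TGP_\Pi=G$, the permuted scores satisfy $P_\Pi^T\Phi(W^T\yy,G)=\Phi(P_\Pi^TW^T\yy,G)$; since $P_\Pi^TW^T\yy$ has the same law as $W^T\yy$ and $\Phi(\cdot,G)$ is a fixed measurable map, the right-hand side is equal in law to $\Phi(W^T\yy,G)$, i.e. to the unpermuted scores $(\VV_1,\dots,\VV_p)$. As this holds for every null-only $\Pi$, the joint law of $\VV_1\circ\pi_1,\dots,\VV_p\circ\pi_p$ does not depend on $\Pi$. The main obstacle I anticipate is the technical bookkeeping around the Lasso path --- justifying that the entry times form a well-defined, permutation-equivariant, measurable function of $(W^T\yy,G)$ even in the presence of ties or non-unique solutions --- which mirrors the argument \BC\ give for the single-knockoff case; the conceptual heart, by contrast, is the clean separation achieved in the mean-term computation, where the original-versus-knockoff distinction ($1$ versus $1-s_i$) is annihilated precisely for the null coordinates.
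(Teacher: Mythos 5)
Your proposal is correct and takes essentially the same route as the paper's proof: your identity $P_\Pi^T G P_\Pi = G$ is the paper's Claim 1, your distributional invariance of $W^T\yy$ (covariance handled by Gram invariance and Gaussianity, mean handled exactly by the paper's case split between $\beta_i=0$ and $\pi_i=Id$) is Claim 2, and your reduction of the entry times to a permutation-equivariant measurable function of the sufficient statistics $(W^T\yy,G)$ is Claim 3. The only differences are cosmetic --- you phrase the within-feature column permutations as conjugation by a permutation matrix rather than the paper's $\hat{X}\circ\Pi$ notation, and you flag the tie/measurability bookkeeping explicitly where the paper defers to Barber and Cand\`es.
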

Note that (a) the conclusion of the theorem is exactly the conditional
null exchangeability of Emery et al.~and (b) that the joint distribution
is the one induced by the Gaussian noise $\epss$ in our linear model
(the design matrix $X$ is fixed).
\begin{proof}
The proof of the theorem uses claims analogous to Lemmas 1, 2 and
3 of \cite{barber:controlling}. Denote by $\hat{X}=\left[X\tilde{X}\right]$
the above $n\times(d+1)p$ augmented design matrix, so that $G=\hat{X}^{T}\hat{X}$,
and by $\hat{X}(i)$ its $i^{th}$ column, so for $i\in\{1,\dots,p\}$
the columns $\hat{X}(i),\hat{X}(i+p),\dots,\hat{X}(i+dp)$ correspond
to the $i^{th}$ feature and its $d$ knockoffs.

For a null-only sequence of permutations $\Pi=(\pi_{1},\dots,\pi_{p})$
let $\hat{X}\circ\Pi$ denote the $n\times(d+1)p$ matrix whose $i^{th}$
column for any $i=i_{0}+i_{1}\cdot p$, where $i_{0}\in\{1,\dots,p\}$
and $i_{1}\in\{0,1,\dots,d\}$, is given by 
\[
(\hat{X}\circ\Pi)(i)\coloneqq\hat{X}(i_{0}+\pi'_{i_{0}}(i_{1})\cdot p),
\]
where $\pi'_{i_{0}}(i_{1})=\pi_{i_{0}}(i_{1}+1)-1$ (note that $i_{0}=(i-1)\mod p+1$
and $i_{1}=(i-i_{0})/p$). In words, the permutation $\pi_{i_{0}}$
is applied to reorder the columns $i_{0},i_{0}+p,\dots i_{0}+dp$
of $\hat{X}$ so their new order is $\pi_{i_{0}}(1),\dots,\pi_{i_{0}}(d+1).$

The first of our claims generalizes Lemma 2 of \BC: applying as above
any sequence of permutations (not necessarily null-only) $\Pi=(\pi_{1},\dots,\pi_{p})$
to the columns of the augmented design matrix does not change the
correlations between its columns.
\begin{claim}
\label{thm3_claim1} $(\hat{X}\circ\Pi)^{T}(\hat{X}\circ\Pi)=\hat{X}^{T}\hat{X}=G$.
\end{claim}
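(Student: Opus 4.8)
The plan is to prove Claim~\ref{thm3_claim1} by showing that permuting columns within each feature-block leaves every pairwise inner product of columns of $\hat{X}$ unchanged. The key structural fact is that the Gram matrix $G$ has a highly symmetric block form: the $(d+1)\times(d+1)$ block structure indexed by $(i_1, j_1) \in \{0,\dots,d\}^2$ has diagonal blocks equal to $\Sig$ and off-diagonal blocks equal to $\Sig_0$, regardless of which specific knockoff copies we are comparing. I would first fix two columns $(\hat{X}\circ\Pi)(i)$ and $(\hat{X}\circ\Pi)(j)$, writing $i = i_0 + i_1 p$ and $j = j_0 + j_1 p$, and compute their inner product as $\hat{X}(i_0 + \pi'_{i_0}(i_1)\cdot p)^T \hat{X}(j_0 + \pi'_{j_0}(j_1)\cdot p)$, which is the $(i_0, j_0)$ entry of the block $B_{\pi'_{i_0}(i_1)+1,\,\pi'_{j_0}(j_1)+1}$ of $G$.

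The crux is a case analysis on whether the two columns come from the same feature block or different ones. First I would handle $i_0 = j_0$: here the same permutation $\pi_{i_0}$ is applied to both copy-indices, so the relevant block is $\Sig$ when $i_1 = j_1$ (equivalently $\pi'_{i_0}(i_1) = \pi'_{i_0}(j_1)$, since $\pi_{i_0}$ is a bijection) and $\Sig_0$ when $i_1 \ne j_1$; this exactly matches the unpermuted entry. The second case $i_0 \ne j_0$ is where the essential observation lives: the columns belong to distinct features, and \emph{every} cross-feature block of $G$ equals $\Sig_0$ in its off-diagonal-feature entries and more precisely the $(i_0,j_0)$ entry with $i_0 \ne j_0$ is the same in $\Sig$ and in $\Sig_0$. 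Indeed, since $\Sig_0 = \Sig - \ds$ and $\diag(\sbb)$ is diagonal, $\Sig$ and $\Sig_0$ agree off the diagonal, so $\Sig(i_0,j_0) = \Sig_0(i_0,j_0)$ whenever $i_0 \ne j_0$. Consequently the inner product does not depend on which blocks $\pi'_{i_0}$ and $\pi'_{j_0}$ select, and permuting changes nothing.

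Assembling these two cases shows that each entry of $(\hat{X}\circ\Pi)^T(\hat{X}\circ\Pi)$ equals the corresponding entry of $G = \hat{X}^T\hat{X}$, which is the claim. I expect the main obstacle to be purely organizational rather than mathematical: one must track the index bookkeeping carefully, in particular the translation between the ``flat'' column index $i$ and the pair $(i_0, i_1)$, and between the permutation $\pi_{i_0}$ on $\{1,\dots,d+1\}$ and its shifted version $\pi'_{i_0}$ on $\{0,\dots,d\}$, so that the block being indexed is identified correctly. Once the two cases are set up, the analysis reduces to the single elementary observation that $\Sig$ and $\Sig_0$ differ only on their diagonal, which is why it is precisely the off-diagonal, cross-feature comparisons that are insensitive to any permutation. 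I would emphasize that this is exactly the property engineered into the construction of $G$, mirroring Lemma~2 of \BC, and note that the claim holds for \emph{arbitrary} sequences $\Pi$, not just null-only ones, since the argument never uses the value of $\betaa$.
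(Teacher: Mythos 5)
Your proof is correct and takes essentially the same route as the paper: the paper writes the entry compactly as $G_{i_{0}+i_{1}p,\,j_{0}+j_{1}p}=\sigma_{i_{0},j_{0}}-\delta_{i_{0},j_{0}}(1-\delta_{i_{1},j_{1}})s_{0}$ and notes this is invariant under $(i_{1},j_{1})\mapsto(\pi'_{i_{0}}(i_{1}),\pi'_{j_{0}}(j_{1}))$, which is exactly your two-case analysis (a common bijection preserves $\delta_{i_{1},j_{1}}$ when $i_{0}=j_{0}$; $\Sig$ and $\Sig_{0}$ agree off the diagonal when $i_{0}\ne j_{0}$) encoded via Kronecker deltas. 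Your closing remark that the claim holds for arbitrary, not merely null-only, sequences $\Pi$ also matches the paper's statement of the claim.
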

\begin{proof}
Let $i=i_{0}+i_{1}p$ and $j=j_{0}+j_{1}p$, where, as above, $i_{0},j_{0}\in\{1,\dots,p\}$
and $i_{1},j_{1}\in\{0,1,\dots,d\}$. Then, with $\Sig=(\sig_{ij})$,
and $\delta_{i,j}$ the Kronecker delta we have 
\begin{align*}
\hat{X}(i)^{T}\hat{X}(j) & =G_{i,j}=G_{i_{0}+i_{1}p,j_{0}+j_{1}p}=\\
& =\sigma_{i_{0},j_{0}}-\delta_{i_{0},j_{0}}(1-\delta_{i_{1},j_{1}})s_{0}=G_{i_{0}+\pi'_{i_{0}}(i_{1})\cdot p,j_{0}+\pi'_{j_{0}}(j_{1})\cdot p}\\
& =\hat{X}(i_{0}+\pi'_{i_{0}}(i_{1})\cdot p)^{T}\hat{X}(j_{0}+\pi'_{j_{0}}(j_{1})\cdot p)=\left[(\hat{X}\circ\Pi)(i)\right]^{T}\left[(\hat{X}\circ\Pi)(j)\right].
\end{align*}
\end{proof}
The next claim generalizes Lemma 3 of \BC: applying a null-only sequence
of permutations $\Pi$ to the columns of $\hat{X}$ has no effect
on the distribution of $\hat{X}^{T}\yy$.
\begin{claim}
\label{thm3_claim2} $(\hat{X}\circ\Pi)^{T}\yy\overset{d}{=}\hat{X}^{T}\yy$.
\end{claim}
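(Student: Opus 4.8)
The plan is to prove $(\hat{X}\circ\Pi)^{T}\yy\overset{d}{=}\hat{X}^{T}\yy$ by showing that both sides are Gaussian vectors with the same mean and covariance. Since $\yy=X\betaa+\epss$ with $\epss\sim N(0,\sig^2 I)$, any linear functional $M^T\yy$ is Gaussian with mean $M^T X\betaa$ and covariance $\sig^2 M^T M$. Applying this with $M=\hat{X}$ and $M=\hat{X}\circ\Pi$, the distributional equality reduces to two checks: equality of the covariances and equality of the means.

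The covariances are $\sig^2 \hat{X}^T\hat{X}$ and $\sig^2 (\hat{X}\circ\Pi)^T(\hat{X}\circ\Pi)$, which are equal by Claim~\ref{thm3_claim1} (both equal $\sig^2 G$), and crucially this holds for \emph{any} sequence of permutations, null-only or not. So the first I would dispatch immediately by citing the previous claim. The real content lives in the mean vectors, and this is where the null-only hypothesis is used. I would compute the $i$th entry of each mean for $i=i_0+i_1 p$ and compare $\hat{X}(i)^T X\betaa$ against $(\hat{X}\circ\Pi)(i)^T X\betaa = \hat{X}(i_0+\pi'_{i_0}(i_1)\cdot p)^T X\betaa$.

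The key observation is that $X\betaa = \sum_{j_0\in N^c} X(j_0)\beta_{j_0}$, where $N^c=\{1,\dots,p\}\setminus N$ is the set of non-null features (those with $\beta_{j_0}\ne 0$), since $\beta_{j_0}=0$ for $j_0\in N$. Thus the mean entry for column $i=i_0+i_1 p$ is $\sum_{j_0\in N^c}\hat{X}(i_0+i_1 p)^T X(j_0)\,\beta_{j_0} = \sum_{j_0\in N^c} G_{i_0+i_1 p,\,j_0}\,\beta_{j_0}$, and by the explicit block form $G_{i_0+i_1 p,\,j_0}=\sigma_{i_0,j_0}-\delta_{i_0,j_0}(1-\delta_{i_1,0})s_0$ (here the knockoff index on the $X(j_0)$ side is $0$). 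The permuted mean replaces $i_1$ by $\pi'_{i_0}(i_1)$. I would then argue these agree: if $i_0\in N$ then $\pi_{i_0}=Id$ by the null-only assumption so $\pi'_{i_0}(i_1)=i_1$ and the terms are literally identical; if $i_0\in N^c$ the diagonal term $\delta_{i_0,j_0}(1-\delta_{i_1,0})s_0$ could in principle change under the permutation, but I would note that the summation index $j_0$ ranges over $N^c$, and the permutation only ever reshuffles the \emph{knockoff} index $i_1$ while the correlation with a genuine feature $X(j_0)$ (knockoff index $0$) is $\sigma_{i_0,j_0}-\delta_{i_0,j_0}(1-\delta_{i_1,0})s_0$ — and the discrepancy can only arise on the diagonal block $i_0=j_0\in N^c$.

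The main obstacle, which I would treat carefully, is precisely that diagonal case $i_0=j_0\in N^c$: here the mean entry is $1-(1-\delta_{i_1,0})s_0$, which does depend on whether $i_1=0$, so permuting $i_1$ would seem to change it. The resolution is that this is exactly why the permutations must be null-only: a non-null feature $i_0\in N^c$ has $\pi_{i_0}=Id$, so $\pi'_{i_0}(i_1)=i_1$ and no reshuffling occurs on non-null columns. For null columns $i_0\in N$ the diagonal contribution $G_{i_0+i_1 p,\,i_0}$ involves $\beta_{i_0}=0$ and so drops out of the sum entirely. Hence in every case the permuted mean equals the original, giving $(\hat{X}\circ\Pi)^T X\betaa = \hat{X}^T X\betaa$. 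Combining equal means with equal covariances yields the distributional identity, completing the proof.
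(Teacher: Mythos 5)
Your proof is correct and takes essentially the same route as the paper: reduce the claim to equality of Gaussian means and covariances, dispatch the covariance via Claim~\ref{thm3_claim1}, and verify mean invariance through a case analysis driven by the null-only property (the paper organizes this by columns of $\hat{X}^{T}X$ with $\betaa_{i}\ne0$ while you work entrywise on $\hat{X}^{T}X\betaa$ summing over non-null $j_{0}$, but these are the same computation transposed). One slip to fix: in your second paragraph you write ``if $i_{0}\in N$ then $\pi_{i_{0}}=Id$,'' which inverts Definition~\ref{def:permissible_perm} (null-only means $\pi_{i}=Id$ for $i\notin N$, i.e., on the \emph{non-null} features); your final paragraph states the correct version --- identity on $N^{c}$, and for null $i_{0}$ the only $i_{1}$-dependent (diagonal) term drops out since $\beta_{i_{0}}=0$ --- and that is the argument the proof actually relies on.
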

\begin{proof}
As noted by \BC, $\yy=X\betaa+\epss\sim N(X\betaa,\sigma^{2}I),$
and therefore $\hat{X}^{T}\yy\sim N(\hat{X}^{T}X\betaa,\sigma^{2}\hat{X}^{T}\hat{X}),$
and $(\hat{X}\circ\Pi)^{T}\yy\sim N((\hat{X}\circ\Pi)^{T}X\betaa,\sigma^{2}(\hat{X}\circ\Pi)^{T}(\hat{X}\circ\Pi)).$

By Claim \ref{thm3_claim1}, $\hat{X}^{T}\hat{X}=(\hat{X}\circ\Pi)^{T}(\hat{X}\circ\Pi),$
therefore it suffices to show that for $i=1,\dots,p$, 
\begin{equation}
(\hat{X}^{T}X)(i)\cdot\betaa_{i}=((\hat{X}\circ\Pi)^{T}X)(i)\cdot\betaa_{i}.\label{claim_2_eq_star}
\end{equation}

This, again, follows along the lines of \BC: first, clearly \eqref{claim_2_eq_star}
holds for $i$ for which $\betaa_{i}=0$. For $\betaa_{i}\ne0$ we
need to show that the $i$th columns of $\hat{X}^{T}X$ and of $(\hat{X}\circ\Pi)^{T}X$
are identical. Consider the $j$th entry of that column where $j=j_{0}+j_{1}p$,
with $j_{0}\in\{1,\dots,p\}$ and $j_{1}\in\{0,\dots d\}$. Then, 
\begin{enumerate}
\item If $\betaa_{j_{0}}\neq0$ then as $\Pi$ is a null-only sequence of
permutations, $\pi_{j_{0}}=Id$ and therefore 
\[
\hat{X}(j)^{T}X(i)=\hat{X}(j_{0}+j_{1}p)^{T}X(i)=\hat{X}(j_{0}+\pi'_{j_{0}}(j_{1})\cdot p)^{T}X(i)=\left[(\hat{X}\circ\Pi)(j)\right]^{T}X(i),
\]
so \eqref{claim_2_eq_star} holds.
\item Else, $\betaa_{j_{0}}=0$ so $j_{0}\neq i$ and therefore 
\[
\hat{X}(j)^{T}X(i)=\hat{X}(j_{0}+j_{1}p)^{T}X(i)=\sigma_{j_{0},i}=\hat{X}(j_{0}+\pi'_{j_{0}}(j_{1})\cdot p)^{T}X(i)=\left[(\hat{X}\circ\Pi)(j)\right]^{T}X(i),
\]
and again \eqref{claim_2_eq_star} holds.
\end{enumerate}
\end{proof}
We finally generalize Lemma 1 of \BC. Recall that $V_{i}=(\tilde{Z}_{i}^{0}=Z_{i},\tilde{Z}_{i}^{1},\dots,\tilde{Z}_{i}^{k})$
and $V_{i}\circ\pi_{i}=(\tilde{Z}_{i}^{\pi_{i}(1)-1},\dots,\tilde{Z}_{i}^{\pi_{i}(d+1)-1})$. 
\begin{claim}
\label{thm3_claim3} For any null-only sequence of permutations $\Pi$,
$(V_{1},\dots V_{p})\overset{d}{=}(V_{1}\circ\pi_{1},\dots,V_{p}\circ\pi_{p})$.
\end{claim}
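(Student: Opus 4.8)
The plan is to show that the scores vector $(V_1,\dots,V_p)$ is a deterministic function of $\hat X^T\yy$ (and of the fixed matrix $\hat X$), and that applying a null-only permutation $\Pi$ to the columns of $\hat X$ has exactly the effect of permuting the entries of each $V_i$ by the corresponding $\pi_i$. Once these two facts are in place, Claim \ref{thm3_claim2} does the rest: since $(\hat X\circ\Pi)^T\yy\overset{d}{=}\hat X^T\yy$, any fixed function applied to these two equidistributed random vectors yields equidistributed outputs, and identifying the output on the left as $(V_1\circ\pi_1,\dots,V_p\circ\pi_p)$ gives the claim.

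The key observation is that the Lasso solution path, and hence each entry-time score $\til Z_i^j$, depends on the data only through the Gram matrix $\hat X^T\hat X=G$ and the marginal correlations $\hat X^T\yy$. More precisely, for a fixed design, the value of $\lambda$ at which a given column enters the Lasso path is a function of $(G,\hat X^T\yy)$, and this function is \emph{equivariant} under permutation of the columns: if we reorder the columns of $\hat X$ by some column permutation, the entry-times are reordered the same way (and the Gram matrix transforms by conjugation with the permutation). The permutation we use is the column permutation induced by the null-only sequence $\Pi$ as defined just before Claim \ref{thm3_claim1}, namely the map sending column $i_0+i_1 p$ to $i_0+\pi'_{i_0}(i_1)p$; restricted to the block of columns $\{i_0,i_0+p,\dots,i_0+dp\}$ belonging to feature $i_0$, it acts by $\pi_{i_0}$, which is precisely the permutation defining $V_{i_0}\circ\pi_{i_0}$.

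I would therefore carry out the proof in three steps. First, write the scores for the permuted design: let $W_i$ denote the score vector computed from $\hat X\circ\Pi$ in place of $\hat X$, and argue by the equivariance of the Lasso entry-times that $W_i=V_i\circ\pi_i$ as a deterministic identity once we fix the realized data. Second, invoke Claim \ref{thm3_claim1} to note $(\hat X\circ\Pi)^T(\hat X\circ\Pi)=G=\hat X^T\hat X$, so the two problems share the same Gram matrix and the score map is the \emph{same} function $F$ of $\hat X^T\yy$ in both cases; thus $(V_1\circ\pi_1,\dots,V_p\circ\pi_p)=F\big((\hat X\circ\Pi)^T\yy\big)$ while $(V_1,\dots,V_p)=F\big(\hat X^T\yy\big)$. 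Third, apply Claim \ref{thm3_claim2}, $(\hat X\circ\Pi)^T\yy\overset{d}{=}\hat X^T\yy$, and push this equality in distribution through the common deterministic map $F$ to conclude.

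The main obstacle is the careful justification of the equivariance step, i.e.\ that the Lasso entry-time statistic genuinely depends on the data only through $(G,\hat X^T\yy)$ and transforms correctly under column permutations; in particular one must handle the bookkeeping that relabeling columns permutes both the coordinates of $\hat X^T\yy$ and the rows/columns of $G$ in a compatible way, and address ties (events where two variables enter simultaneously) which occur with probability zero under the Gaussian noise but should be acknowledged. This is exactly the content that generalizes Lemma 1 of \BC, so I expect it to go through by the same reasoning, with the only real novelty being that the relevant symmetry group is the product $\prod_{i_0\in N}\Pi_{d+1}$ of per-feature permutations rather than the single coordinate-swap used in the $d=1$ case.
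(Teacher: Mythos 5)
Your proposal is correct and follows essentially the same route as the paper's proof: you factor the entry-time scores through a fixed function of $\bigl(\hat{X}^{T}\hat{X},\hat{X}^{T}\yy\bigr)$, use Claim \ref{thm3_claim1} to see the Gram matrix (and hence the function) is unchanged, note the permutation-equivariance of the Lasso entry times (the paper's identity $\hat{X}\hat{\betaa}=(\hat{X}\circ\Pi)(\hat{\betaa}\circ\Pi)$, your ``deterministic identity'' $W_i=V_i\circ\pi_i$), and push Claim \ref{thm3_claim2} through that common map. Your version merely spells out the bookkeeping (and the measure-zero tie events) more explicitly than the paper does.
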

\begin{proof}
As explained by \BC, $\{V_{i}\}$ depend only on $\hat{X}^{T}\hat{X}$
and $\hat{X}^{T}\yy$. By Claim \ref{thm3_claim1}, $(\hat{X}\circ\Pi)^{T}(\hat{X}\circ\Pi)=\hat{X}^{T}\hat{X}=G$
and by Claim \ref{thm3_claim2} $\hat{X}^{T}\yy\overset{d}{=}(\hat{X}\circ\Pi)^{T}\yy$.

The result now follows by observing that applying the Lasso to $(\hat{X}\circ\Pi,\yy)$
would produce the vectors $V_{i}\circ\pi_{i}:\hat{X}\hat{\betaa}=(\hat{X}\circ\Pi)(\hat{\betaa}\circ\Pi)$. 
\end{proof}
The last claim completes the proof showing that the joint distributions
of $(V_{1},\dots V_{p})$ and $(V_{1}\circ\pi_{1},\dots,V_{p}\circ\pi_{p})$
are the same.
\end{proof}
As defined, our construction is only applicable when $n\ge\left(d+1\right)p$,
which greatly limits its utility. We can relax this restriction
by using an analog of \BC' extension of their method to the case
where $p\le n<2p$. Namely, as long as $n-p$ is reasonably large
we can estimate $\sig^{2}$, the variance of the noise in \eqref{eq:lin_model},
extend the design matrix $X$ with $(d+1)p-n$ rows of $0$s and extend
the response $\yy$ with $(d+1)p-n$ independent draws from the $N(0,\what{\sig^{2}})$
distribution~\cite{Barber2015}. One problem with this extension
is that the guarantee of the last theorem no longer applies, although
in practice as long as $n-p$ is not very small this did not seem
to be a major issue.

However, the more significant problem we face, regardless of whether
or not an extension is required, is that according to \eqref{eq:s0_equi_general}
$s_{0}$ is decreasing with $d$. Recalling \BC' argument that a
smaller $s_{0}$ leads to a loss of power (because of the increased
correlation between a real variable and its knockoff copies), we see
that as we increase the number of knockoff copies, we reduce the power
associated with each individual copy. In practice, the overall effect
is therefore mixed where the introduction of additional knockoffs
can often reduce power rather than increase it, as we will see later
on. In order to address this problem we next introduce our so-called
batching heuristic.

\section{Batched partial sets of knockoffs or multiple knockoffs (II)\label{sec:Batched-KOs}}

Our batching heuristic consists of partitioning the original set of
features, or their indices $P\coloneqq\left\{ 1,\dots,p\right\} $,
into a disjoint union $P=\cup_{j}I_{j}$ and separately creating the
knockoffs for each subset of features $I_{j}$. This allows us to
reduce the size of the matrix $G$ so that $s_{0}$ can be made larger.
Specifically, we simultaneously create $d$ knockoff variables for
each of the original features $X_{i}$ for $i\in I$, where $I\subset P$.
The $d\cdot|I|$ knockoffs created in this batch will need to have
exactly the same correlations among themselves, as well as with \emph{all}
the original variables, as they have when we create knockoffs for
all the variables at the same time.

In order to do this, we essentially repeat the above procedure for
simultaneously creating the knockoffs for all features but omitting
all uninvolved knockoff features, that is, columns $\tilde{X}_{i+jp}$
with $i\in P\setminus I$. Specifically, we define the augmented design
covariance matrix $G^{I}$ as a $\left(p+d\left|I\right|\right)\times\left(p+d\left|I\right|\right)$
dimensional block matrix made again of $\left(d+1\right)\times\left(d+1\right)$
blocks $B_{ij}$ (of varying sizes), which are defined here for $i,j\in\left\{ 1,\dots,d+1\right\} $
as: 
\begin{align*}
B_{1j} & =\begin{cases}
\Sig & j=1\\
\Sig_{0}^{PI} & j>1
\end{cases} & B_{i1} & =\begin{cases}
\Sig & i=1\\
\Sig_{0}^{IP} & i>1
\end{cases} & B_{ij} & =\begin{cases}
\Sig^{II} & i=j>1\\
\Sig_{0}^{II} & i\ne j\,,\,i\wedge j>1
\end{cases},
\end{align*}
where $A^{IJ}$ is the restriction of the matrix A to the rows specified
by the set $I$ and the columns specified by the set $J$. For example,
if $I=\left\{ 1,2\right\} $ and $d=3$ then 
\[
G^{I}\coloneqq\begin{bmatrix}\Sigma & \Sigma_{0}^{PI} & \Sigma_{0}^{PI} & \Sigma_{0}^{PI}\\
\Sigma_{0}^{IP} & \Sigma^{II} & \Sigma_{0}^{II} & \Sigma_{0}^{II}\\
\Sigma_{0}^{IP} & \Sigma_{0}^{II} & \Sigma^{II} & \Sigma_{0}^{II}\\
\Sigma_{0}^{IP} & \Sigma_{0}^{II} & \Sigma_{0}^{II} & \Sigma^{II}
\end{bmatrix},
\]
where $\Sigma^{II}=\begin{bmatrix}\sig_{11} & \sig_{12}\\
\sig_{21} & \sig_{22}
\end{bmatrix}$, $\Sigma_{0}^{II}=\begin{bmatrix}\sig_{11}-s_{0} & \sig_{12}\\
\sig_{21} & \sig_{22}-s_{0}
\end{bmatrix}$, $\Sigma_{0}^{IP}=\begin{bmatrix}\sig_{11}-s_{0} & \sig_{12} & \dots & \sig_{1p}\\
\sig_{21} & \sig_{22}-s_{0} & \dots & \sig_{2p}
\end{bmatrix}$, and $\Sig_{0}^{PI}=\left(\Sig_{0}^{IP}\right)^{T}$.

We want to construct the $n\times d\left|I\right|$ dimensional knockoff
matrix $\tilde{X}^{I}$, so that the correlation (Gram) matrix of
the partially-augmented design matrix $\left[X\tilde{X}^{I}\right]$
is $G^{I}$. Again, this can be done if we can find $s_{0}$ such
that, with $\Sig_{0}=\Sig-s_{0}\cdot I$, $G^{I}\succeq0$. Of course,
with our new partial knockoff scheme the $s_{0}$ we chose for the
full matrix $G$ in \eqref{eq:s0_equi_general} is no longer optimal.
Indeed, this was our motivation for looking at the partial knockoff
scheme to begin with. Instead, we use a numerical procedure to find
the value $s_{0}$ for which the minimal eigenvalue of $G^{I}=G^{I}\left(s_{0}\right)$
is zero (or $s_{0}=1$ and $G^{I}\succeq0$).

We proceed with constructing the matrix of knockoff variables $\tilde{X}^{I}$
using mostly the same procedure described above to generate the complete
set of knockoff features with a couple of notable differences relating
to the definition of the orthogonal transformation $\til U$ that
maps $X_{0}$ to $X_{1}$~\eqref{eq:X1_from_X0}. When generating
the full set of multiple knockoffs $\til X$ the map $\til U$ is
defined by \eqref{eq:orthogonal_map}, where $Q$ is obtained by applying
the QR factorization to an arbitrary extension $A$ of $X$~\eqref{eq:QR_on_X}.
We found that our batched knockoffs benefit from the following more
elaborate construction of $Q$ that aims at reducing some unwarranted
correlations between the knockoff variables from different batches.

First, possibly using the same extension procedure mentioned above,
we verify that $n\ge\left(d+1\right)p$ (again assuming that initially
$n>p$ and $n-p$ is not too small). We then apply the same thin QR
factorization as in~\eqref{eq:QR_on_X} to create the $n\times\left(d+1\right)p$
matrix $Q_{b}$ with orthonormal columns. Then, when constructing
the batch of knockoffs $\tilde{X}^{I}$ we define the batch-specific
map $\til U$ using a batch specific $Q\coloneqq Q^{I}$, where $Q^{I}$
consists of the first $p$ columns of $Q_{b}$ as well as its $d\left|I\right|$
columns corresponding to the knockoffs associated with batch $I$.
The result is that each batch of knockoffs can be expressed as a linear
combination of the original features and vectors in a batch specific
subspace, where these subspaces are orthogonal to one another, as
well as to the original features subspace. The rest of the procedure
is unchanged.

We stress that batching is a heuristic: in general the resulting knockoffs
do not satisfy the conditional null exchangeability property. In particular
we found that if the number of batches is too large, for example when
each feature defines its own batch, the conditional null exchangeability
could be violated in such a way that our competition based FDR control
can fail (see Section \ref{subsec:Too-many-batches} below for such
an example with $d=1$).

To address this problem we first require that the sets $I_{j}$ are
not too small (in practice we used an average of at least 4 or 5 features
per batch). In addition, to make use of the fact that knockoffs that
share the same batch are guaranteed to retain the same correlation
structure as the corresponding original features we used the following
clustering approach to create the partition that defines the batches:
defining the leaves as the columns of the original matrix $X$ we
first construct an agglomerative hierarchical cluster tree using the
averaged (Euclidean) distance between features as the distance metric
(UPGMA). Then, traversing the tree from its root we determine the
clusters, or our partition, based on the pre-specified number of batches.
Thus, the more correlated the original features are, the more likely
it is that the same correlation would be retained between its knockoffs.
In Section~\ref{subsec:Clustering-can-help} below we given an example
demonstrating the potential advantage clustering defined partition
can offer.

Regardless of how our partition $P=\cup_{j}I_{j}$ is defined, our
revised multi-knockoff construction procedure then applies the above
partial knockoff procedure, using each set of indices, $I_{j}$, at
a time, to create a $\left(p+d\left|I_{j}\right|\right)\times\left(p+d\left|I_{j}\right|\right)$
augmented design matrix $\left[X\tilde{X}^{I_{j}}\right]$. It then
applies the Lasso procedure to this design matrix (and $\yy$) to
obtain the set of scores $\left\{ \til Z_{i}^{0}\coloneqq Z_{i},\til Z_{i}^{1},\dots,\til Z_{i}^{d}\right\} $
for each feature $i\in I_{j}$ \emph{ignoring} the other values for
$i\ne I_{j}$.

\section{Controlling the FDR via multiple knockoffs}

\subsection{General methods to control the FDR using multiple competing scores\label{subsec:General-methods-to}}

Emery et al.~recently introduced several selection procedures that attempt to
control the FDR in a multiple competition setup like the one we have
here. Our methods are all based on a meta-procedure that assigns to
each hypothesis/feature a label $L_{i}\in\left\{ -1,0,1\right\} $
based on the competition between the original variable score $Z_{i}$
and its associated decoy/knockoff scores $\til Z_{i}^{1},\dots,\til Z_{i}^{d}$.
The label is determined by the rank $r_{i}$ of $Z_{i}$ in the combined
list of $d_{1}=d+1$ scores $\left\{ Z_{i},\til Z_{i}^{1},\dots,\til Z_{i}^{d}\right\} $
as well as by the tuning parameters ($c,\lam$). Specifically, $c=i_{c}/d_{1}$
determines the original-win threshold and $\lam=i_{\lam}/d_{1}$ determines
the decoy win threshold:

\[
L_{i}=\begin{cases}
1 & r_{i}\ge d_{1}-i_{c}+1\qquad\text{(original win)}\\
0 & r_{i}\in\left(d_{1}-i_{\lam},d_{1}-i_{c}+1\right)\qquad\text{(ignored hypothesis)}\\
-1 & r_{i}\le d_{1}-i_{\lam}\qquad\text{(decoy/knockoff win)}
\end{cases}.
\]

The selection procedures vary in how they define the tuning parameters ($c,\lam$)
but given the values of those parameters they all rely on the mirandom
map which determines the selected score $W_{i}\in\left\{ Z_{i},\til Z_{i}^{1},\dots,\til Z_{i}^{d}\right\} $
assigned to a feature corresponding to a knockoff win, or $L_{i}=-1$ (in the case of an original win,
$L_{i}=1$, $W_{i}\coloneqq Z_{i}$, and in the case of neither an original nor a knockoff win, $L_{i}=0$, $W_{i}$
is randomly assigned). With the feature scores and labels defined,
our procedures continue similarly to knockoff+: given the FDR threshold
$\alp$ they sort the selected scores $W_{i}$ and report $D(\alp,c,\lam)\coloneqq\left\{ i\,:\,i\le i_{\alp c\lam},L_{i}=1\right\} $,
the list of original feature wins among those top scores, where\footnote{See Section \eqref{sec:An-assessment-of} for an explanation of the
rationale behind \eqref{eq:reject_criterion-general-c-lam}.}

\begin{equation}
i_{\alp c\lam}\coloneqq\max\left\{ i\,:\,\frac{1+\#\left\{ j\le i\,:\,L_{j}=-1\right\} }{\#\left\{ j\le i\,:\,L_{j}=1\right\} \vee1}\cdot\frac{c}{1-\lam}\le\alp\right\} .\label{eq:reject_criterion-general-c-lam}
\end{equation}

Thus, applying any one of our procedures to the combined set of original
and knockoff scores yields a multiple-knockoff procedure that generalizes
\BC' original knockoff approach to controlling the FDR in the variable
selection problem. When $n\ge\left(d+1\right)p$ and our knockoffs
are constructed without batching as in Section \ref{sec:multikosI},
Theorem \ref{Thm:main} here and Theorem 2 of
\cite{emery:multiple2} guarantee that applying our procedure
with its tuning parameters $\left(c,\lam\right)$ predetermined controls
the FDR in the \emph{finite setting} just as \BC' original knockoffs
do. When $n<\left(d+1\right)p$, when we use batching to construct
our knockoffs, or when applying one of our data-driven methods, where
$c$ and $\lam$ are determined from the data, the resulting multiple
knockoff procedure is no longer guaranteed to control the FDR although
in practice the simulations below indicate the variants we consider
here do.

The specific procedures we consider here include the mirror ($c=\lam=1/2$)
and the max method ($c=\lam=1/\left(d+1\right)$) both of which rely
on predetermined values of $\left(c,\lam\right)$. As Emery et al.~pointed
out there is much to be gained from using data-driven approaches to
set the values of the tuning parameters we naturally considered LBM
as well. LBM is the overall recommended procedure for the general
multiple competition setup and it uses a resampling procedure to try
and optimize the values of $(c,\lam)$~\cite{emery:multiple2}.
That resampling strategy is constrained by the assumption that it
is forbiddingly expensive to generate additional decoys and hence
it makes do with the available decoys. In the context of the knockoffs
it is in fact impossible to create additional independent knockoffs
so in that sense LBM is suitable here. However, in this context, using
the underlying linear regression model, we can generate what we call
model-aware bootstrap (or simply model-bootstrap) samples. We next
describe this new resampling technique and how we use it in a new
selection procedure that we call ``multi-knockoff'' that seems much
more suitable for optimally setting $(c,\lam)$. Our last selection
method described below, ``multi-knockoff-select'', also relies on
our new resampling technique but it goes one step further than the
other procedures we consider by trying to determine the optimal number
of knockoff copies $d$.

\subsection{Model-aware resampling and parameter optimization (multi-knockoff)\label{subsec:Model-aware-resampling}}

Our model-aware resampling method adopts the same ``labeled resampling''
procedure of conjectured true/false null labels that was introduced
in our generic bootstrap approach that LBM relies on (Supplementary
Section 6.5 of \cite{emery:multiple2}). Here a conjectured false
null label corresponds to a variable that is conjectured to be included
in the model, and a conjectured true null label to a variable that
is not included in the model. The original algorithm then continued
to resample the indices in the usual bootstrap manner and then randomly
permuted the vector of original and decoy scores for each resampled
index corresponding to a conjectured true null label. Instead, our
new model-resampling scheme first regresses the response variable
on the conjectured included variables and then it uses the resulting
linear model to generate a new sample of the response variable. The
details of our model-aware resampling are provided next.
\begin{enumerate}
\item Determine $\lambda=\lambda_{0}$ from the empirical p-values / ranks
$r_{i}$ of the original variable scores $Z_{i}$ as described in
\suppsec 6.3 of \cite{emery:multiple2}. Note that we randomly break
all ties by first transforming all observed and knockoff scores into
ranks.
\item Run the first two steps of our meta-procedure (Section 3.2 of \cite{emery:multiple2})
with $\lambda=c=\lambda_{0}$ and the mirandom map $\vrp_{md}$ to
assign a score $W_{i}$ and a knockoff/original win label $L_{i}$
to each variable $i=1,\dots,p$. Those values of $W_{i}$ and $L_{i}$
are kept fixed when generating all subsequent bootstrap samples.
\item To generate each of the $m_{b}$ model-aware bootstrap samples, for
$l=1,\dots m_{b}$ do:
\begin{enumerate}
\item Run steps 3-7 of the algorithm described in \suppsec 6.5 of \cite{emery:multiple2}
to sample an indicator vector $\fb\in\left\{ 0,1\right\} ^{p}$ where
$\fb_{i}=1$ if the $i$th variable is conjectured to be part of the
model (false null) and $\fb_{i}=0$ if the $i$th variable is conjectured
to be missing from the model (true null).
\item With $J=J_{l}=\left\{ i\,:\,\fb_{i}=1\right\} $ let $X_{J}=X[:,J]$
be the submatrix of $X$ consisting of the columns specified by the
set $J$ and use standard least square regression to find the coefficient
vector $\betaa_{J}$ that minimizes the residual sum of squares $\left\Vert \yy-X_{J}\betaa_{J}\right\Vert _{2}^{2}$
\item Randomly draw a noise vector $\epss_{l}\in\R^{n}$ from the $N(\zero,I)$
distribution, where $\zero$ is the $n$-dimensional zero vector and
$I$ is the $n\times n$ identity matrix, and define $\yy_{l}=X\betaa_{J}+\hat{\sigma}\epss_{l}$,
where $\hat{\sigma}$ is the standard deviation estimated as in Section
2.1.2 of \cite{barber:controlling} from the residual sum of squares
in the original data. Note that $X$ here is the 0-extended matrix
if $n<\left(d+1\right)p$.
\item We next apply our multiple-knockoff generating procedure to $\yy_{l}$
and $X$ to generate the model-bootstrap sample of $\left\{ \left(Z_{l,i}=\til Z_{l,i}^{0},\til Z_{l,i}^{1},\dots,\til Z_{l,i}^{d}\right)\,:\,i=1,\dots,p\right\} $.
Note that the set of batched knockoff matrices $\tilde{X}^{I_{j}}$
needs to be created only once. Scores are transformed to ranks with
ties randomly broken.
\end{enumerate}
\item Return the set of $m_{b}$ model-bootstrap samples where each sample
is accompanied by the corresponding set $J_{l}$ of the true features.
\end{enumerate}
The model-bootstrap samples are used differently from the cruder bootstrap
samples that LBM relies on. Indeed, the model-aware resamples are
used to directly optimize the number of discoveries (a strategy that
generally fails to control the FDR when applied to the cruder samples).
Specifically, we apply the above general selection procedure (Section
\ref{subsec:General-methods-to}) for each pair of possible $\left(c,\lam\right)$
values with $1/\left(d+1\right)\le c\le\lam\le1/2$ and select the
pair that maximizes the average number of conjectured true discoveries.
After selecting these optimal values for $\left(c,\lam\right)$ our
so-called \textbf{multi-knockoff} procedure again proceeds along the
general outline of our selection methods which applies our meta-procedure
with the mirandom map defining the selected scores $W_{i}$. We provide
below empirical evidence that multi-knockoff is overall significantly
better than what we achieve relying on our previously published methods.

Finally, we can take this one step further and try to optimize the
power by choosing not only the optimal $\left(c,\lam\right)$ for
each fixed number of knockoffs $d$, but optimize over several considered
values of $d$. We do this using the same model-bootstrap samples
as described above. Specifically, we first determine for each considered
number of knockoffs $d$ its optimal setting of $\left(c,\lam\right)$,
that is the values of these parameters that maximize the average number
of conjectured true discoveries, and then we choose the number of knockoffs
that maximizes this average. We refer to this procedure as \textbf{multi-knockoff-select}
and below we offer some empirical evidence for its effectiveness.

\subsection{How many knockoffs to construct?\label{subsec:How-many-knockoffs}}

Note that when applying any of our mirandom-map-based procedures using,
say $d=3$, knockoffs we can in principle arbitrarily select that
number of knockoffs from a larger constructed set of, say $d=7$,
knockoffs per feature. However, recalling that increasing $d$ increases
the similarity between an original feature and each of its individual
knockoffs it is clear that to optimize the power of the competition-based
FDR controlling procedure one should construct as many knockoffs
as one will use. In particular, when considering multiple numbers of
knockoff copies $d$, say $d\in\left\{ 1,3,7\right\} $, we are actually
constructing three different sets of knockoffs, one for each of these
values of $d$ rather than creating $d=7$ knockoffs and selecting
one / three of those.

\section{Empirically assessing the multiple-knockoff procedures}

We performed extensive simulations to examine how our methods behave across a range of different experimental designs.
In particular, we investigated two things:
\begin{itemize}
	\item Whether the knockoffs created with batching still maintain the desired properties required for FDR control (Section \ref{sec:An-assessment-of}).
	\item The performance of our proposed selection procedures in terms of empirical FDR and power (Section \ref{subsec:Assesssing-the-methods}).
	Specifically, we give empirical evidence that our methods essentially control the FDR in the finite sample case, and we demonstrate
	that the proposed model-knockoff-select procedure is overall the most powerful among all the considered methods including knockoff+.
\end{itemize}

\subsection{Simulation setup: generating the datasets and defining the original
and knockoff scores\label{subsec:Simulation-setup}}

We largely adopted the simulation setup of \cite{barber:controlling},
where we repeatedly begin with drawing an $n\times p$ design matrix
$X$. The rows of $X$ are independently sampled from a multivariate
normal distribution with zero mean and one of the following two types
of covariance matrices. The first is the same T\"oeplitz covariance
matrix $\Theta_{\rho}$ in the original setup of \cite{barber:controlling} where for $\rho=0$
the covariance matrix is $\Theta_{0}\coloneqq I_{p}$, the $p$-dimensional
identity matrix, corresponding to no feature correlation, and for
$\rho>0$, $\left(\Theta_{\rho}\right)_{ij}=\rho^{|i-j|}$ which introduces
some feature correlation. We also introduced a second class of covariance
matrices $\Omg_{\rho}$ that are constant $\rho>0$ on the off-diagonal
terms and with a diagonal of 1s.

We next draw $K<p$ indices $i_{1},\dots,i_{K}\in\left\{ 1,\dots,p\right\} $
for which we set $\betaa_{i_{j}}\coloneqq\pm A$, where $A$ is a
fixed amplitude, and the signs are drawn independently and uniformly.
The rest of the values of the coefficient vector $\betaa$ were set
to 0 corresponding to a model with the $K$ features $i_{1},\dots,i_{K}$
(so the corresponding hypotheses $H_{i_1},\dots,H_{i_K}$ are false nulls). Finally, we draw
the noise vector $\epss$ as iid $N\left(0,1\right)$ variates and
we define the response vector $\yy$ through (\ref{eq:lin_model}).

For each such randomly generated pair of a design matrix $X$ and
a response vector $\yy$ we use $b$ batches to construct the set
of the original plus $d$ knockoff scores per feature $\big\{ \big(\til Z_{i}^{0}\coloneqq Z_{i},\til Z_{i}^{1},\dots,\til Z_{i}^{d}\big)\,:\,i=1,\dots,p\big\} $
as described in Section \eqref{sec:Batched-KOs}. Note that even when
we construct a single knockoff set ($d=1$) using a single batch ($b=1$)
it will in practice differ from the one generated by knockoff+ although
the two sets are essentially equivalent.

In Supplementary Section \ref{subsec:Simulation-setup-details} we
provide more details about the specific combination of parameter values
that we used in our simulations for generating the data (design matrix
and response variables) as well as for constructing the knockoffs
(number of knockoffs and batches).

\subsection{An assessment of the batched knockoffs\label{sec:An-assessment-of}}

While we will explicitly examine the FDR control of our competition-based
procedures below, we first examine our knockoffs from a different
perspective. As noted above, our procedures that use a pre-determined
value of $\left(c,\lam\right)$ will control the FDR provided our
knockoff scores satisfy the conditional null exchangeability. However,
this exchangeability is unlikely to apply in general for our batched
knockoffs and moreover it is not a necessary condition.

Emery et al.~argue that if conditional exchangeability holds then
sorting the mirandom-selected scores $W_{i}$ in decreasing order
and applying their general selection procedure with a predetermined $c=\lam$, for any true null feature $j$,
$P\left(L_{j}=1\right)=c$ and $P\left(L_{j}=-1\right)=1-c$
independently of all other features (Section 3.5 and \suppsec 6.9
of \cite{emery:multiple2}). Going back to the critical ratio \eqref{eq:reject_criterion-general-c-lam}
we see that our procedure's control of the FDR hinges on the expected
proportion of original $\left(L_{j}=1\right)$ vs.~knockoff wins
$\left(L_{j}=-1\right)$. Indeed, if there are $i_{0}$ true null
features among the top $i$ scores then the number of original wins
among those is a binomial $\left(i_{0},c\right)$ random variable
(RV), and the number of knockoff wins is the complementary binomial
$\left(i_{0},1-c\right)$. Therefore, when multiplied by the $c/\left(1-c\right)$
factor, the expected value of the numerator of \eqref{eq:reject_criterion-general-c-lam}
bounds $i_{0}\cdot c$ which is the expected number of true null features
among the original wins in the top $i$ scores.

In this section we therefore evaluate the quality of our knockoffs
from this perspective: considering only the true null features, are
the numbers of original score wins among the top $i_{0}$ null features
consistent with a sequence of binomial RVs defined as the cumulative
sum of iid Bernoulli$\left(c\right)$ RVs? A specific concern is when
that observed sequence of true null original wins significantly exceeds
the expected value of the latter, theoretical sequence, because it
would indicate a potential liberal bias in our FDR estimation.

Note that in the case of a single batched knockoff per feature ($d=1$)
we have a related point of reference which is to compare the same
percentage of original wins among the top true null features when
using our batched knockoffs with the corresponding percentage observed
when using \BC' knockoffs. The latter, of course, are guaranteed
to satisfy the conditional null exchangeability so any observed deviations
from the expected 50\% of original wins is due to random fluctuations.

\subsubsection{Too many batches can be problematic\label{subsec:Too-many-batches}}

We used the above mentioned reference point to show the potential
problem with having too many batches. Specifically, we generated 60K
datasets as described in Section~\ref{subsec:Simulation-setup},
each with $p=50$, $n=100$, a covariance matrix $\Theta_{\rho}=I_{p}$
($\rho=0$), $K=1$ feature included in the model and an amplitude
that was deliberately set very high at $A=10.0$. For each of the
60K datasets we used \BC' construction, as well as our batched construction
--- using the maximal possible number of 100 batches, so each batch
contained a single feature --- to generate the sets of original feature
scores $Z_{i}$ with their corresponding knockoff scores $\til Z_{i}$.

With $c=\lam=1/2$ and only one knockoff copy a feature counted as
an original win if $Z_{i}>\til Z_{i}$ (ties were randomly broken)
and the winning scores $W_{i}=\max\left\{ \til Z_{i},\til Z_{i}\right\} $
were sorted in decreasing order, again randomly breaking ties. We
then noted the percentage of target wins among the top $i_{0}$ scores
corresponding to the true null features as we varied $i_{0}$ from
$1$ to $49$ (the score of the single false null feature was not
considered here).

Recall that we evaluate our batched knockoffs against the assumption
that the sequence of proportions we observe is consistent with that
generated by a cumulative sum of iid Bernoulli$\left(c=1/2\right)$
RVs. Under that assumption we can get some idea of whether our batched
knockoffs are consistent with this model by plotting the 97.5\% and
2.5\% quantiles, as well as the mean, of the corresponding binomial
RVs (in practice we used the normal approximation to draw the quantiles).
Keep in mind that these plotted quantiles are only provided for reference:
they are asymptotically only valid pointwise, so even for data that
is consistent with the model the probability that the curve will wander
out of the band outlined by the quantiles is, of course, higher than
5\%.

Judging by panel A of \suppfig\ref{fig:tgt-win-pct} it seems that
in this example where each batch consists of a single feature the
resulting knockoffs exhibit a clear liberal bias: the percentage of
original wins among the top true null features significantly exceeds
our model-determined expected value of 1/2, as well as the variability
we observed in \BC' knockoffs. This bias further manifested itself
in compromised FDR control. For example, applying our batched-knockoff+
(Section \ref{subsec:Assesssing-the-methods}) we find that the empirical
FDR at $\alp=0.5$ is $0.5144$. This 3\% overshoot of the empirical
FDR might not seems that much, however our empirical FDR was computed
from 60K independent samples so statistically it is a very significant
deviation (8.8 standard deviations).%

\subsubsection{Clustering the features can help\label{subsec:Clustering-can-help}}

In practice we found that with an average of five or more features
per batch we avoid the significant bias observed in the example above.
As mentioned, we partition the features into their batches by clustering
them based on the similarities of the corresponding columns of the
design matrix. This clustering based partition typically delivers
only a modest improvement compared with an arbitrary uniform partition
but there are cases where the difference can be significant. To see
that we again consider the effect of batching on a single knockoff
only now our emphasis is on the difference in percentage of target
wins between these two types of partitions: uniform vs.~clustering.

Specifically, we generated two sets of 50K datasets each with $p=200$,
$n=800$, a covariance matrix $\Omg_{\rho}$ with $\rho=0.7$, $K=10$
features included in the model and an amplitude $A=2.8$. Each dataset's
features were partitioned into 40 batches but for the first 50K datasets
we randomly and uniformly assigned 5 features to each batch while
clustering was applied to define the batches of the subsequent 50K
datasets.

Comparing panels B and C of \suppfig\ref{fig:tgt-win-pct} we see
that while clustering based batching creates knockoffs for which the
target wins percentage is in line with our model (panel B, black curve),
the uniformly partitioned batches exhibit an undesirable significant
liberal bias at some point (panel C, black). In both cases we added
for reference the corresponding percentages we observe using \BC'
provably-reliable knockoffs.

\subsubsection{Model-wise the batched multiple knockoffs behave similarly to their
non-batched counterparts\label{subsec:Model-wise-the-batched}}

In light of the above examples, and unless otherwise stated, our batched
knockoffs were generated using clustering with an average of at least
five features per batch. In this section we look specifically at the
effect of batching on the agreement between the observed percentage
of target wins among the true nulls and our model.

We begin with an example that did not require extending $X$: we generated
two sets of 10K datasets, both with $p=200$, $n=800$, using an amplitude
$A=2.8$, $K=10$ features included in the model and a covariance
matrix $\Theta_{\rho}=I_{p}$ ($\rho=0$). We then compared the percentage
of target wins using $d=3$ non-batched knockoffs
with the same percentage when using $d=3$ knockoffs constructed using
40 batches. Panel D of \suppfig\ref{fig:tgt-win-pct} shows that
in this case our batched knockoffs behave similarly to the un-batched
ones. Notably, the latter are guaranteed to follow the model and indeed,
in both cases the percentage of target wins does not deviate significantly
from the theoretical $c=1/2$ (using $c=1/4$ yields qualitatively
similar results).

The next example required extending $X$ because we constructed $d=3$
knockoffs as before but now $p=200$ and $n=600$ so $n<(d+1)p$.
Again, we generated two sets of 10K datasets, one where the knockoffs
were created using 40 batches per dataset and the other using a single
batch per dataset. In this example all features were true null ($K=0$)
and the covariance matrix was $\Theta_{\rho}=I_{p}$. Panel E of \suppfig\ref{fig:tgt-win-pct}
shows that again our batched knockoffs behave similarly to the un-batched
ones, and in both cases the percentage of target wins does not deviate
significantly from the theoretical $c=1/4$ (using $c=1/2$ yields
qualitatively similar results). Note that because $\yy$ was extended
using an estimate of $\sig$ even the un-batched knockoffs are not
guaranteed to follow the model in this case but in practice it seems
they still do.

In our final example we look at a more significant extension of $X$
where we compared our knockoffs constructed in three different ways.
For each of the three we generated 10K datasets using our model with
$p=200$ and $n=600$, all features are true null ($K=0$) and a covariance
matrix $\Theta_{\rho}=I_{p}$. Panel F of \suppfig\ref{fig:tgt-win-pct}
shows that using 40 batches our $d=11$ knockoffs (black curve) demonstrate
a clear liberal bias with $c=2/12$. Interestingly, when using a single
batch to create the same number of $d=11$ knockoffs (red curve) we
observe an even larger liberal bias than the one exhibited by the
batched knockoffs (same value of $c=2/12$). This suggests that the
issue lies with the fairly extreme extension we used rather than
with the batching.\footnote{Note that we needed to extend the response $\yy$ from $n=600$ to $n=2400$
and that it is easy to find examples where any of the knockoff based
procedures considered here, \emph{including} \BC' knockoff+, fails
to control the FDR where one extends $X$ and $\yy$ when $n-p$ is
fairly small.} Indeed, constructing our third set of knockoffs using the known
$\sig=1$, rather than its estimate, to extend $\yy$ we note that
the liberal bias has all but disappeared (green curve). Note that
the three curves of Panel F were generated using the same $c=2/12$
but the results look qualitatively similar using other values of $c=i/12$
with $i\le6$. Regardless of the source of the above liberal bias
we will show below that in practice it is sufficiently mild that it
does not seem to obstruct our ability to control the FDR in the examples
we looked at.

\subsection{Assessing the knockoff selection procedures\label{subsec:Assesssing-the-methods}}

We next investigate and compare the performance of our selection procedures by applying them to randomly drawn datasets. Specifically we considered:
\begin{itemize}
\item \BC' knockoff+, that uses its own single knockoff construction, and
``batched-knockoff+'' which, like knockoff+, uses a single knockoff
but in this case the knockoff is constructed using our batching procedure
(so when the number of batches $b=1$ the two procedures are essentially
equivalent though they can differ substantially when $b>1$).
\item the recently proposed methods of mirror, max and LBM (Section \ref{subsec:General-methods-to}).
\item the new multi-knockoff and multi-knockoff-select that use a pre-specified
number of model-aware bootstrap samples, $m_{b}$ (Section \ref{subsec:Model-aware-resampling}).
\end{itemize}
In \suppsec\ref{subsec:Simulation-setup-details} we provide the
details of the settings that were used by these selection procedures
(e.g., number of bootstrap samples).

We evaluated the performance of each method by noting its empirical
FDR and power as we varied the FDR threshold. Specifically, for each
combination of parameter values we randomly drew (typically) 1K datasets
and for each considered FDR threshold $\alp\in\Phi$ \footnote{For computational efficiency we considered a selected list of FDR
thresholds specified in \suppsec\ref{subsec:The-set-Phi}.} we averaged the FDP in the reported list of discoveries to get the
empirical FDR, and we averaged the percentage of true features in
the same list to get the average power.

We used three types of plots to visually study the selection methods
we consider: power, power-difference and empirical FDR. Each plot
is typically made of multiple curves, where each curve corresponds
to a unique combination of parameter values. Specifically, each curve
summarizes the results obtained by applying, at each considered FDR
threshold, one or two of the methods to (typically) 1K datasets that
were randomly drawn with the same given combination of parameter values,
where:
\begin{itemize}
\item in a power plot ($y$-axis label indicates ``Power'') each curve depicts
a selection method's average power over the randomly drawn datasets.
\item in a power-difference plot ($y$-axis label indicates ``Power Difference'')
a curve represents the difference in average power between the first
and second methods, so negative values indicate the second method
is more powerful at the given FDR threshold.
\item in an empirical FDR plot ($y$-axis label indicates ``FDR'') the curve
yields the ratio between the empirical FDR (average of the FDP) to
the FDR threshold, so a value below 1 indicates a conservative bias
and a value above 1 indicates a liberal bias.
\end{itemize}

\subsubsection{Multiple-knockoff procedures that rigorously control the FDR in
the finite sample case}

Comparing the performance of knockoff+ with that of the multiple-knockoff
procedures when all are guaranteed to control the FDR we see mixed
results. Recall that such finite sample FDR control is guaranteed
when the data is generated according to our model, we construct
our $d\le n/p-1$ knockoffs using a single batch and we apply our
procedure with the mirandom map and pre-determined tuning parameters
(e.g., the mirror and the max methods). Indeed, Theorem \ref{Thm:main}
here and Theorem 2 of \cite{emery:multiple2} guarantee FDR control
in this setting.

Figure \ref{fig:main-panel} (A) shows that in some cases max delivers
significantly more power than knockoff+ while in others it can deliver
substantially less power. \suppfig\ref{fig:supp-rigor-FDR-control}
offers more insight by showing how the power of max and knockoff+
vary with the parameters of the data and the FDR threshold. Overall
max tends to do better for smaller FDR thresholds, sparser models
and a larger $d$ but the results are generally mixed.

\suppfig\ref{fig:supp-rigor-FDR-control-2} shows a summary of
the difference in power between max/mirror/batched-knockoff+ and knockoff+
(left column) as well as the empirical evidence of the corresponding
FDR control (right column). Note that (a) because we use a single
batch in this case, batched-knockoff+ is essentially equivalent to
knockoff+ and the variations in power between them are random, and
(b) mirror is much closer to knockoff+ here than max.

The guaranteed FDR control setup considered here is rather limited.
In practice we would like to apply our methods to the case where $p<n<(d+1)p$.
In addition, as we will see below, we can gain significant power by
learning $c$ and $\lam$ from the data, as well as by using batching
when creating the knockoffs. We empirically explore these extensions
next.

\subsubsection{Batching can significantly increase the power of the knockoff procedures}

Panel B of Figure \ref{fig:main-panel} as well as panels A-D of \suppfig\ref{fig:supp_varying_b_n600}
show examples where, as expected, the power of our procedures generally
increases with the number of batches because we are able to better
distinguish the original features from their knockoffs.

Similarly, panel C of Figure \ref{fig:main-panel} as well as \suppfig\ref{fig:supp_batching_power_1vs40}
show in the context of the various datasets that make the $n=800,p=200,d=3$
set (\suppsec\ref{subsec:n800_p200_d1_3_b1}) that increasing the
number of batches from 1 to 40 typically yields substantial power
gains. This holds for all three procedures we looked at so far: max,
mirror and batched-knockoff+, and for the wide range of parameter
combinations described in \suppsec\ref{subsec:n800_p200_d1_3_b1}.

As expected, batching offers a larger increase in power as $d$ and
$p$ increase. Some evidence of this can be seen in the left column
panels of \suppfig\ref{fig:batching_larger_sets}, which compare
the power of max, mirror and batched-knockoff+ to the power of knockoff+
using $b=1$ and $b=40$ batches: the gains using $b=40$ are significantly
larger when $p$ is increased from 200 to 1000 as well as when $d$
is increased from 3 to 11.

As mentioned in Sections \ref{subsec:Too-many-batches} and \ref{subsec:Model-wise-the-batched}
FDR control can be compromised when introducing batching, and particularly
when a significant extension of $X$ and $\yy$ is involved. Thus,
we should examine whether the significant power gains we see in our
examples when we introduce batching are not attained at the cost of
compromised FDR control. \suppfigs~\ref{fig:supp_varying_b_n600}
(right panels), \ref{fig:supp_varying_b_n3000} (right panels), \ref{fig:supp_batching_power_1vs40_FDR},
and \ref{fig:batching_larger_sets} confirm that the FDR seems to
be properly controlled in spite of the large power gains.

\begin{figure}
	\centering %
	\begin{tabular}{ll}
		A. Max ($d=2,3$) knockoffs vs.~knockoff+ & B. Varying the number of batches (max)\tabularnewline
		\includegraphics[width=3in]{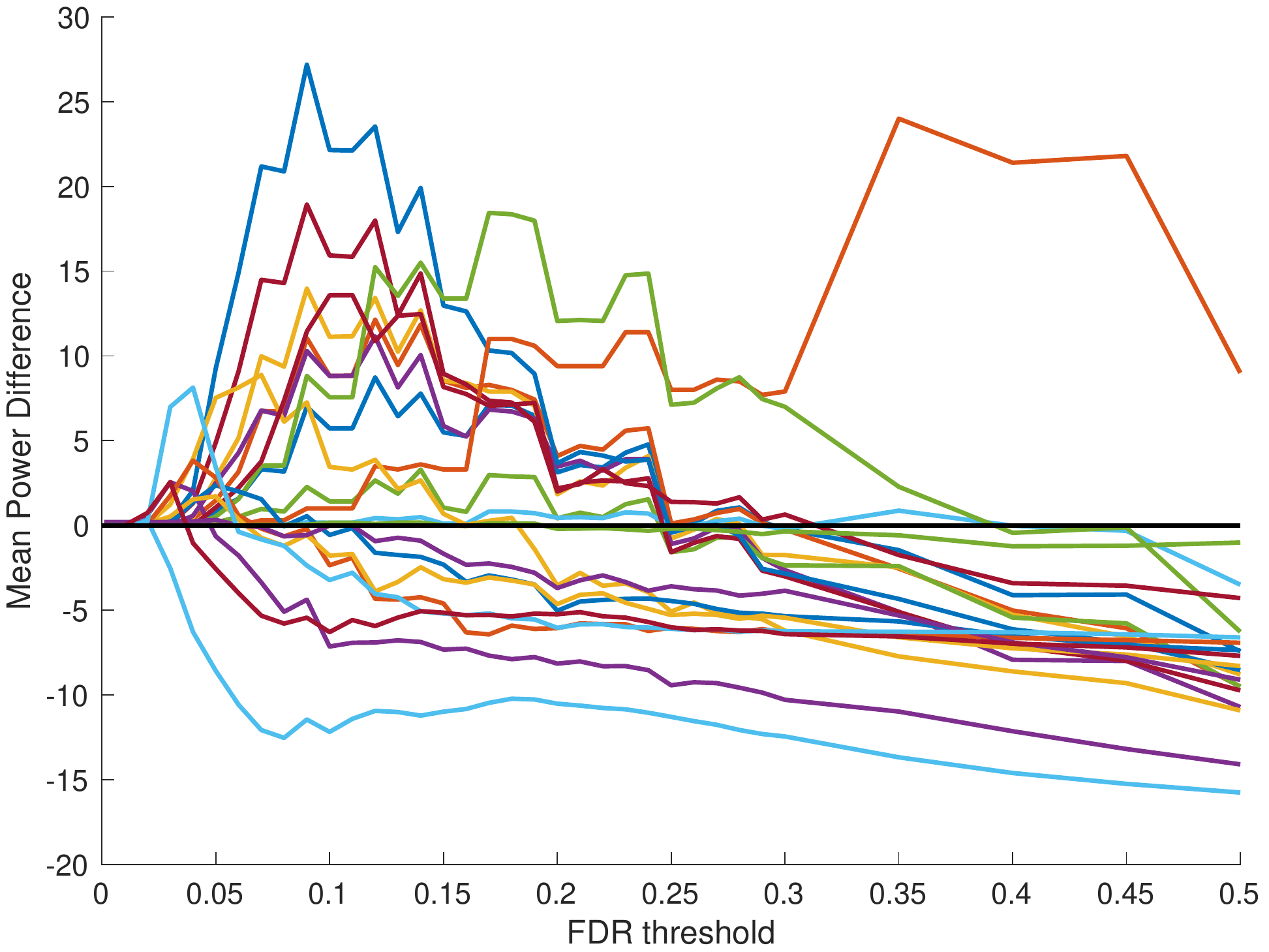} & \includegraphics[width=3in]{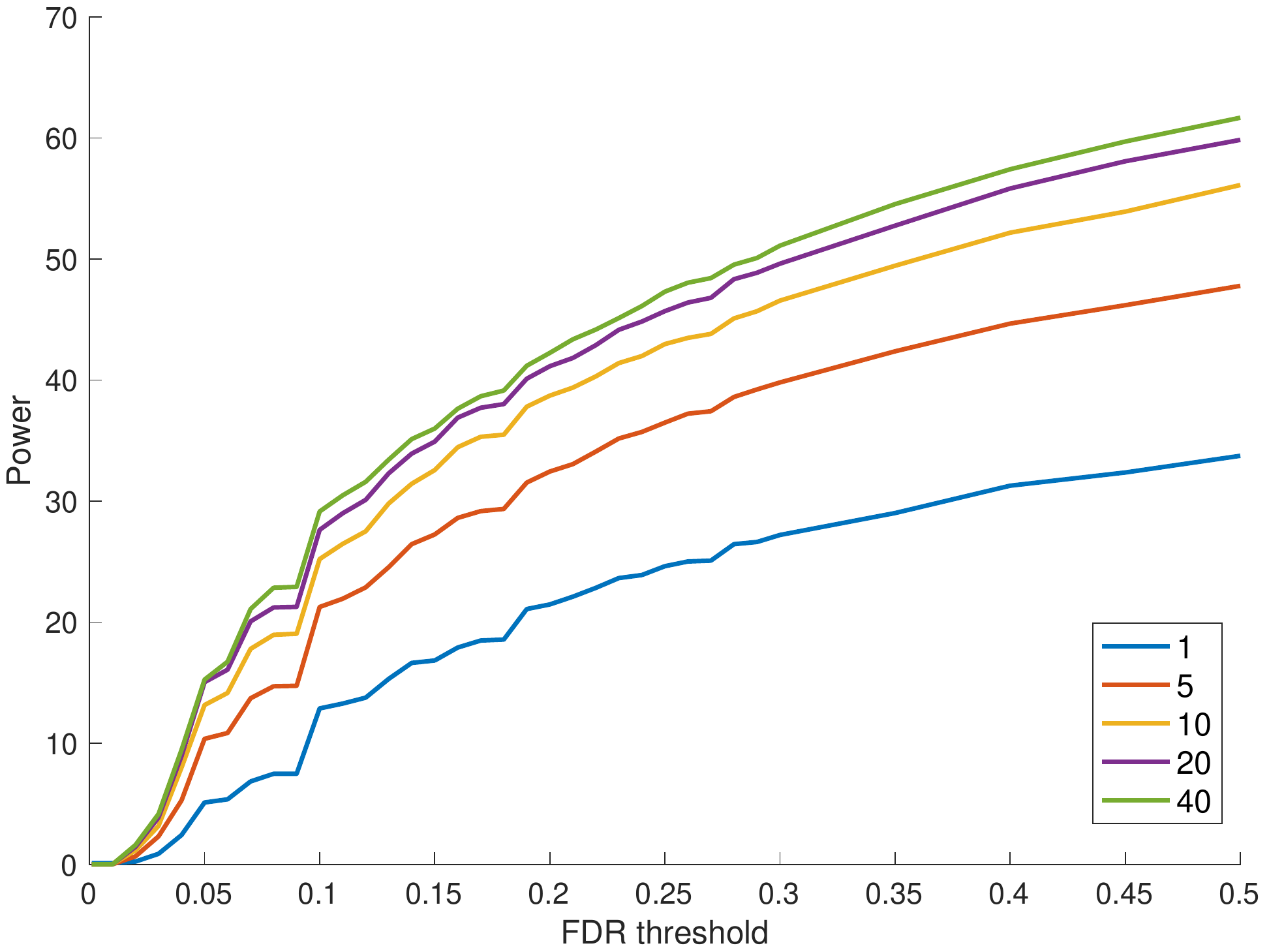}\tabularnewline
		C. Max using $b=1$ vs.~$b=40$ batches & D. LBM vs.~multi-knockoff (combined dataset)\tabularnewline
		\includegraphics[width=3in]{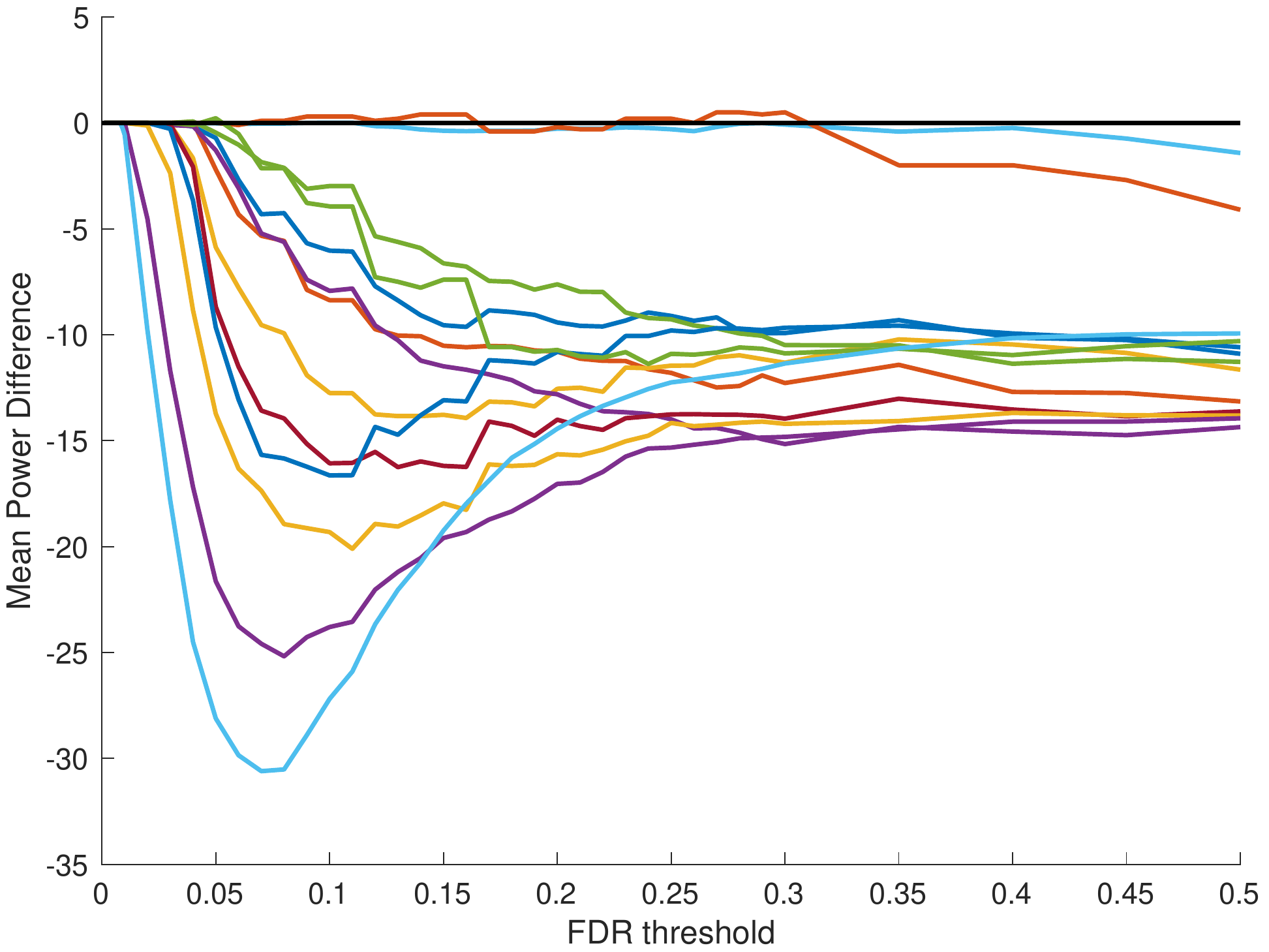} & \includegraphics[width=3in]{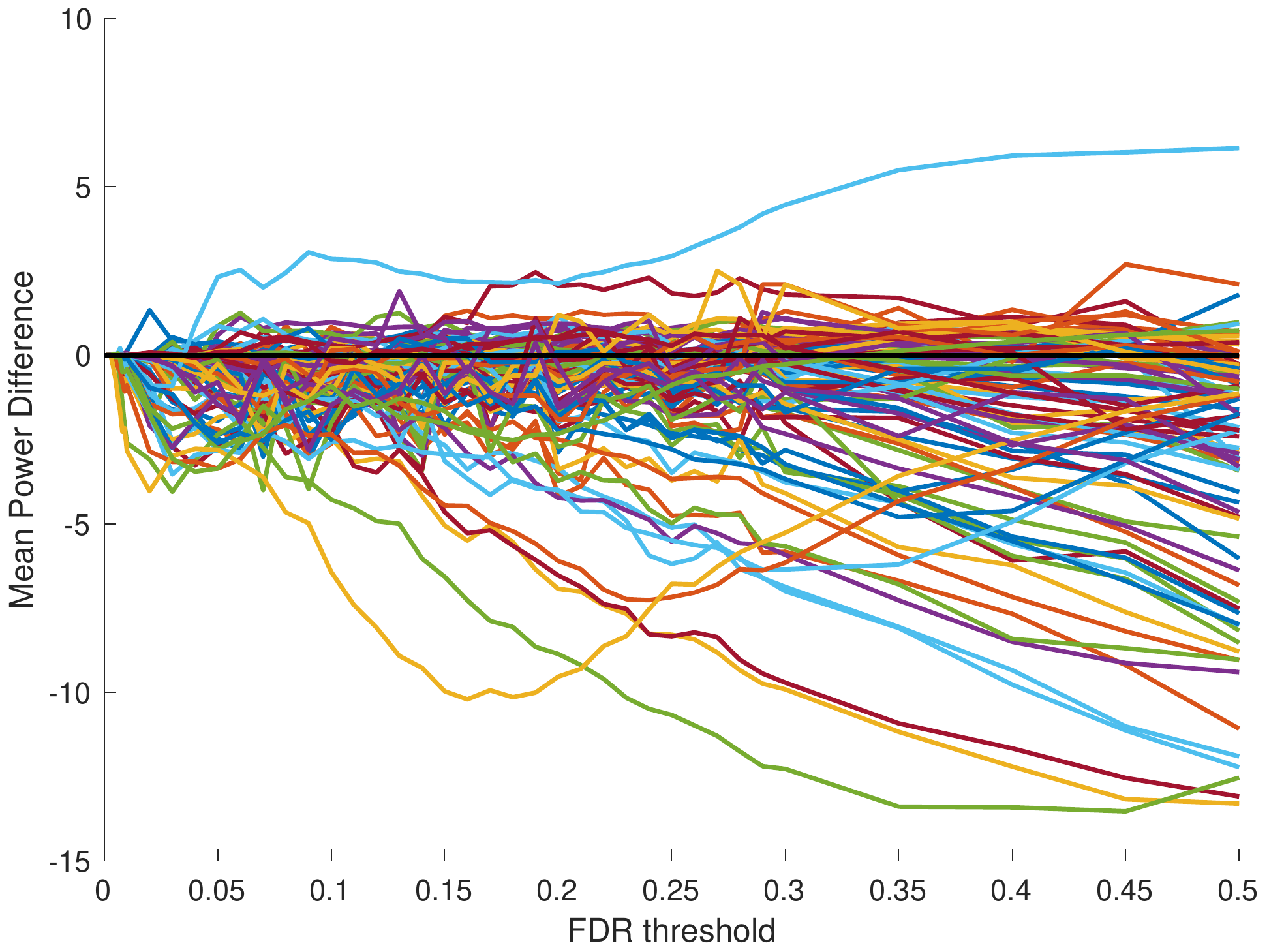}\tabularnewline
		E. Varying $d$ (multi-knockoff) & F. knockoff+ vs.~multi-knockoff-select (combined)\tabularnewline
		\includegraphics[width=3in]{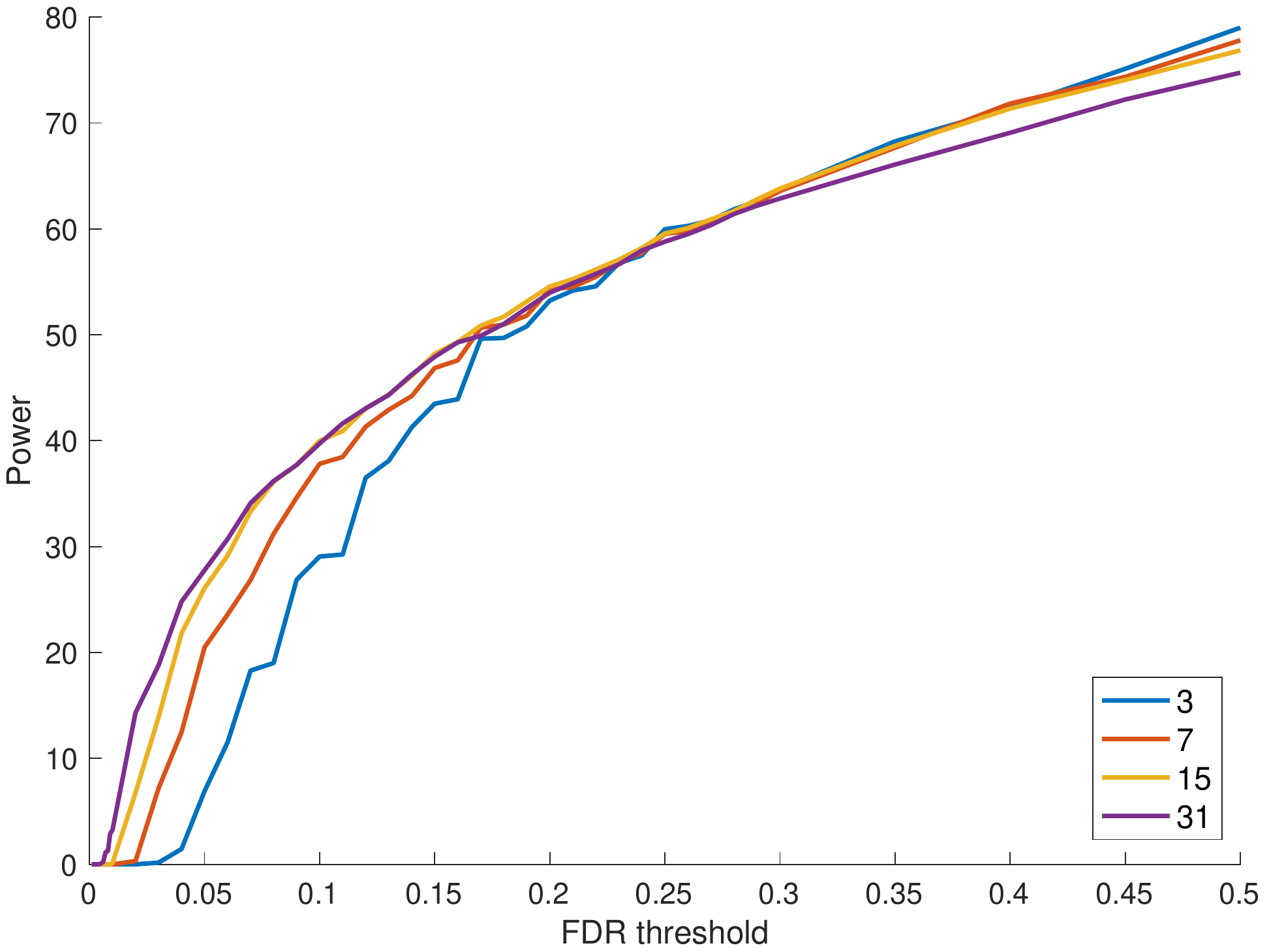} & \includegraphics[width=3in]{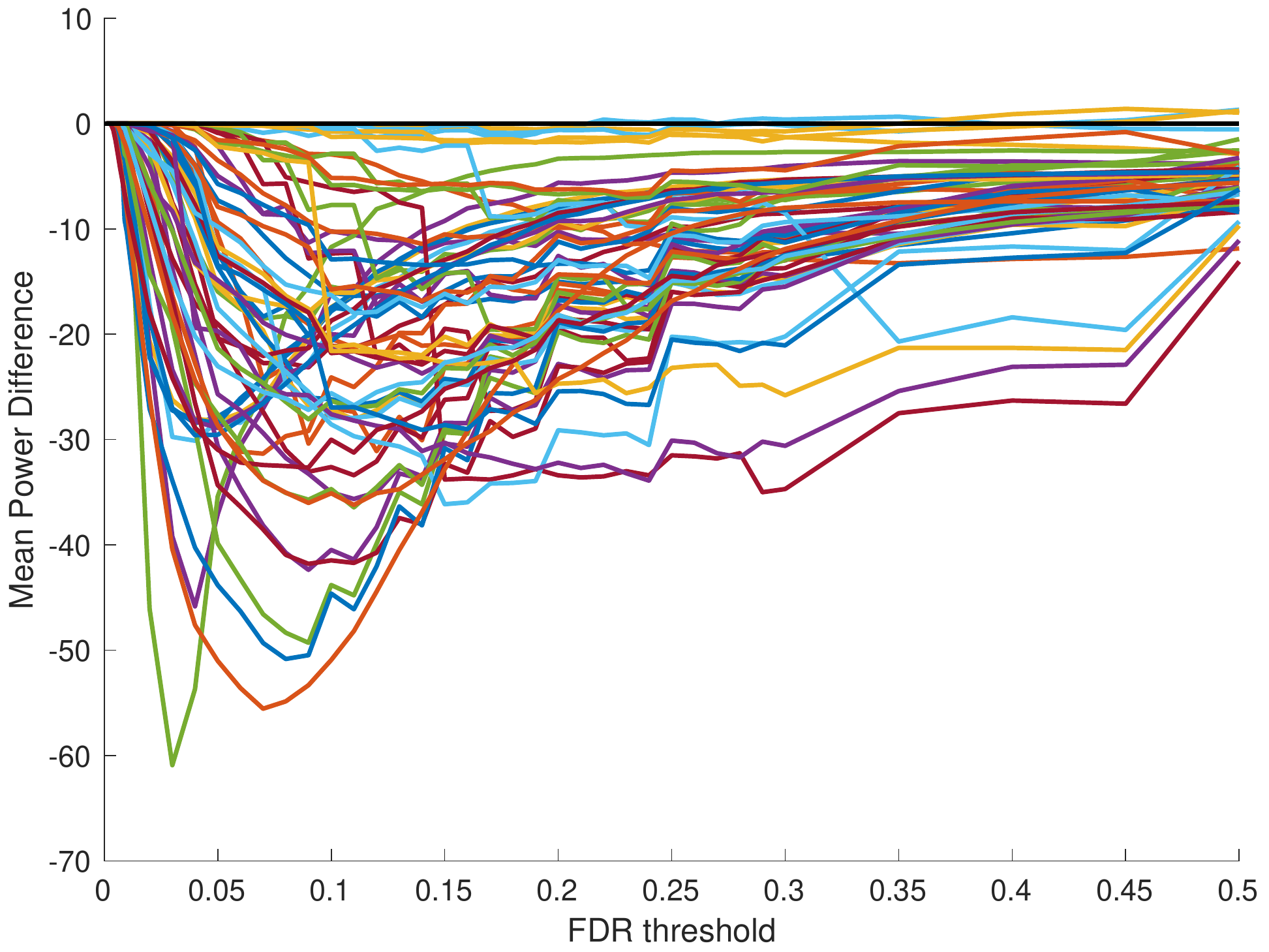}\tabularnewline
	\end{tabular}\caption{\textbf{Main figure.} (A) Power difference between knockoff+ and max
		using $d=2$ ($n=3000,p=1000$, \suppsec\ref{subsec:n3000_p1000_d1_2_b1})
		or $d=3$ ($n=800,p=200$, \suppsec\ref{subsec:n800_p200_d1_3_b1})
		knockoffs. (B) Power of max for varying
		number of batches $b\in\left[1,40\right]$ ($n=600,p=200,d=11,A=2.8$, \suppsec\ref{subsec:n600_p200_d11_varying_b}).
		(C) Power difference between the max method using a single vs.~40
		batches ($n=800,p=200,d=3$, \suppsec\ref{subsec:n800_p200_d1_3_b1})
		(D) Power difference: LBM vs.~multi-knockoff on the combined dataset
		(\suppsec\ref{subsec:Combined-dataset}). Negative values indicate
		multi-knockoff is more powerful. (E) Power of multi-knockoff for varying number of knockoffs $d\in\left\{ 1,3,7,15,31\right\}$
		($n=600$, $p=200$, $K=10$, $A=3.0$, $b=40$ dataset from the set described in \suppsec\ref{subsec:n600_p200_d1_3_7_15_31_b40}).
		(F) Power difference between knockoff+ and multi-knockoff-select on
		the combined set (\suppsec\ref{subsec:Combined-dataset}). \label{fig:main-panel}}
\end{figure}

\subsubsection{Empirically choosing the tuning parameters}

We first compare the performance of LBM, our general multiple-competition
selection procedure, with that of multi-knockoff which is designed
for this linear regression context. Specifically, we apply both methods
to all the datasets in our combined collection of experiments, which
spans the wide range of parameter values described in \suppsec~\ref{subsec:Combined-dataset}.
Panel D of Figure \ref{fig:main-panel} shows that the model-aware
multi-knockoff generally offers more power than the general-purpose
LBM does.\footnote{The one example where LBM is moderately better than multi-knockoff
(cyan colored) corresponds to a realistically borderline 80\% proportion
of features in the model: $K=160$ and $p=200$ ($n=600$, $d=11$).} More specifically, comparing panels A and B of \suppfig\ref{fig:mKO-vs-LBM}
we find that the advantage of multi-knockoff becomes evident when
the number of knockoffs is larger: for $d=3$ (panel B) we do not
see much of a difference, which is expected given that in this case
multi-knockoff considers only three possible combination of values
for ($c,\lam$).

When we rely on data-driven methods to set the values of $c$ and
$\lam$ we lose the theoretical guarantee of FDR control regardless
of whether or not we use batching and/or extension. Resorting to simulation
studies we find that in the same extensive set of experiments both
LBM and multi-knockoff seem to essentially control the FDR (panels
C and D of \suppfig\ref{fig:mKO-vs-LBM}), so the advantage of
multi-knockoff does not seem to come at the expense of controlling
the FDR.

With multi-knockoff's optimization of the tuning parameters ($c,\lam$)
apparently being better than LBM's we went ahead and also compared
the former's power against all the other methods we consider here.
Panels A-D of \suppfig\ref{fig:all-vs-multiKO} show that in each
case multi-knockoff is overall a better option: more often than not
it delivers more power than each of the other methods, and moreover,
when it is not optimal it is giving up only a small amount of power
(certainly for the more practical FDR thresholds of $\alp\le0.3$),
while often enjoying a substantial advantage in power when it is optimal.

\subsubsection{Choosing the optimal number of knockoffs}

So far we examined the performance of the methods when the number
of knockoff copies $d$ is given. However, it is not clear how to
choose an optimal value of $d$ as the setup here is quite different
to the iid decoys model that Emery et al.~looked at. In the latter
case, the larger $d$ is the more power the multiple-decoy procedure
will generally deliver, however in our linear regression context there
is a delicate balance between the increased power due to the increasing
number of competing knockoffs and the reduction in power due to increased
correlation between the knockoffs and the original features. Panels
E of Figure \ref{fig:main-panel} and A and B of \suppfig \ref{fig:multiple_nKO}
demonstrate this problem: the optimal number of knockoffs varies with
the method we use, the parameters of the problem, and the FDR threshold.
This was the motivation behind our new multi-knockoff-select that
tries to optimally select $d$ from the choices it is given, so how
well is it doing in practice?

Panels C and D of \suppfig\ref{fig:multiple_nKO} show that in
the case of the experiments described in \suppsec\ref{subsec:n600_p200_d1_3_7_15_31_b40}
multi-knockoff-select seems to consistently select a nearly optimal
$d$: in the studied cases its power for any $\alpha\le0.5$ was at
worst 5\% below the power of multi-knockoff applied with the optimal
$d$ and the power difference was even smaller for $\alp\le0.2$.
At the same time, for each fixed $d$ there are settings where multi-knockoff-select
delivers significantly more power than multi-knockoff. Importantly,
the overall performance of multi-knockoff-select on this set generated
using six different combinations of parameter values (\suppsec\ref{subsec:n600_p200_d1_3_7_15_31_b40})
was uniformly better than \BC's knockoff+ procedure for $\alp\le0.5$
and often by a significant power margin (panel E, \suppfig\ref{fig:multiple_nKO}).
At the same time, panel F of the same figure shows that this increase
in power was not the result of compromised FDR control.

Moving on to our more extensive set of experiments described in \suppsec\ref{subsec:Combined-dataset}
we find that multi-knockoff-select's flexibility of optimizing over
$d$ makes it our overall preferred procedure\footnote{When an experiment only looks at, say $d\in\left\{ 1,11\right\} $,
then multi-knockoff-select essentially decides whether to use multi-knockoff
with $d=11$ or batched-knockoff+.}. Indeed, \suppfig\ref{fig:all-vs-multiKO-select}
shows that compared with any of the other methods we consider here
multi-knockoff-select overall offers more power. In particular, panel
A of \suppfig \ref{fig:all-vs-multiKO-select} (for convenience
it is the same as panel F of Figure~\ref{fig:main-panel}) shows
that multi-knockoff-select essentially uniformly delivers more power
than knockoff+ and often significantly more. At the same time we again
find that this increase in power does not come at the expense of our
ability to control the FDR (panel A of \suppfig\ref{fig:supp_mKOsel_more}).

Finally, it is instructive to take a closer look at the main example
\BC\ considered of $n=3000$, $p=1000$, $K=30$, $A=3.5$, and 0
feature correlation $\Theta_{0}=I_{p}$. If we use $b=50$ batches
to construct $d\in\left\{ 1,3,7\right\} $ knockoffs then even with
only $m_{b}=4$ bootstrap multi-knockoff-select is a very computationally
demanding procedure (about 11 hours per run on a 3.2GHz macMini).
Fortunately, this significant computational effort is rewarded as
we can see when comparing the power of multi-knockoff-select to that
of knockoff+ (panel B of \suppfig\ref{fig:supp_mKOsel_more}),
and again FDR is well under control (panel C of same figure).

\section{Discussion}

When introducing their knockoff+ procedure \BC\ noted that using multiple
knockoff copies could increase the power of their approach. We recently
introduced a general approach to multiple competition-based FDR control
and here we show how the two concepts could be merged. We first generalize
the knockoff construction of \BC\ to generate $d$ knockoff copies
and prove that under certain conditions (no extension of $X$, no
batching and using pre-determined tuning parameters ($c,\lam$)), applying
our competition-based selection method to these multiple knockoffs
rigorously controls the FDR in the finite sample setting.

Our initial knockoff construction is limited both in terms of its
applicability ($n\ge(d+1)p$) and its utility (panel A of Figure~\ref{fig:main-panel}).
To address these issues we combine \BC' extension notion with our proposed batching heuristic, and
we empirically show that these revised knockoffs still allow
us to effectively control the FDR in the variable selection problem while delivering a substantial increase in power.

Our recommended general procedure for controlling the FDR using multiple
competition is constrained by the generic resampling technique it
uses. Here we show that using a resampling scheme that is specifically
adjusted to the linear regression context allows us to offer more
powerful selection methods. Indeed, our multi-knockoff-select procedure
is largely successful not only in setting a near-optimal value of
($c,\lam$), the tuning parameters of our general FDR controlling
procedure, but also in selecting an optimal number of knockoff copies.
The latter is a non-trivial optimization problem due to the inherent
conflict between the advantage that increasing $d$ offers in terms
of the competition and the reduced power for each knockoff ($s_{0}$
is decreasing as $d$ is increasing).

While there are alternative procedures for controlling the FDR in
the associated variable selection problem (e.g. \cite{miller:subset,miller:selection,gsell:sequential,meinshausen:stab_selection,liu:stability}),
\BC\ note that those, and for that matter their own knockoff (in
contrast with their knockoff+) procedure, generally only asymptotically
guarantee FDR control. They further demonstrate that among the procedures
that control the FDR in the finite setting of the variable selection
problem, their knockoff+ seems to be the most powerful one. Multi-knockoff-select
is more powerful than knockoff+, allowing us to identify more truly
associated features, while empirically we see that it maintains control
of the rate of falsely discovered features even in the finite setting.
It does however come at a substantial computational cost as well as
of using a mathematically unproved technique.

We concentrated on comparing our multiple-knockoff methods with knockoff+
because they naturally generalize that method but it is also instructive
to consider the more recent model-X knockoffs \cite{candes:panning}.
The model-X knockoffs are designed for a different variant of the
linear regression problem where the design matrix itself is also drawn
according to some known distribution. This assumption is still consistent
with the setup of our simulations so we compared the performance of
the model-X knockoffs with the other knockoff procedures in a couple
of examples (\suppsec\ref{subsec:The-model-X-dataset}). \suppfig\ref{fig:supp_modelX}
suggests that in the context of our simulations the model-X lasso
signed max (LSM) statistic was roughly on-par or slightly weaker than
the original knockoff+, and the model-X lasso coefficient difference
(LCD) statistic significantly lagged behind those two. In particular,
unless the FDR threshold was relatively high and the feature correlation
extremely high ($\rho=0.9$), all these single-knockoff methods offered
significantly less power than multi-knockoff-select.

There are a few directions we would like to explore following this
work. First, like the original knockoff+ our method is limited to
the case $n>p$. Thus, it would be particularly interesting to see
whether our approach can be combined with \BC' recent extension to
the $n\le p$ case based on their data-splitting technique coupled
with their introduction of a two step procedure: acquiring a partial
model with $n>p$ and performing the knockoff procedure on the partial
model \cite{barber:knockoff}. Second, in this work we focused on
generalizing the original knockoff construction to multiple knockoffs
so an obvious question is how much of this carries over to the model-X
knockoffs. Third, our current estimate of the noise level $\sigma$
is rather naive and, as we saw, using it the knockoff scores gradually
drift from the assumed model when using large extensions (Section
\ref{subsec:Too-many-batches}). It would therefore be interesting
to explore more sophisticated estimations of $\sig$ such as the one
in \cite{reid:study}.

\bibliographystyle{plain}
\bibliography{../multiple_refs,/Users/keich/bio_papers/noble-lab-references/refs,/Users/keich/References/journal_papers/stats/stat_papers}

\clearpage

\section{Supplementary Material}

\subsection{Simulation setup\label{subsec:Simulation-setup-details}}

Our general simulation setup is described in Section \ref{subsec:Simulation-setup}.
In the following sections we give further details about the parameter
settings we used in our experiments in generating the original design
matrices (and the response variables) and the knockoff features as
well as any optional settings of our selection methods. When generating
the data we varied the dimension of the design matrix $X$, $n\times p$,
the number of true features, $K$, the signal amplitude $A$, and
the feature correlation strength $\rho$ while keeping the variance
of the noise fixed at $\sig^{2}=1$ (cf. \eqref{eq:lin_model} and
Section \eqref{subsec:Simulation-setup}). We generally randomly sampled
1K sets of data for each of the parameter combinations we considered
and constructed a set of original plus $d$ knockoff scores for each
feature using the specified number of $b$ batches.

While knockoff+ and batched-knockoff+ were each applied only once
to the data --- each with its corresponding knockoff --- the multiple
knockoff procedures were applied separately for each considered value
of $d$ (with the knockoffs also separately constructed for each value
of $d$, cf. Section \ref{subsec:How-many-knockoffs}).

Following knockoff+ we also use the glmnet implementation of the Lasso
\cite{qian:glmnet}. We found that the set of values the regularization
parameter lambda is allowed to assume can have a non-negligible effect
on our analysis. This is not surprising given that the original and
the knockoff feature score corresponds to the largest value of lambda
for which the coefficient of that feature is non-zero. Therefore,
to make sure that the differences we observe between the methods are
not due to variations in the number of lambdas, we set each method
to use the same number of possible lambda values. Specifically, this
number was set to $5\cdot\left(1+d_{\max}\right)\cdot p$ (we experimented
a little with coefficients other than 5 but kept it at 5 throughout
the simulations described here), where $d_{\max}$ is the maximal
value of $d$ considered in that experiment. Importantly, it means
that knockoff+ also used that maximal number of lambdas even though
it is using a single knockoff.

In the same vein, we found that our knockoffs are better behaved if
all our batches use the same set of lambdas. Specifically, we use
the same exponentially spaced set of lambda values that knockoff+
uses only we set the maximal value to $\max_{i}\xx_{i}^{T}\yy/n$
where $\xx_{i}$ varies over the columns of all augmented design matrices
$\left[X\tilde{X}^{I_{j}}\right]$ (Section \ref{sec:Batched-KOs}).
Note that when using a single batch this maximal value coincides with
the one originally used in the knockoffs package.

Somewhat more surprising was the fact that permuting the columns of
the extended design matrix before applying glmnet also occasionally
had a substantial effect on the performance of the FDR controlling
methods, \emph{including} on knockoff+. Therefore, we uniformly randomly
permuted all extended design matrices.

Note that we applied LBM exactly as described in \suppsec 6.6 of \cite{emery:multiple2}
and that when multiple number of knockoffs are considered in the same run the data
is only resampled once using the largest considered number of knockoffs to create the model-aware resamples.

\subsubsection{$n=800$, $p=200$, $d\in\left\{ 1,3\right\} $, $b=1$\label{subsec:n800_p200_d1_3_b1}}

A set of experiments designed to study the performance of FDR control
with multiple knockoffs in the setting of guaranteed finite sample
control. Each drawn dataset was generated starting with randomly drawing
the $n\times p$ design matrix $X$ where $n=800$ and $p=200$ while
varying the following parameters (drawing 1K independent datasets
per each setting of the parameters):
\begin{itemize}
	\item the number of true features $K$ (model sparsity): $K=1,5,10,20,40,80$
	with $A=3.0$ and 0 feature correlation $\Theta_{0}=I_{p}$.%
	\item the signal amplitude $A=2.6,2.8,3.0,3.2,3.4$ with $K=10$ and 0 feature
	correlation $\Theta_{0}=I_{p}$.
	\item the feature correlation strength $\rho=0,0.3,0.6,0.9$ of the T\"oeplitz
	correlation matrix $\Theta_{\rho}$ with $K=10$, $A=3.0$.
\end{itemize}
When analyzing each of these datasets we constructed three sets of
knockoffs (cf. Section \ref{subsec:How-many-knockoffs}): the first
with a single knockoff per feature as part of running knockoff+',
the second using our procedure to construct $d=1$ knockoff (for use
in batched-knockoff+), and the third using our construction of $d=3$
knockoffs. Note that in this case we used a single batch so while
our construction of $d=1$ knockoff is in practice different from
knockoff+' it is mathematically equivalent to it. In particular, knockoff+
and batched-knockoff+ are essentially equivalent. %

\subsubsection{$n=3000$, $p=1000$, $d\in\left\{ 1,2\right\} $, $b=1$\label{subsec:n3000_p1000_d1_2_b1}}

Similar in design and intent to the last section except that we used
the same dimensions of the data as used in the introduction of knockoff+
\cite{barber:controlling} and varied:
\begin{itemize}
	\item the number of true features $K$ (model sparsity): $K=0,30,50$ with
	$A=3.5$ and 0 feature correlation $\Theta_{0}=I_{p}$.
	\item the signal amplitude $A=3.1,3.5,3.9$ with $K=30$ and 0 feature correlation
	$\Theta_{0}=I_{p}$.
	\item the feature correlation strength $\rho=0,0.3,0.6,0.9$ with $K=30$,
	$A=3.5$.
\end{itemize}
For each of these combinations of parameters we drew 1K datasets and
constructed two sets of knockoffs, each using a single batch, one
with $d=1$ (again equivalent to knockoff+' single knockoff) and another
with $d=2$ knockoffs per feature.

\subsubsection{$n=800$, $p=200$, $d\in\left\{ 1,3\right\} $, $b=40$\label{subsec:n800_p200_d1_3_b40}}

In this set of experiments we largely repeated the setup described
in Supplementary Section \eqref{subsec:n800_p200_d1_3_b1} above with
the major difference that we used 40 batches when constructing either
$d=1$ or $d=3$ knockoffs per feature rather than a single batch.
In particular, here batched-knockoff+ substantially differs from knockoff+.

Also, in addition to generating 1K datasets for each of the parameter
combinations described above we generated additional
1K datasets per parameter combination while varying:
\begin{itemize}
	\item the number of true features $K=1,5,10,20,40,80$ with $A=2.8$ and
	0 feature correlation $\Theta_{0}=I_{p}$.
	\item the feature correlation strength $\rho=0,0.3,0.6,0.9$ with $K=10$,
	$A=2.8$.
\end{itemize}
Applications of multi-knockoff and multi-knockoff-select used $m_{b}=32$
model-aware bootstrap samples.

\subsubsection{$n=600$, $p=200$, $d\in\left\{ 1,11\right\} $, $b\in\left\{ 1,5,10,20,40\right\} $\label{subsec:n600_p200_d11_varying_b}}

A set of experiments designed to demonstrate the effects that increasing
number of batches can have. The datasets were generated using a fixed
$600\times200$ dimension for the design matrix, with $A=2.8$ and
0 feature correlation $\Theta_{0}=I_{p}$. For each $b\in\left\{ 1,5,10,20,40\right\} $
we randomly generated 1K datasets, then using $b$ batches each time
we constructed two sets of knockoffs one with $d=1$ and another
with $d=11$ knockoffs per feature. Applications of multi-knockoff
and multi-knockoff-select used $m_{b}=32$ model-aware bootstrap samples.

\subsubsection{$n=3000$, $p=1000$, $d\in\left\{ 1,3\right\} $, $b\in\left\{ 1,5,10,30,50\right\} $\label{subsec:n3000_p1000_d3_varying_b}}

Similar in design and intent to the last section except that (a) we
used different parameter values, and (b) when creating these sets
of knockoffs we did not use clustering when defining the batches,
instead we arbitrarily divided the features into equal sized batches
(which of course is irrelevant for $b=1$).

Applications of multi-knockoff and multi-knockoff-select used $m_{b}=4$
model-aware bootstrap samples.

\subsubsection{``Eclectic batching example''\label{subsec:Eclectic-bacthing-example}}

This set gathered together some specific experiments for demonstrating
the effects of batching on larger examples. It consisted of three
pairs of experiments:
\begin{itemize}
	\item $n=800,p=200,d\in\left\{ 1,3\right\} ,b\in\left\{ 1,40\right\}$
	taken from \suppsecs \ref{subsec:n800_p200_d1_3_b1} and \ref{subsec:n800_p200_d1_3_b40}.
	\item $n=600,p=200,d\in\left\{ 1,11\right\} ,b\in\left\{ 1,40\right\}$ taken from \suppsec \ref{subsec:n600_p200_d11_varying_b}.
	\item $n=3000,p=1000,d\in\left\{ 1,3\right\} ,b\in\left\{ 1,30\right\}$ taken from \suppsecs \ref{subsec:n3000_p1000_d1_2_b1} and \ref{subsec:n3000_p1000_d3_b30}.
\end{itemize}

\subsubsection{$n=800$, $p=200$, $d\in\left\{ 1,3,7\right\} $, $b=40$\label{subsec:n800_p200_d1_3_7_b40}}

In this set of experiments we used the same parameter combinations
for generating the data as described in Supplementary Section \ref{subsec:n800_p200_d1_3_b1}
above. The difference again was in the analysis stage where here we
constructed for each drawn design matrix $X$ three sets of knockoffs
(in addition to the set generated by knockoff+). Each set was constructed
using 40 batches: one with $d=1$, another with $d=3$ and the third with $d=7$ knockoffs per feature.
We then applied knockoff+ and
batched-knockoff+ using their corresponding single knockoff set, and
we applied each of the multiple-knockoff procedures twice, once to
the $d=3$ set and once to the $d=7$ set. Applications of multi-knockoff
and multi-knockoff-select used $m_{b}=32$ model-aware bootstrap samples.

\subsubsection{$n=600$, $p=200$, $d\in\left\{ 1,11\right\} $, $b=40$\label{subsec:n600_p200_d1_11_b40}}

In this set of experiments we used a larger number of knockoffs with
$600\times20$0 design matrices. Again, we generated 1K datasets (design
matrix and response variables) for each of the following combinations
of parameters, varying:
\begin{itemize}
	\item the number of true features $K=0,1,5,10,20,40,80,160$ with $A=2.8$
	and 0 feature correlation $\Theta_{0}=I_{p}$.
	\item the signal amplitude $A=2.4,2.6,2.8,3.0,3.2$ with $K=10$ and 0 feature
	correlation $\Theta_{0}=I_{p}$.
	\item the feature correlation strength $\rho=0,0.3,0.5,0.7,0.9$ of the
	T\"oeplitz correlation matrix $\Theta_{\rho}$, as well as using
	$\rho=0.5$ for $\Omg_{\rho}$ (constant $\rho$ on the off-diagonal
	terms) with $K=10$, $A=2.8$.
\end{itemize}
Note that for two of the experiments we increased the number of runs
to 4K from the initial 1K by adding another 3K independent runs to
clarify whether the relatively high empirical FDR that was observed
in a couple of the settings was significant. In both cases $(K=5$
and $K=0$) the aggregated 4K runs did not show a substantial FDR
violation. For each drawn dataset we used $b=40$ batches to construct
two sets of knockoffs, one with a single knockoff per feature, and
another with $d=11$.

\subsubsection{$n=600$, $p=200$, $d\in\left\{ 1,3,7,15,31\right\} $, $b=40$\label{subsec:n600_p200_d1_3_7_15_31_b40}}

This set of experiments was specifically designed to compare the performance
using a varying number of knockoffs while analyzing the same data,
as well as to test the ability of multi-knockoff-select to select an optimal
number of knockoff. Each of the following combination of parameters
was used to generate 1K datasets and for each we constructed a knockoff
set for each value of $d\in\left\{ 1,3,7,15,31\right\} $ using $b=40$
batches per construction (in addition to the knockoffs generated by
knockoff+). To generate the data we varied:
\begin{itemize}
	\item the number of true features $K$ (model sparsity): $K=1,10,40$ with
	$A=3.0$ and 0 feature correlation $\Theta_{0}=I_{p}$.
	\item the signal amplitude $A=2.6,3.0,3.4$ with $K=10$ and 0 feature correlation
	$\Theta_{0}=I_{p}$.
	\item the feature correlation strength $\rho=0,0.5$ of $\Theta_{\rho}$
	with $K=10$, $A=3.0$.
\end{itemize}
Applications of multi-knockoff and multi-knockoff-select used $m_{b}=32$
model-aware bootstrap samples.

\subsubsection{$n=3000$, $p=1000$, $d\in\left\{ 1,3\right\} $, $b=30$\label{subsec:n3000_p1000_d3_b30}}

The data for this set of experiments was generated using the same
general parameter settings as those used in the simulation part of
\cite{barber:controlling}. Specifically, we drew $3000\times1000$
design matrices and response variables by varying:
\begin{itemize}
	\item the number of true features $K$ (model sparsity): $K=0,5,10,30,50,75,100,200$
	with $A=3.5$ and 0 feature correlation $\Theta_{0}=I_{p}$.
	\item the signal amplitude $A=2.7,3.1,3.5,3.9,4.3$ with $K=30$ and 0 feature
	correlation $\Theta_{0}=I_{p}$.
	\item the feature correlation strength $\rho=0,0.3,0.5,0.7,0.9$ of the
	T\"oeplitz correlation matrix $\Theta_{\rho}$, as well as using
	$\rho=0.5$ for $\Omg_{\rho}$ (constant $\rho$ on the off-diagonal
	terms) with $K=30$, $A=3.5$.
\end{itemize}
For each of these combinations of parameters we drew 1K datasets and
constructed two sets of knockoffs, one with $d=1$ and another with
$d=3$ knockoffs per feature, using $b=30$ batches each time.

Applications of multi-knockoff and multi-knockoff-select used $m_{b}=4$
model-aware bootstrap samples.

\subsubsection{Combined dataset\label{subsec:Combined-dataset}}

The ``combined dataset'' was created by merging together the $n=3000,p=1000$
sets (\suppsec\ref{subsec:n3000_p1000_d3_b30}), the $n=800,p=200$
sets (\suppsecs\ref{subsec:n800_p200_d1_3_b40} and \ref{subsec:n800_p200_d1_3_7_b40}
but keeping only one copy of each duplicated $d=3$ set), and the
$n=600,p=200$ sets (\suppsecs\ref{subsec:n600_p200_d1_11_b40}
and \ref{subsec:n600_p200_d1_3_7_15_31_b40}).

\subsubsection{The model-X dataset\label{subsec:The-model-X-dataset}}

We used this data to compare against the newer model-X knockoffs of
\cite{candes:panning}. The design matrix was $n=600$ by $p=200$,
the model had $K=10$ features with $A=2.8$ and we only varied the
feature correlation $\rho=0.0,0.5,0.9$ while creating $d=11$ knockoffs
using $b=40$ batches. Applications of multi-knockoff and multi-knockoff-select
used $m_{b}=32$ model-aware bootstrap samples. The model-X knockoffs were created as follows:
\begin{itemize}
	\item The knockoff features were created using the \texttt{create} function
	of the Matlab knockoffs package \cite{candes:panning} with the model
	defined as \texttt{gaussian} coupled with the mean and covariance
	estimated from the randomly drawn design matrix $X$ using the Matlab
	functions \texttt{mean} and \texttt{cov}. We used the ``equicorrelated''
	construction because our knockoff construction also uses the same
	constant $s_{0}$ construction.
	\item We generated the model-X Lasso signed max (LSM) statistic using a
	slightly modified version of the function \texttt{lassoLambdaSignedMax}
	from the knockoffs package that enabled us to to randomly permute
	the order of the columns of the extended design matrix (see \suppsec\ref{subsec:Simulation-setup-details}).
	We set the function's \texttt{nlambda} parameter to the same value
	that the other methods were using (see \suppsec\ref{subsec:Simulation-setup-details}).
	\item We generated the model-X Lasso coefficient difference (LCD) statistic
	using a similarly modified version of the \texttt{lassoCoefDiff} function
	from the knockoffs package that allowed us to randomly permute the
	extended design matrix columns but other than that we used all the
	default settings of the original function.
\end{itemize}

\subsubsection{The set $\Phi$ of FDR thresholds\label{subsec:The-set-Phi}}

For computational efficiency we evaluated the power and empirical
FDR of each of the considered procedures on a pre-determined set of
possible FDR thresholds. Specifically we used the set of FDR thresholds
$\Phi$: from 0.001 to 0.009 by jumps of 0.001, from 0.01 to 0.29
by jumps of 0.01, and from 0.3 to 0.95 by jumps of 0.05. Our figures
however only extend to an FDR threshold of 0.5 since in practice
FDR thresholds higher than 50\% are typically of little importance.

\subsection{Figures}


\begin{figure}
\centering %
\begin{tabular}{ll}
A. Too many batches ($d=1$) & B. Cluster-defined batches ($d=1$)\tabularnewline
\includegraphics[width=3in]{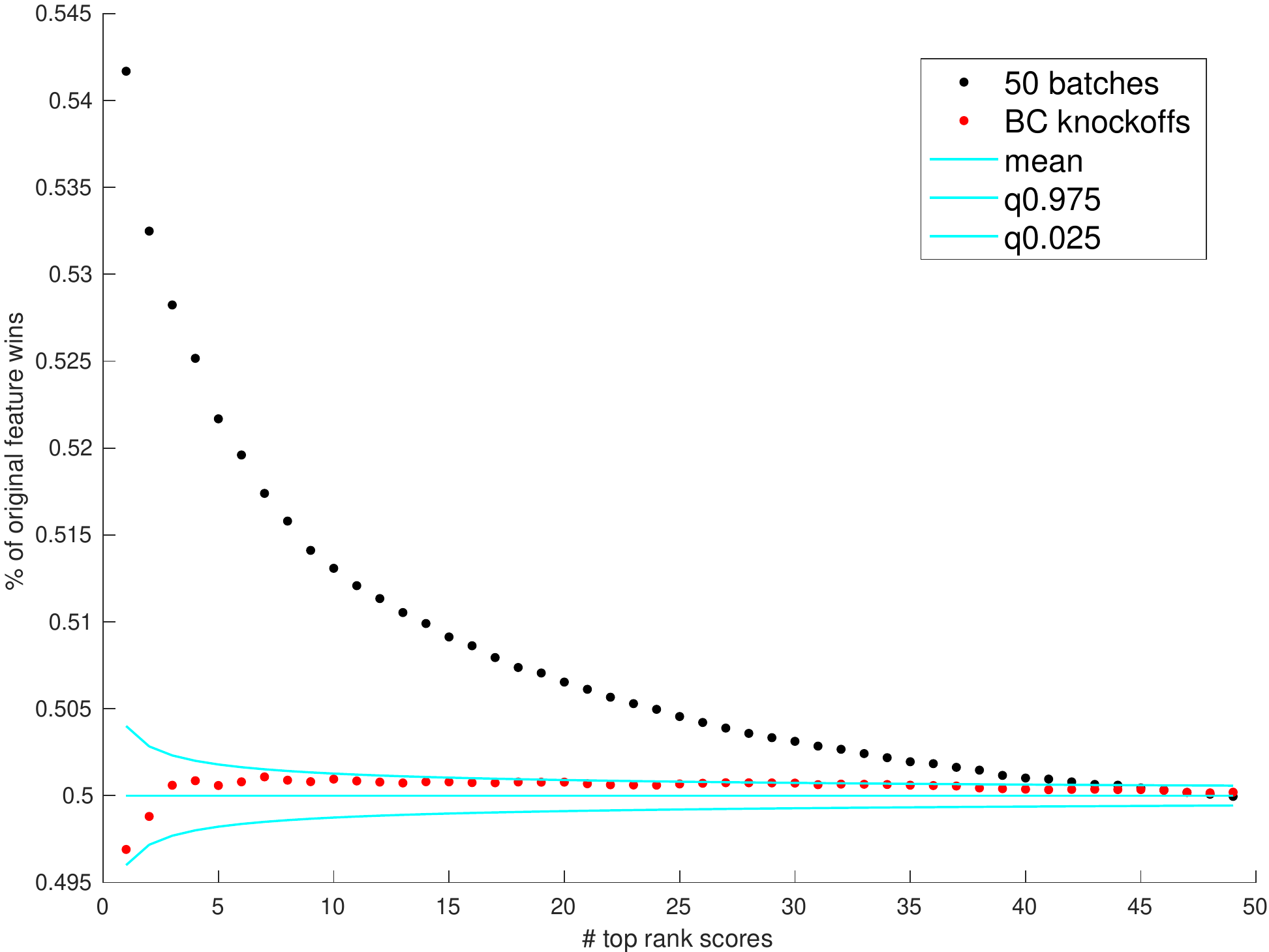} & \includegraphics[width=3in]{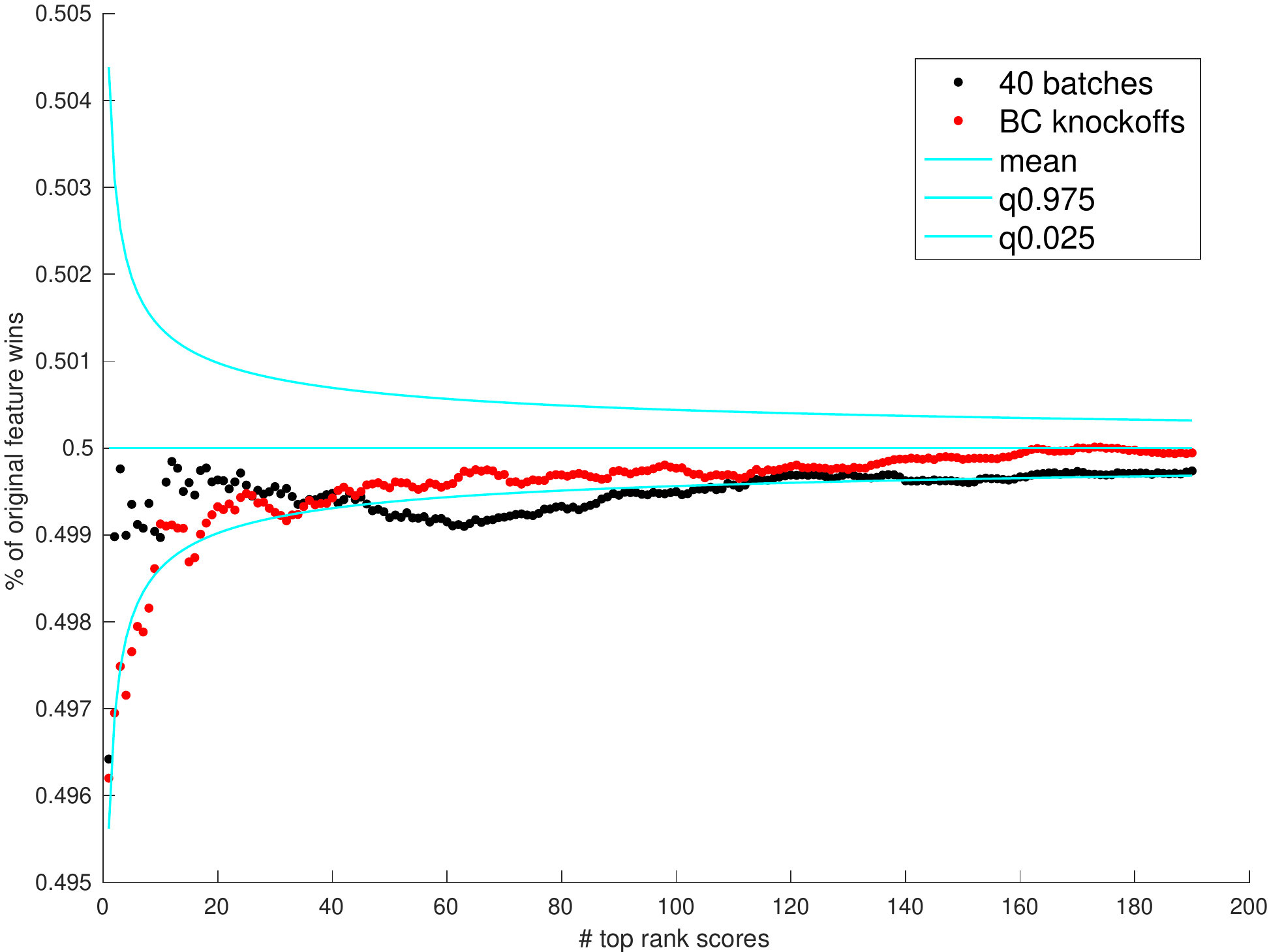}\tabularnewline
C. Uniform random batches ($d=1$) & D. $X$ is not extended ($d=3$)\tabularnewline
\includegraphics[width=3in]{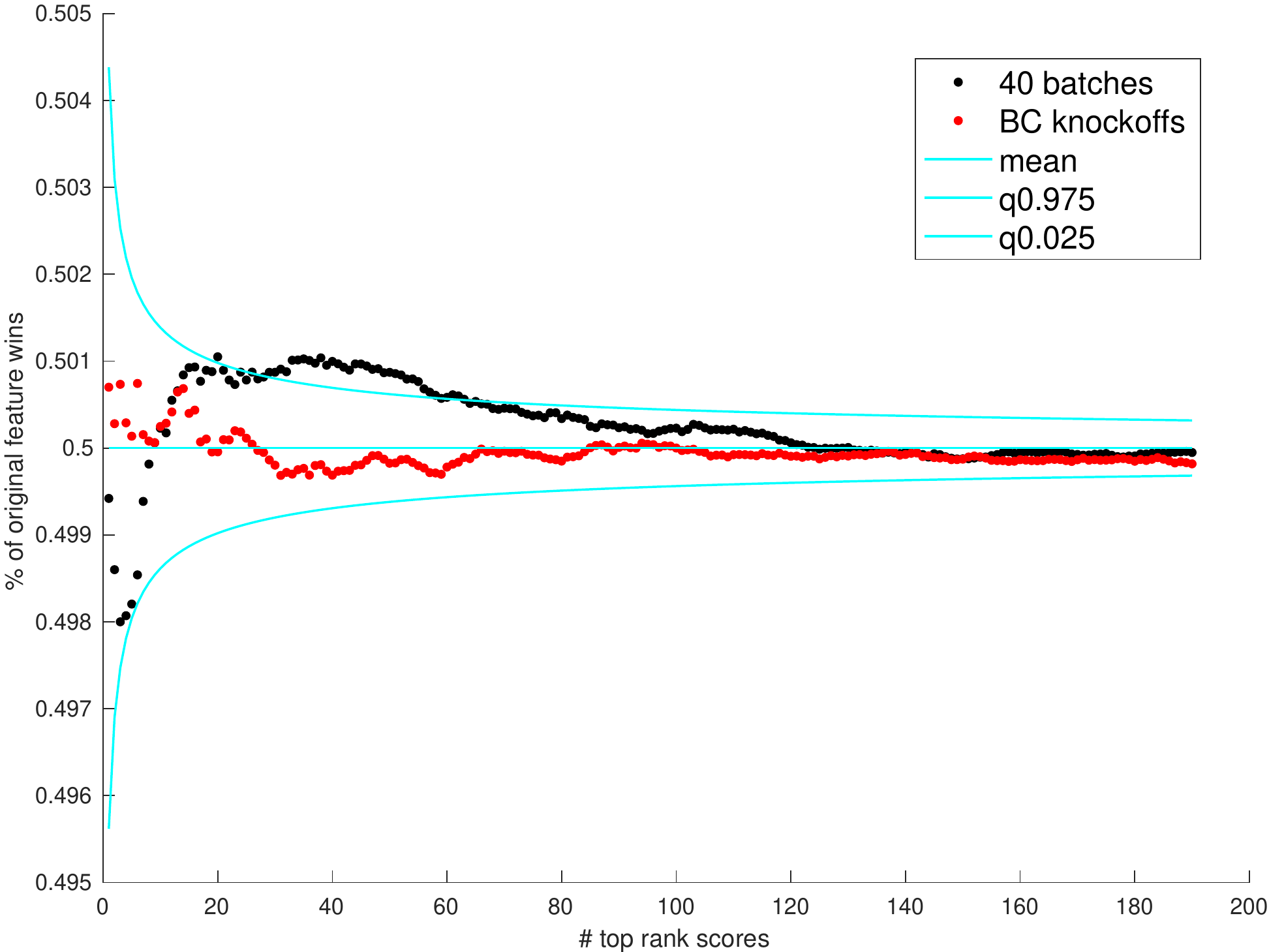} & \includegraphics[width=3in]{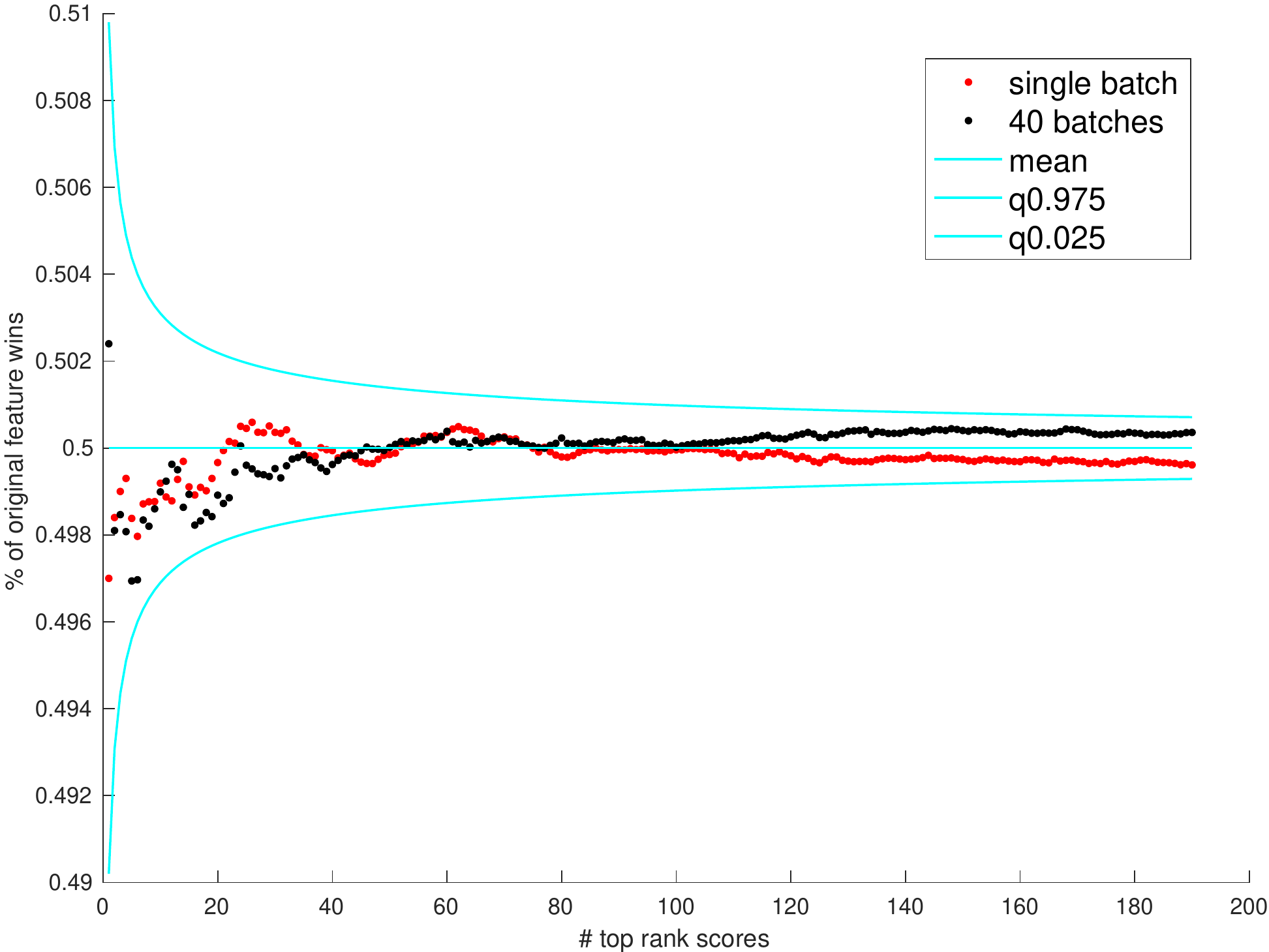}\tabularnewline
E. $X$ is extended ($d=3$) & F. $X$ is significantly extended\tabularnewline
\includegraphics[width=3in]{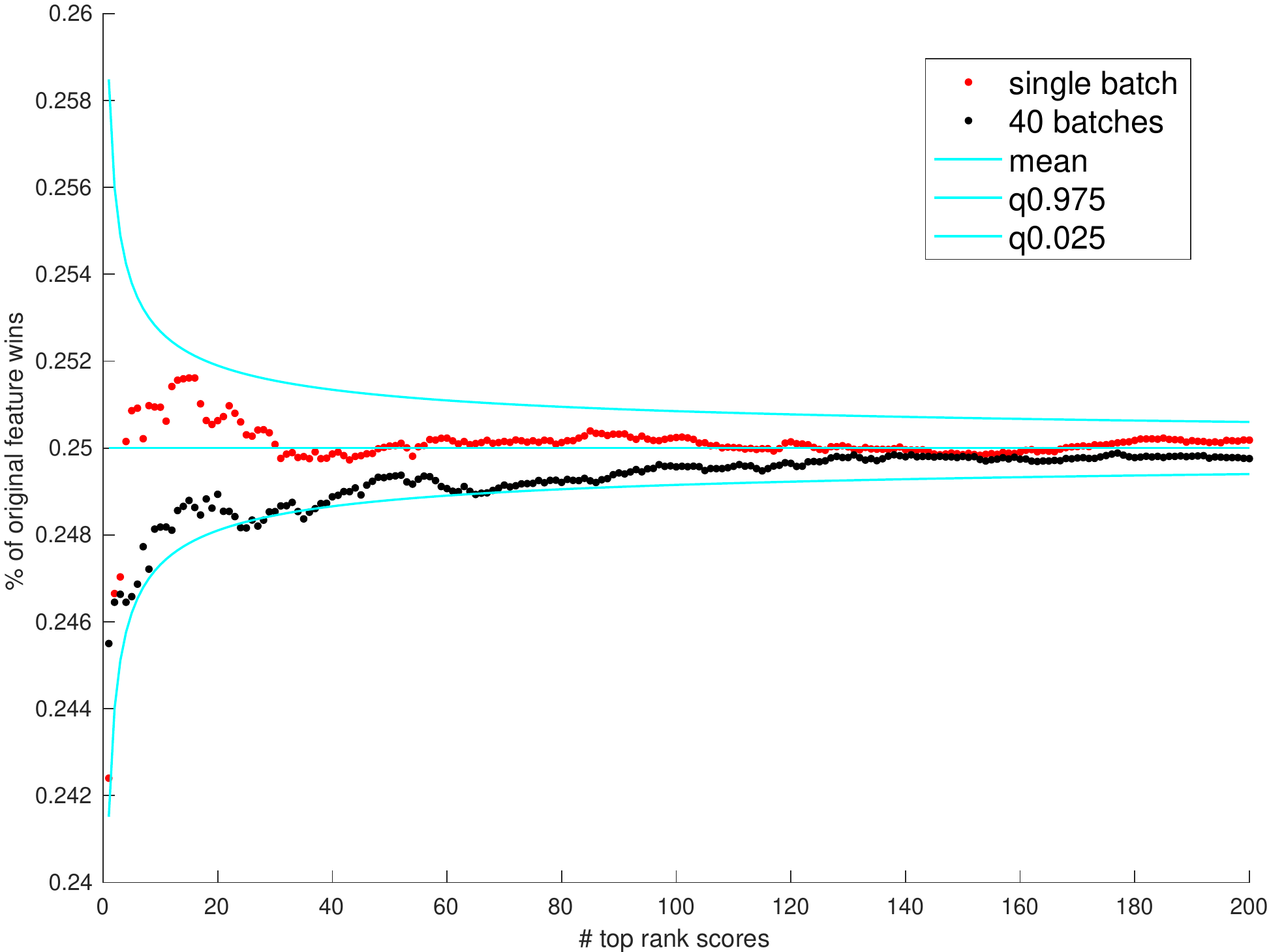} & \includegraphics[width=3in]{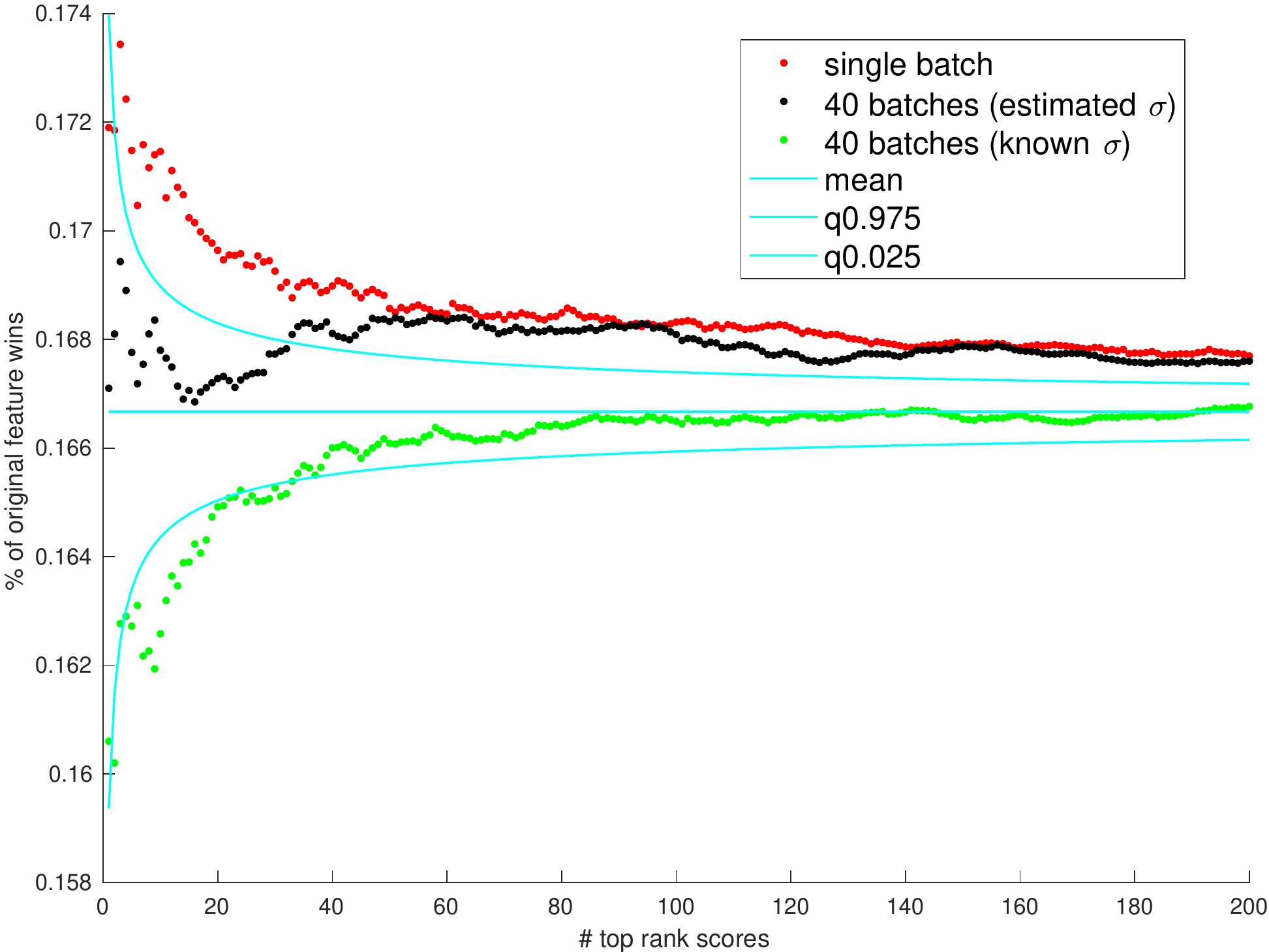}\tabularnewline
\end{tabular}\caption{\textbf{Examining the batched knockoff scores.} Each panel examines
the percentage of original feature wins among the top scoring true
null features. If the knockoffs satisfy the conditional null exchangeability
then those percentages can be modeled by the percentage of cumulative successes
in an iid sequence of Bernoulli($c$) RVs. The cyan colored curves
are the 0.975, 0.025 quantiles and the mean of the iid Bernoulli model.
(A) The batched knockoffs exhibit a clear liberal bias when each batch
consists of a single feature ($d=1$, $c=1/2$, $p=50$, $n=100$,
$\Theta_{\rho}=I_{p}$, $K=1$, $A=10.0$, 60K datasets). %
(B-C) clustered batches (B) seem to follow the model more closely
than the random uniform batches (C), which exhibit some distinct liberal
bias ($d=1$, $c=1/2$, $p=200$, $n=800$, $\protect\Omg_{\rho}$
with $\rho=0.7$, $K=10$, $A=2.8$, 40 batches, 50K datasets). %
(D) Comparing 40 clustered batches with non-batched knockoffs ($d=3$,
$c=2/4$, $p=200$, $n=800$, $\Theta_{\rho}=I_{p}$, $K=10$, $A=2.8$,
two distinct sets of 10K datasets). The design matrix $X$ was not
extended.%
{} (E) Similar to panel E but the design matrix is extended by 200 rows
of 0 ($d=3$, $c=1/4$, $p=200$, $n=600$, $\Theta_{\rho}=I_{p}$,
$K=0$, two distinct sets of 10K datasets).
(F) With $d=11$, $p=200$, $n=600$, we need to significantly extend
$X$ (to $n=2400$) which in turn creates a liberal bias regardless
of whether a single or 40 batches are used. The bias disappears once
we use the known value of $\sig=1$ rather than try to estimate
it ($c=2/12$, $\Theta_{\rho}=I_{p}$, $K=0$, 40 batches, three distinct
sets of 10K datasets).%
\label{fig:tgt-win-pct}}
\end{figure}

%

\begin{figure}
	\centering %
	\begin{tabular}{ll}
		A. $n=3000,p=1000$; varying $A$ (amplitude) & B. $n=800,p=200$; varying $A$\tabularnewline
		\includegraphics[width=3in]{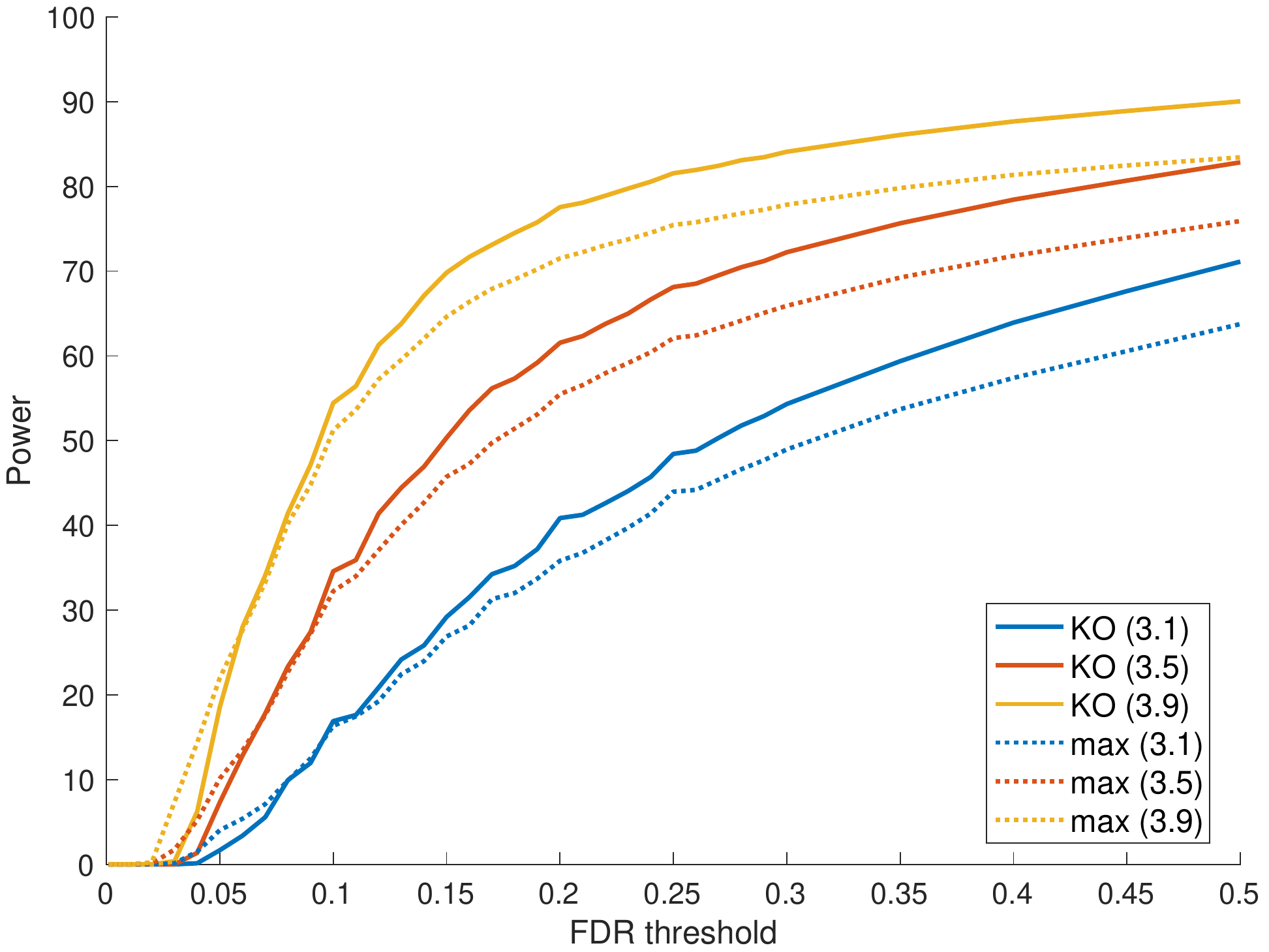} & \includegraphics[width=3in]{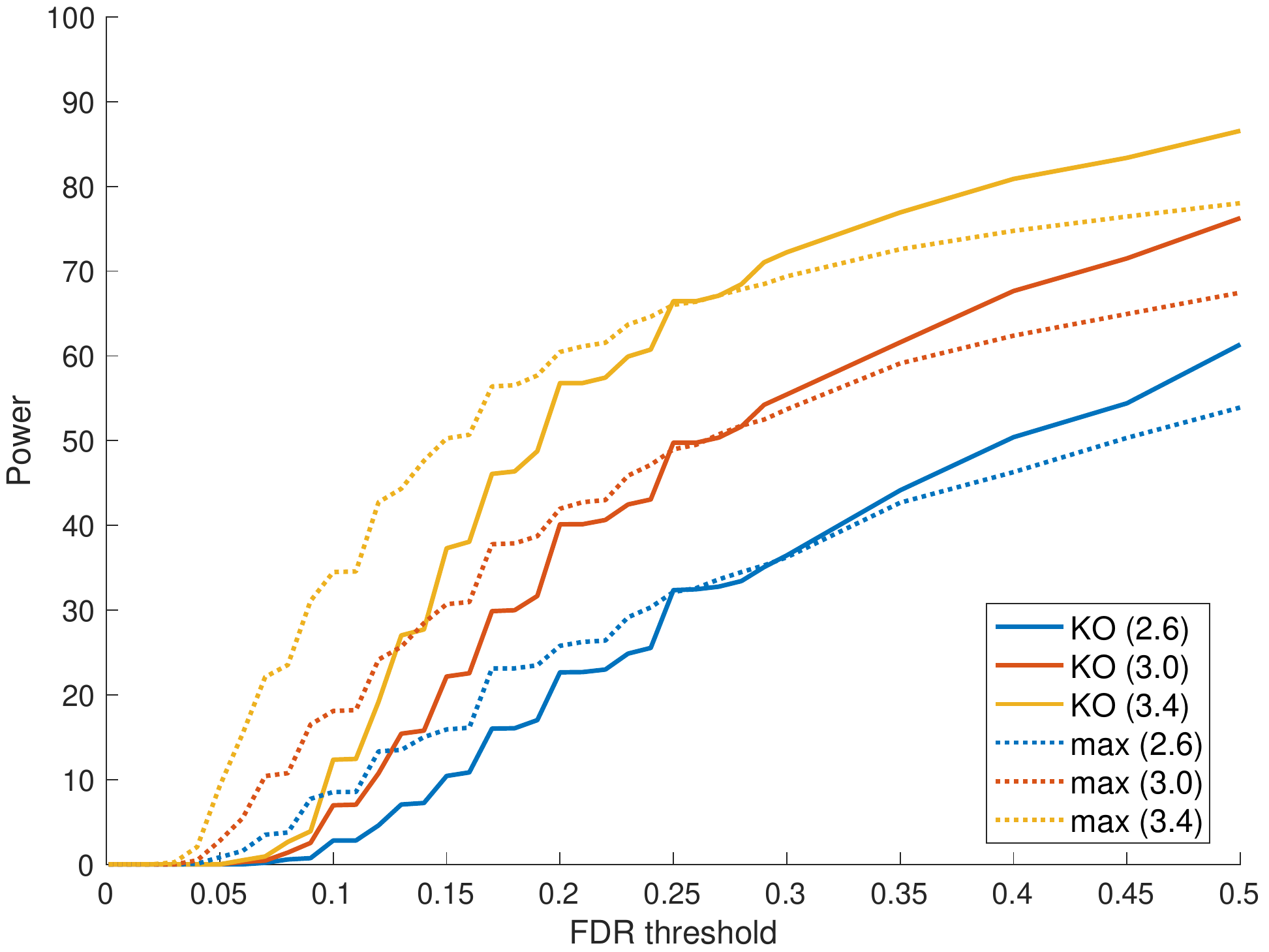}\tabularnewline
		C. varying $\rho$ (feature correlation) & D. varying $\rho$\tabularnewline
		\includegraphics[width=3in]{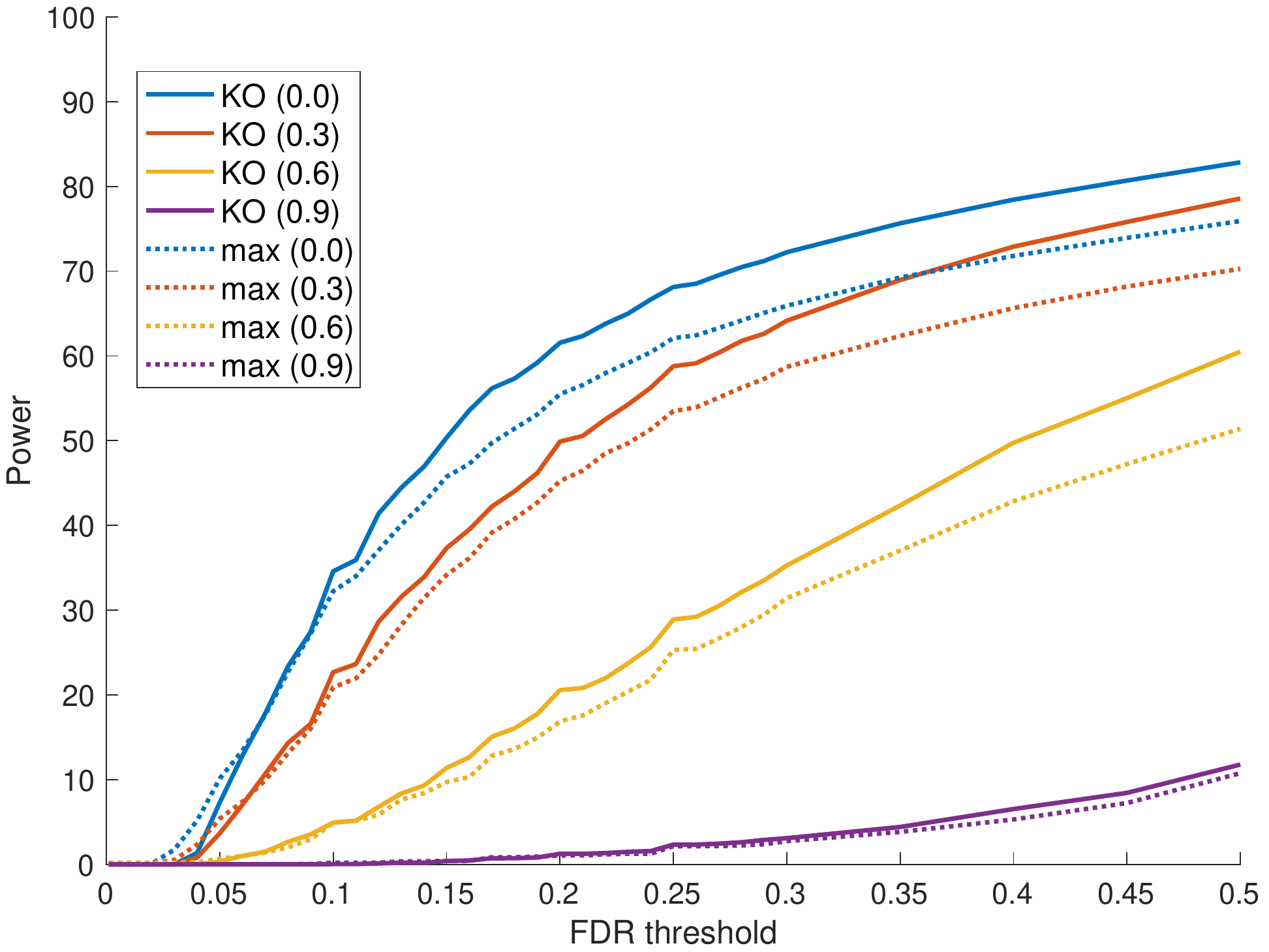} & \includegraphics[width=3in]{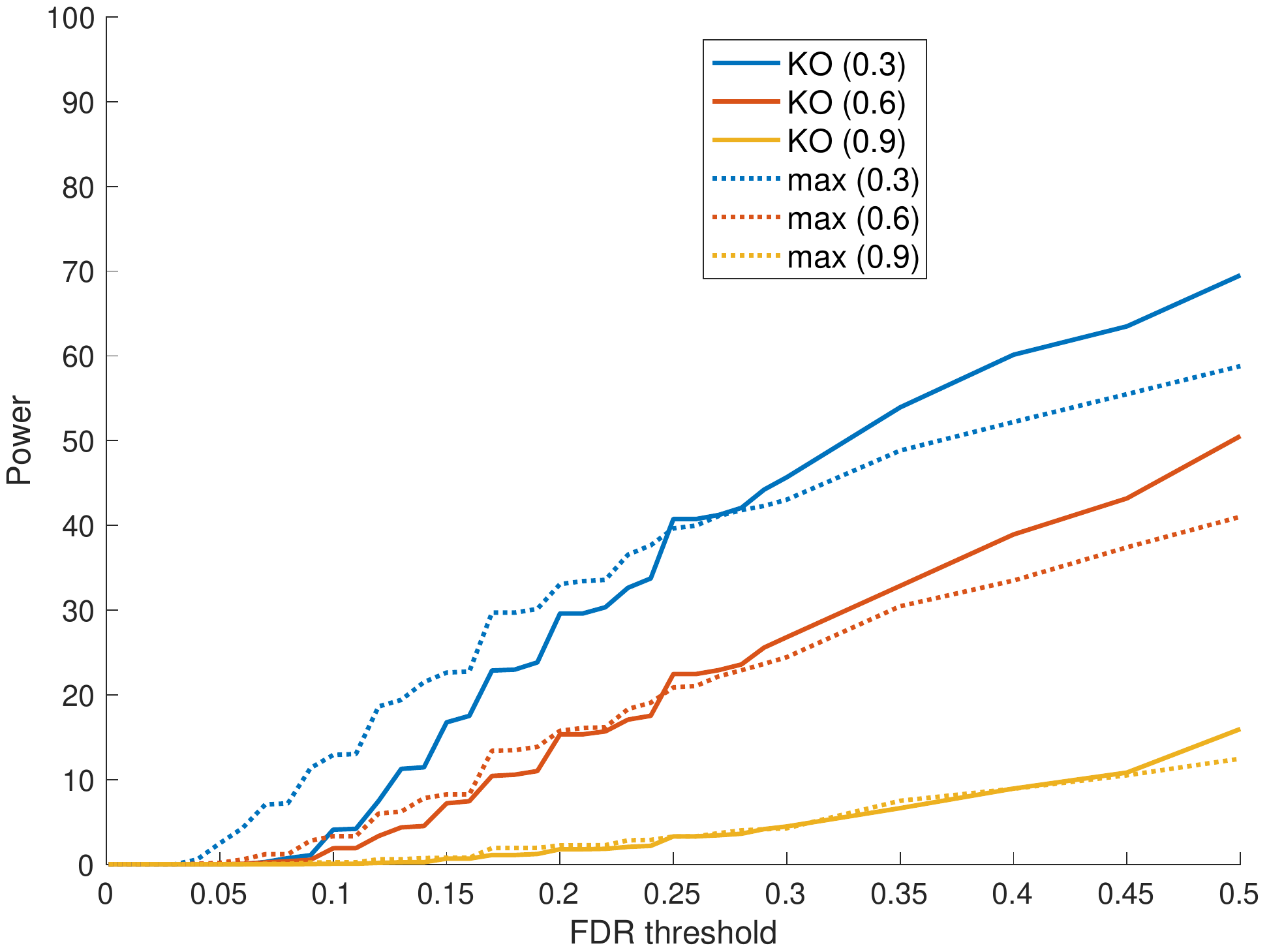}\tabularnewline
		E. varying $K$ (\# of features in model) & F. varying $K$;\tabularnewline
		\includegraphics[width=3in]{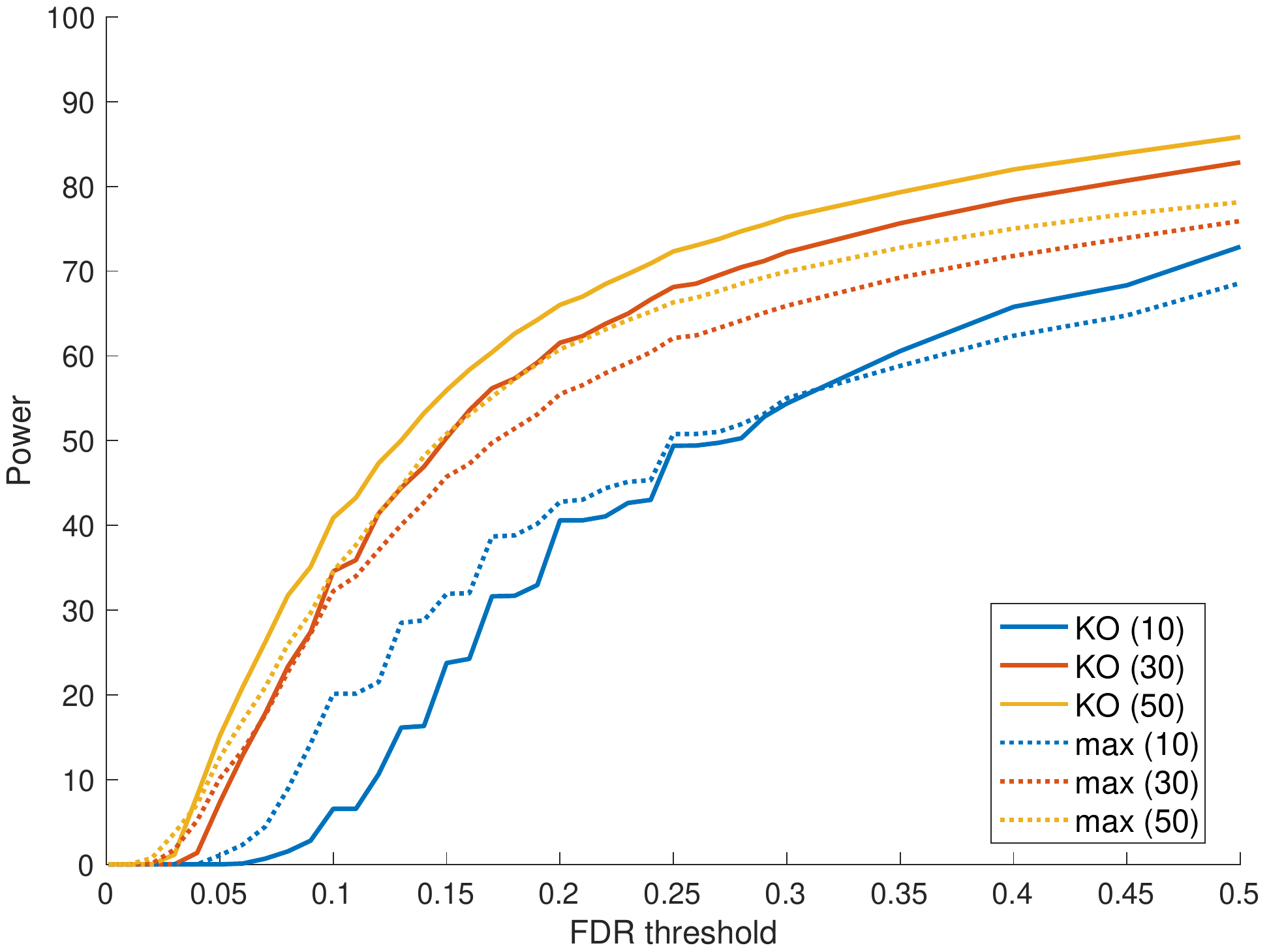} & \includegraphics[width=3in]{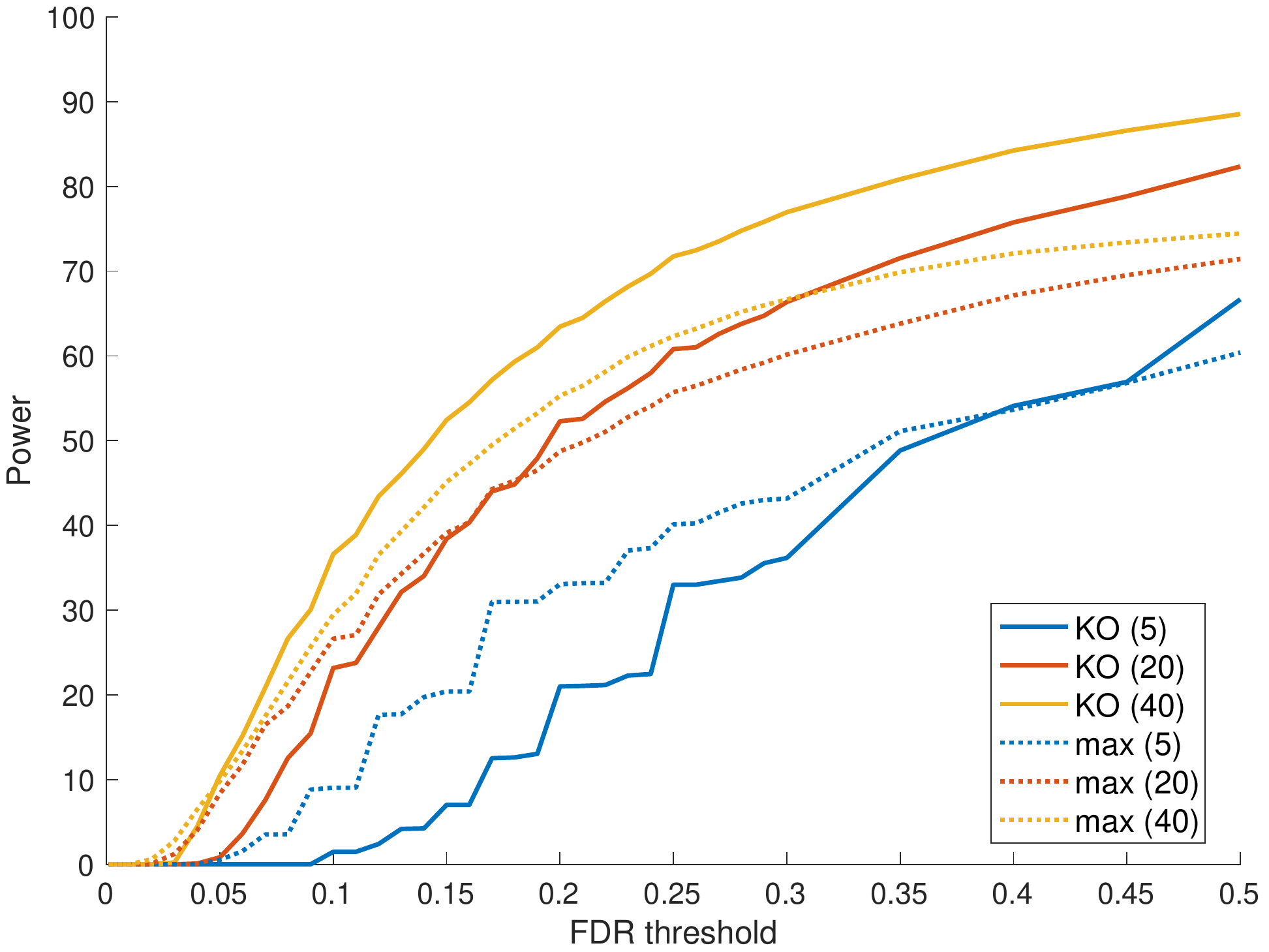}\tabularnewline
	\end{tabular}\caption{\textbf{Power plots of knockoff+ (KO) and max in specific cases of
			guaranteed FDR control. }Each left column panel gives the power of
		\BC' single knockoff juxtaposed with the max procedure using two
		knockoffs ($n=3000,p=1000$ and unless explicitly varied, $k=30,\rho=0,A=3.5$,
		\suppsec\ref{subsec:n3000_p1000_d1_2_b1}). The right column panels
		show knockoff+ and the max using three knockoffs ($n=800,p=200$
		and unless explicitly varied, $k=10,\rho=0,A=3.0$, \suppsec\ref{subsec:n800_p200_d1_3_b1}).
		(A) varying the amplitude: $A\in\left\{ 3.1,3.5,3.9\right\} $ . (B)
		varying the feature correlation strength of $\Theta_{\rho}$: $\rho\in\left\{ 0.0,0.3,0.6,0.9\right\} $.
		(C) varying the number of features, sparsity: $K\in\left\{ 10,30,50\right\} $.
		(D) varying the amplitude: $A\in\left\{ 2.6,3.0,3.4\right\} $. (E)
		varying $\Theta_{\rho}$'s $\rho$: $\rho\in\left\{ 0.0,0.3,0.6,0.9\right\} $.
		(F) varying $K\in\left\{ 5,20,40\right\} $. Overall max tends to
		do better for smaller FDR thresholds, sparser models and a larger
		$d$ but the results are mixed.\label{fig:supp-rigor-FDR-control}}
\end{figure}

\begin{figure}
	\centering %
	\begin{tabular}{ll}
		A. Max ($d=2,3$) vs.~knockoff+ & B. Empirical FDR (max)\tabularnewline
		\includegraphics[width=3in]{figures/guaranteed_FDR_r2376_r2396_power_KO_vs_max_} & \includegraphics[width=3in]{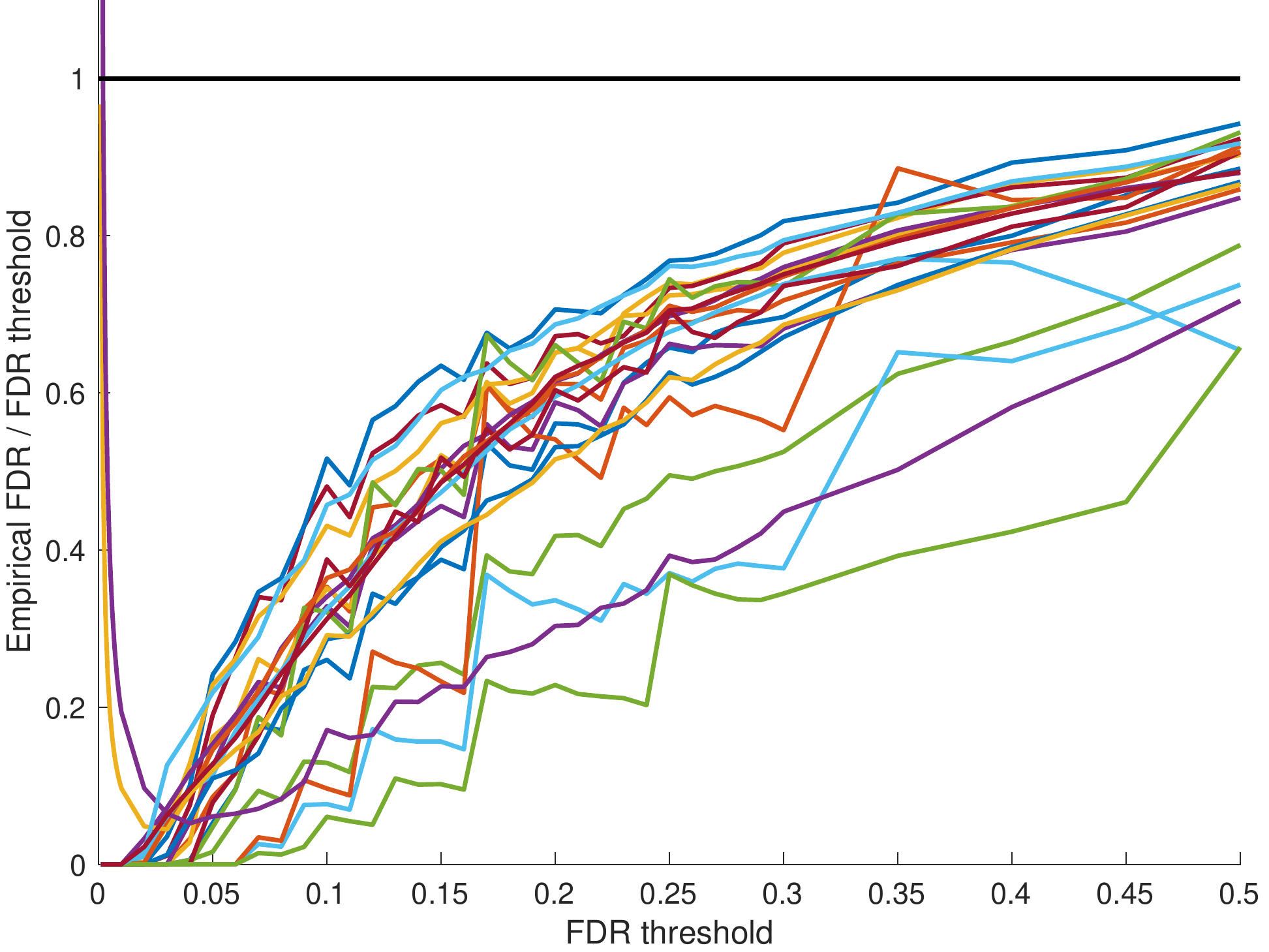}\tabularnewline
		C. Mirror ($d=2,3$) vs.~knockoff+ & D. Empirical FDR (mirror)\tabularnewline
		\includegraphics[width=3in]{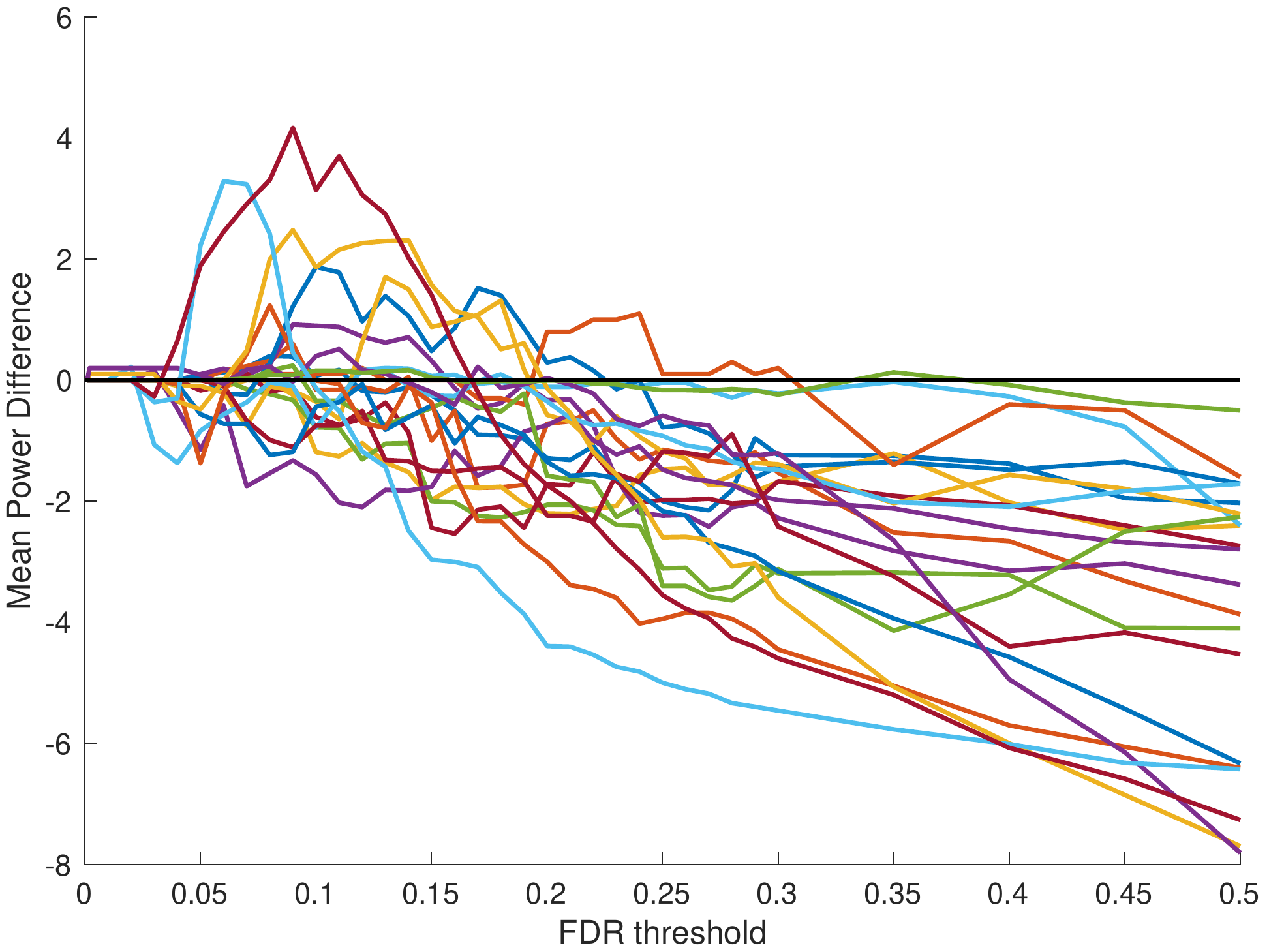} & \includegraphics[width=3in]{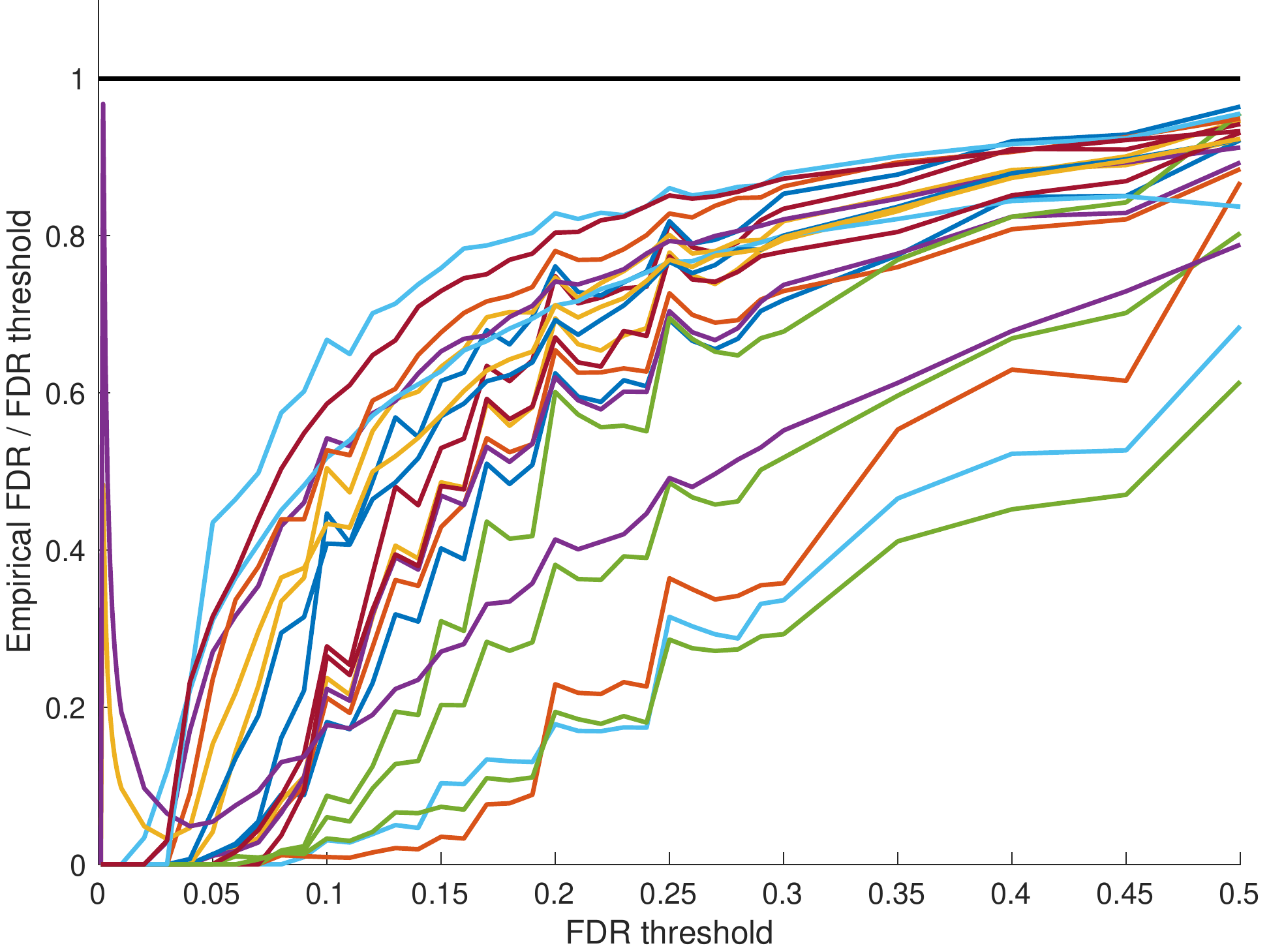}\tabularnewline
		E. Batched-knockoff+ (1 batch) vs.~knockoff+ & F. Empirical FDR (batched-knockoff+)\tabularnewline
		\includegraphics[width=3in]{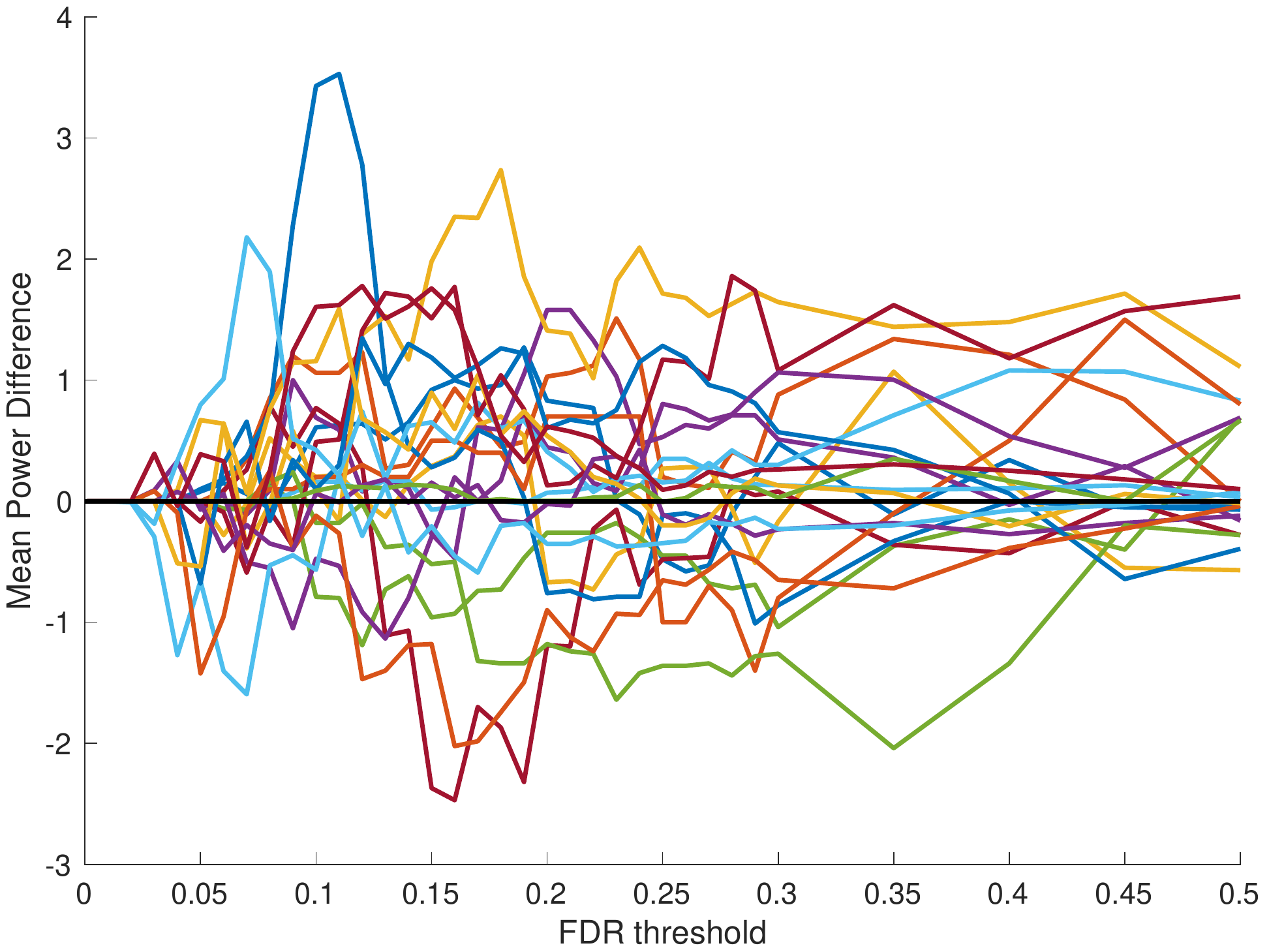} & \includegraphics[width=3in]{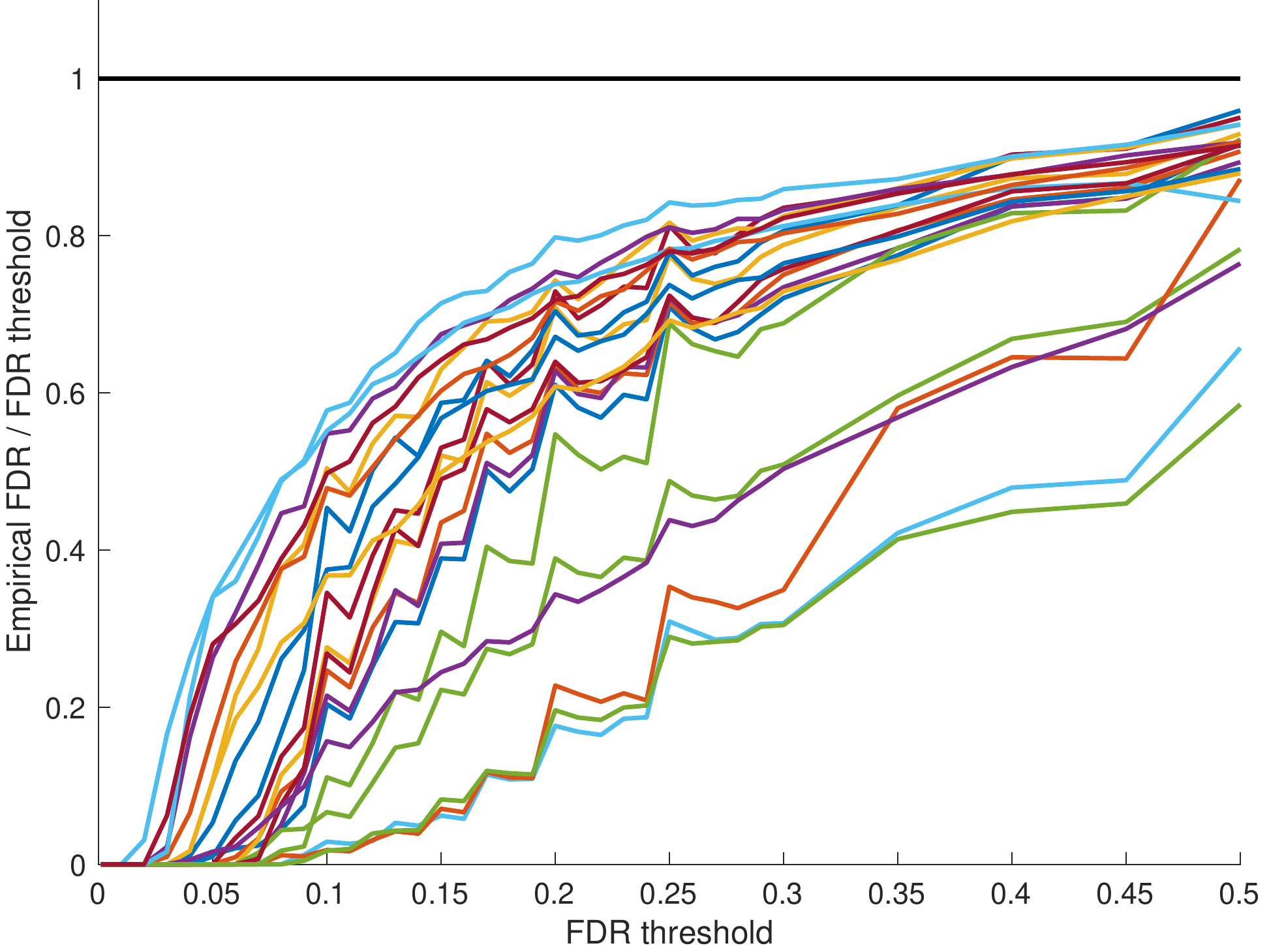}\tabularnewline
	\end{tabular}\caption{\textbf{Comparison with knockoff+ and FDR control in guaranteed settings.
		}The left column panels show the difference between the power of the
		considered method and knockoff+' power (negative values indicate knockoff+
		is more powerful at that threshold). The right column panels examine
		the ratio of the empirical FDR of the considered method (averaged
		over 1K runs) to the FDR threshold. The data for both columns consists
		of experiments in which all methods have guaranteed FDR control: $n=3000,p=1000,d=2,b=1$
		(\suppsec\ref{subsec:n3000_p1000_d1_2_b1}) and $n=800,p=200,d=3,b=1$
		(\suppsec\ref{subsec:n800_p200_d1_3_b1}). 
		(B) There is a single (essentially
		random) spike at the FDR threshold of 0.001 (0.1\%) where for that
		particular set of parameters from \suppsec\ref{subsec:n3000_p1000_d1_2_b1}
		the empirical FDR of max is just below 0.2\% so it is almost 20\%
		over the threshold.
		(E) batched-knockoff+ vs.~knockoff+. As $b=1$ here the two methods are essentially equivalent so variations
		in power are essentially random.  (F) batched-knockoff+ (knockoff+'
		FDR plot is naturally quite similar to this one). \label{fig:supp-rigor-FDR-control-2}}
\end{figure}

%

\begin{figure}
	\centering %
	\begin{tabular}{ll}
		A. Power of mirror ($n=600,p=200$) & B. FDR of mirror\tabularnewline
		\includegraphics[width=3in]{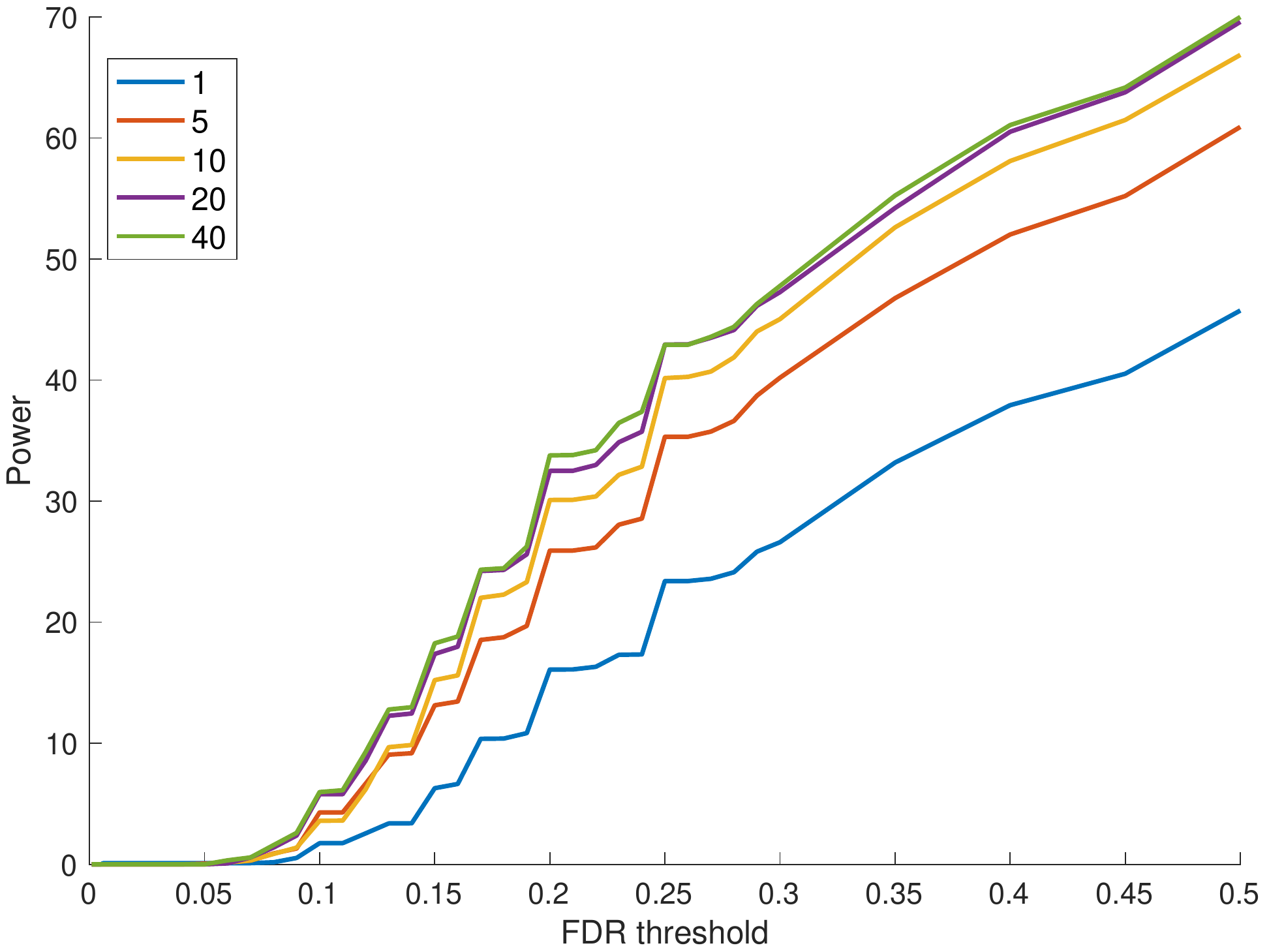} & \includegraphics[width=3in]{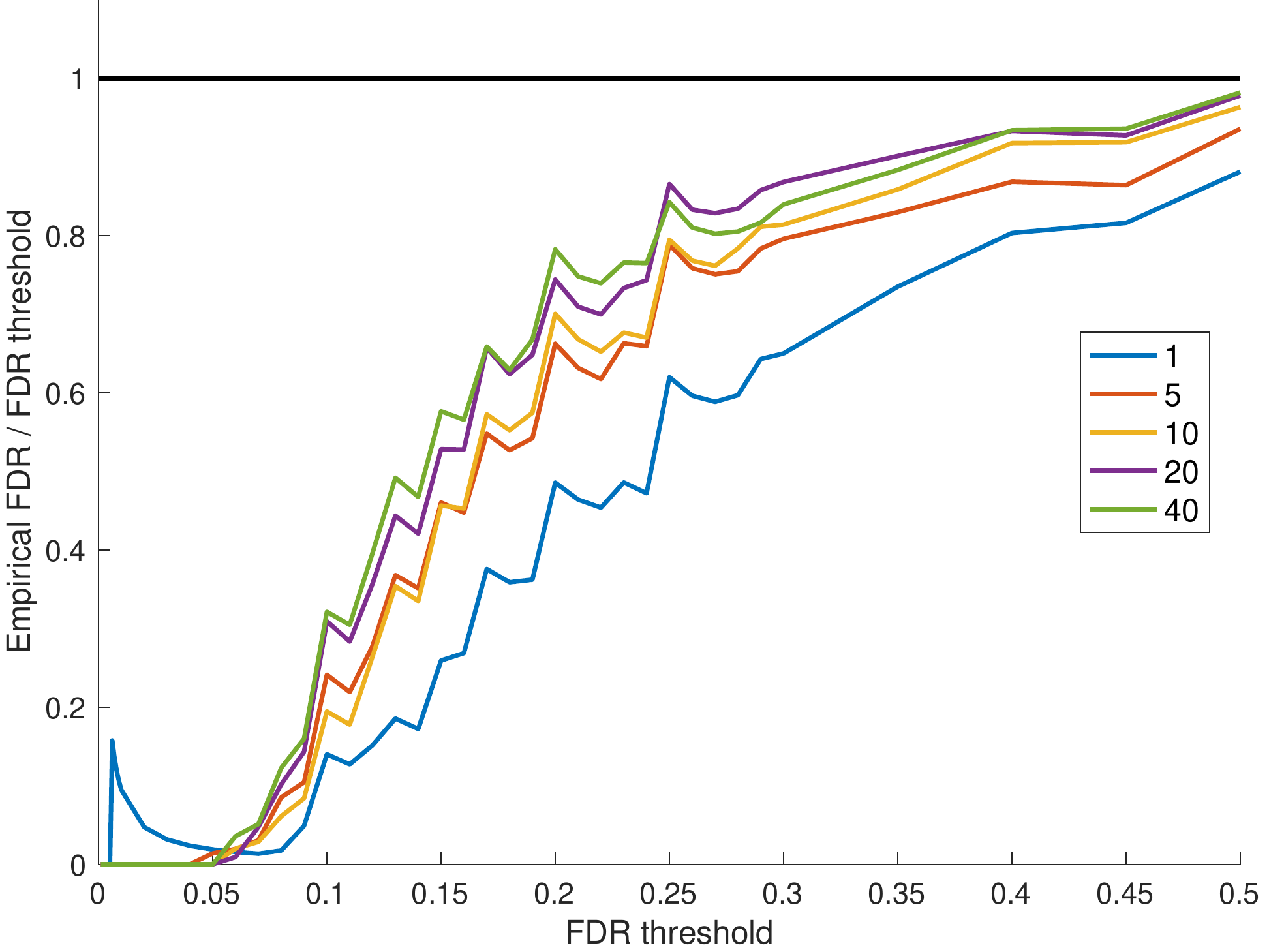}\tabularnewline
		C. Power of batched-knockoff+ & D. FDR of batched-knockoff+\tabularnewline
		\includegraphics[width=3in]{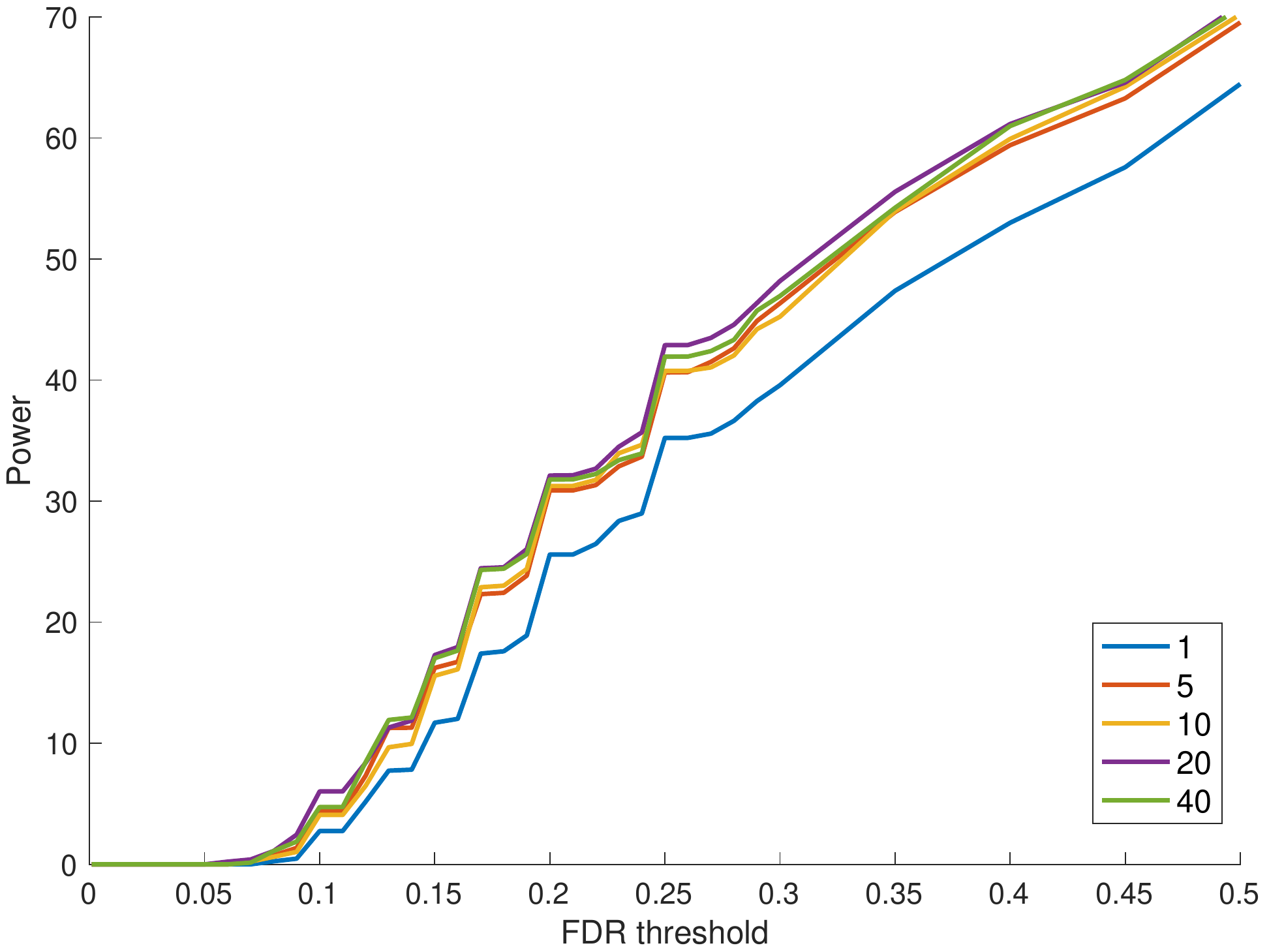} & \includegraphics[width=3in]{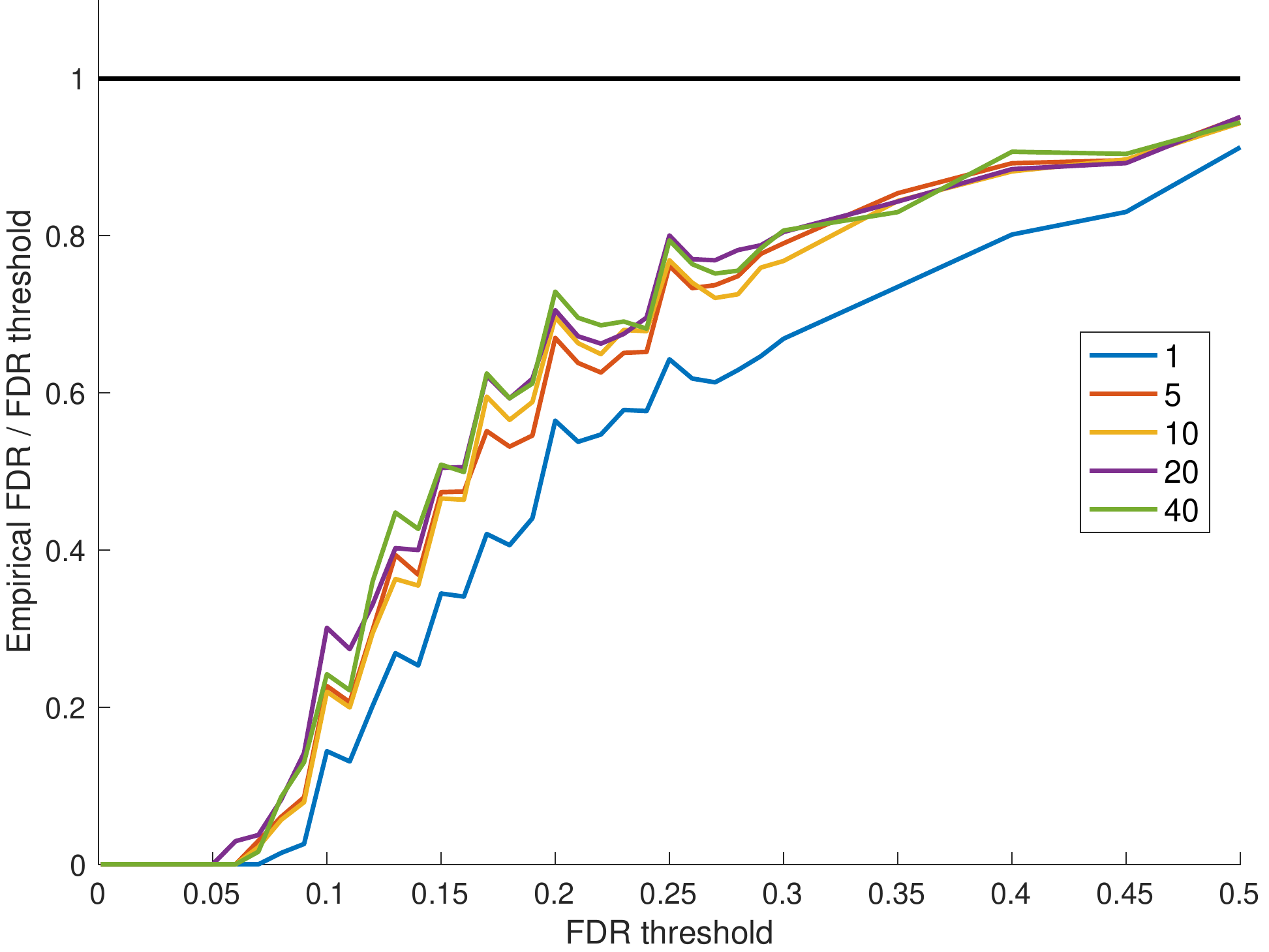}\tabularnewline
		E. Power of knockoff+ (variations are random) & F. FDR of max\tabularnewline
		\includegraphics[width=3in]{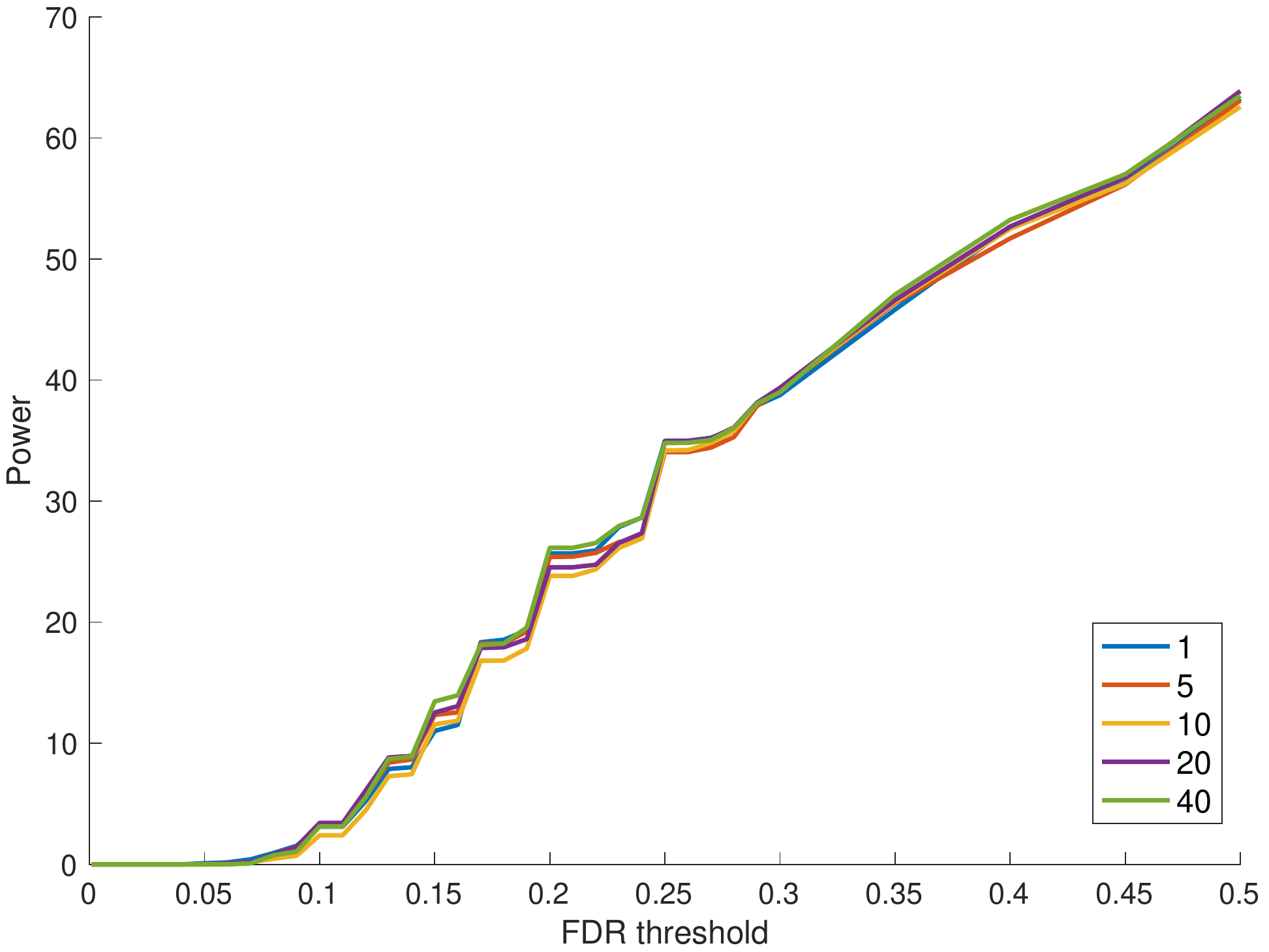} & \includegraphics[width=3in]{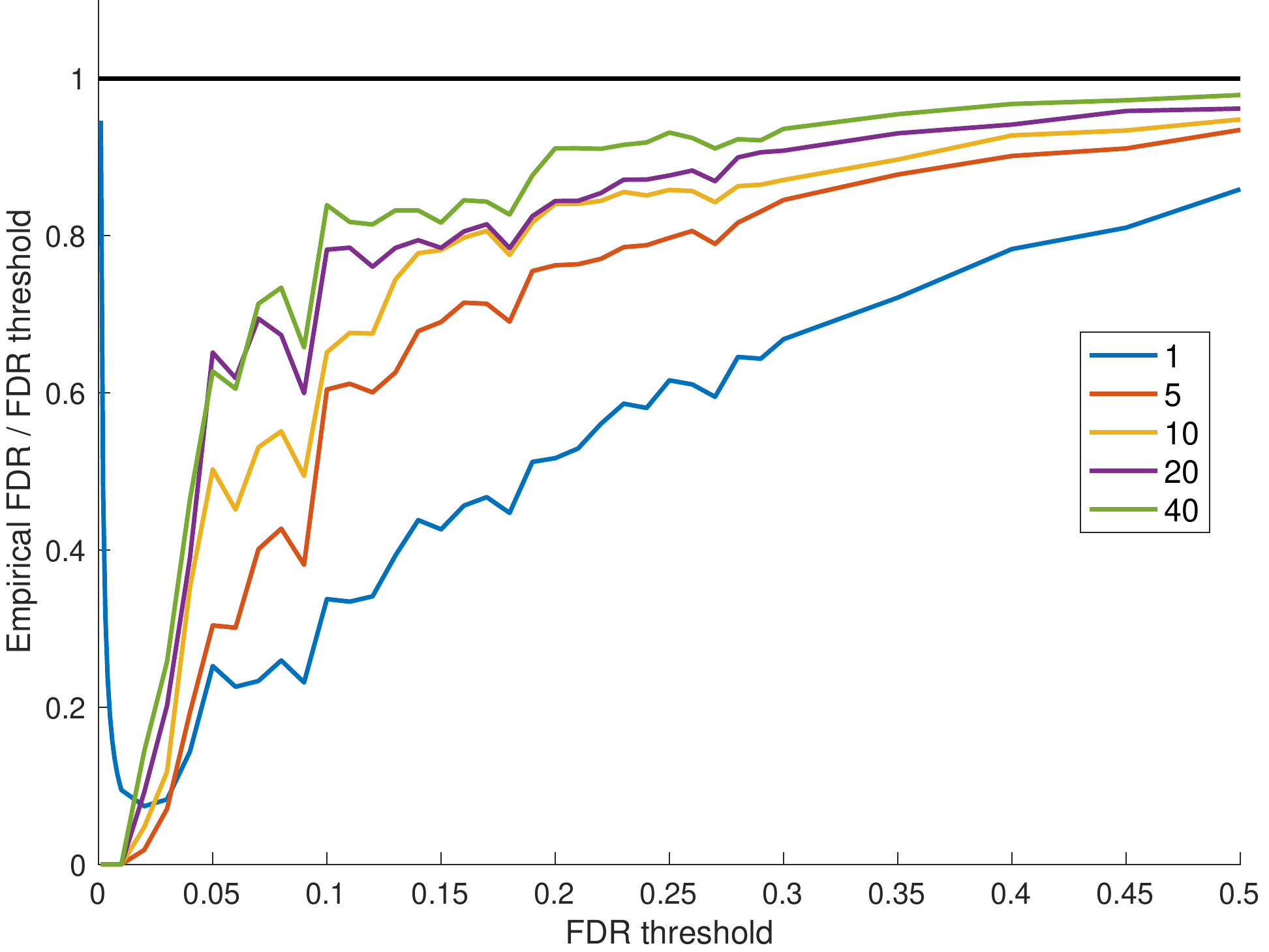}\tabularnewline
	\end{tabular}\caption{\textbf{Power increases with the number of batches (I).}
		Each of the left column panels shows the power of one method applied
		using a different number of batches $b\in\left\{ 1,5,10,20,40\right\} $
		to construct the knockoffs. The design of the experiment involved
		randomly drawing a new set of 1K datasets with for each value of $b$
		($n=600,p=200$, \suppsec\ref{subsec:n600_p200_d11_varying_b}).
		Each of the right column panels shows the ratio of the empirical FDR
		to the FDR threshold. (E) knockoff+ is not affected by the number
		of batches hence the observed power variations give us some idea on
		the magnitude of the differences due to the randomly generated datasets.\label{fig:supp_varying_b_n600}}
\end{figure}

\begin{figure}
	\centering %
	\begin{tabular}{ll}
		A. Power of max ($n=3000,p=1000$) & B. FDR of max\tabularnewline
		\includegraphics[width=3in]{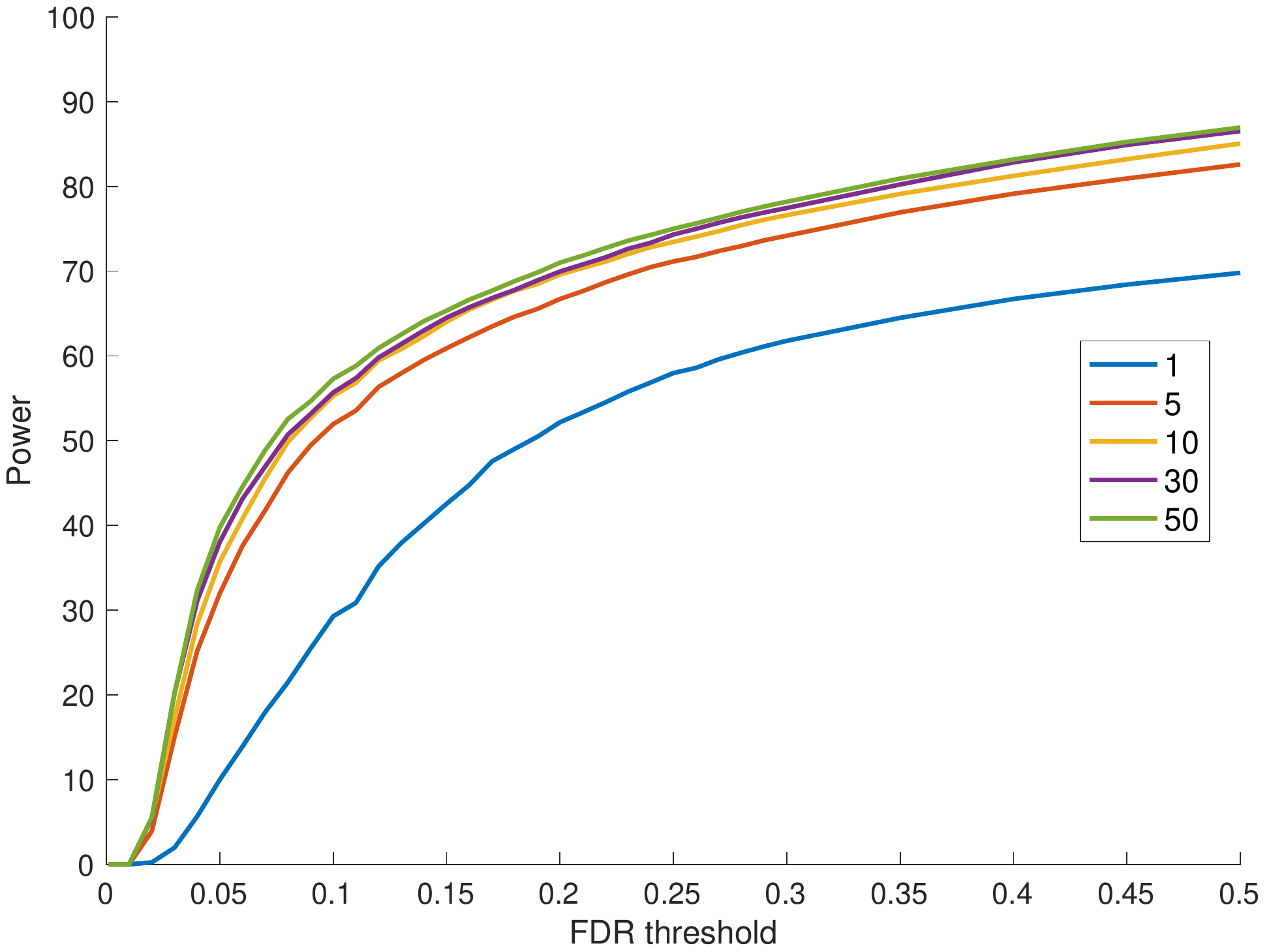} & \includegraphics[width=3in]{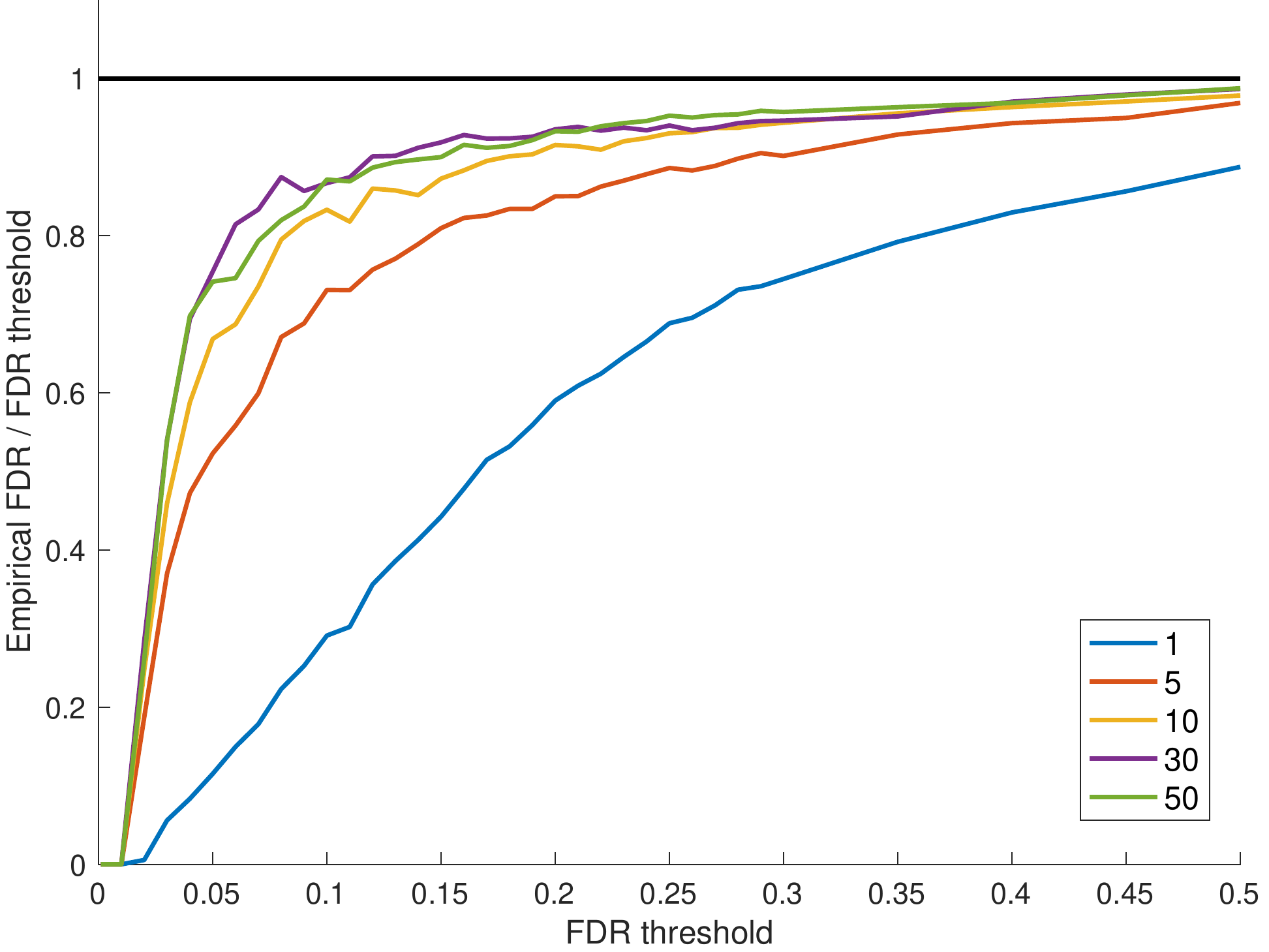}\tabularnewline
		C. Power of batched-knockoff+ & D. FDR of batched-knockoff+\tabularnewline
		\includegraphics[width=3in]{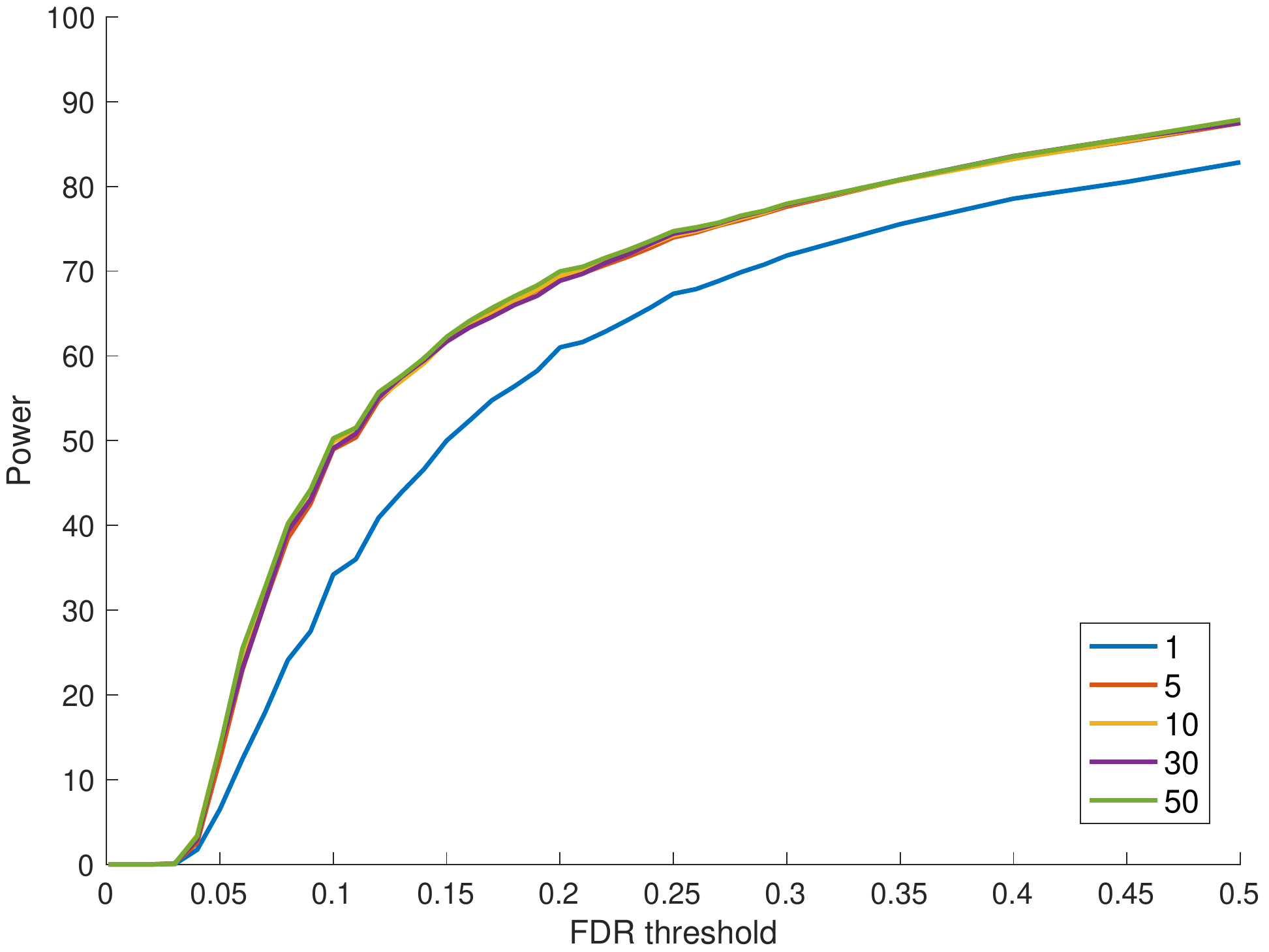} & \includegraphics[width=3in]{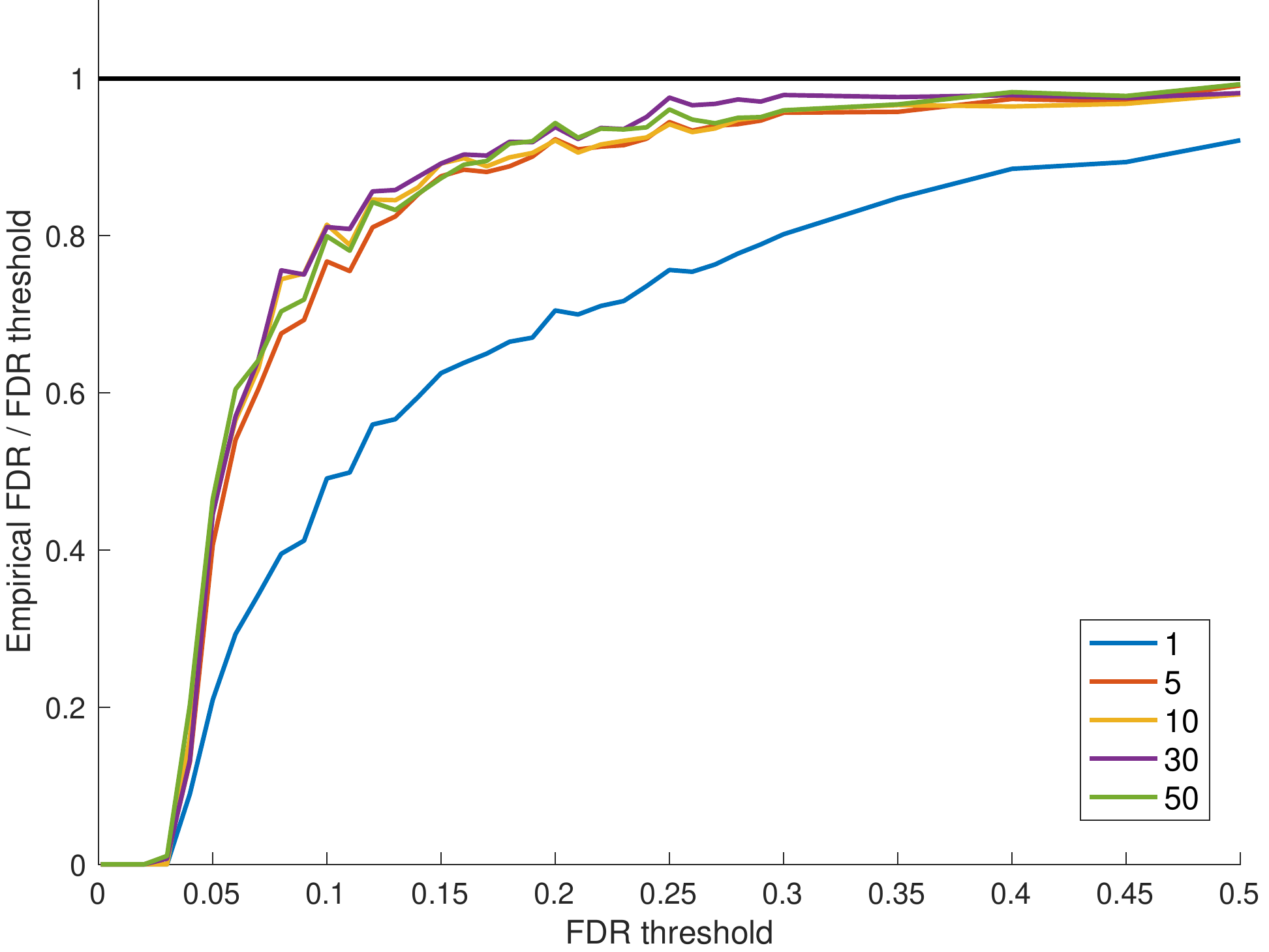}\tabularnewline
		E. Power of knockoff+ (variations are random) & F. FDR of mirror\tabularnewline
		\includegraphics[width=3in]{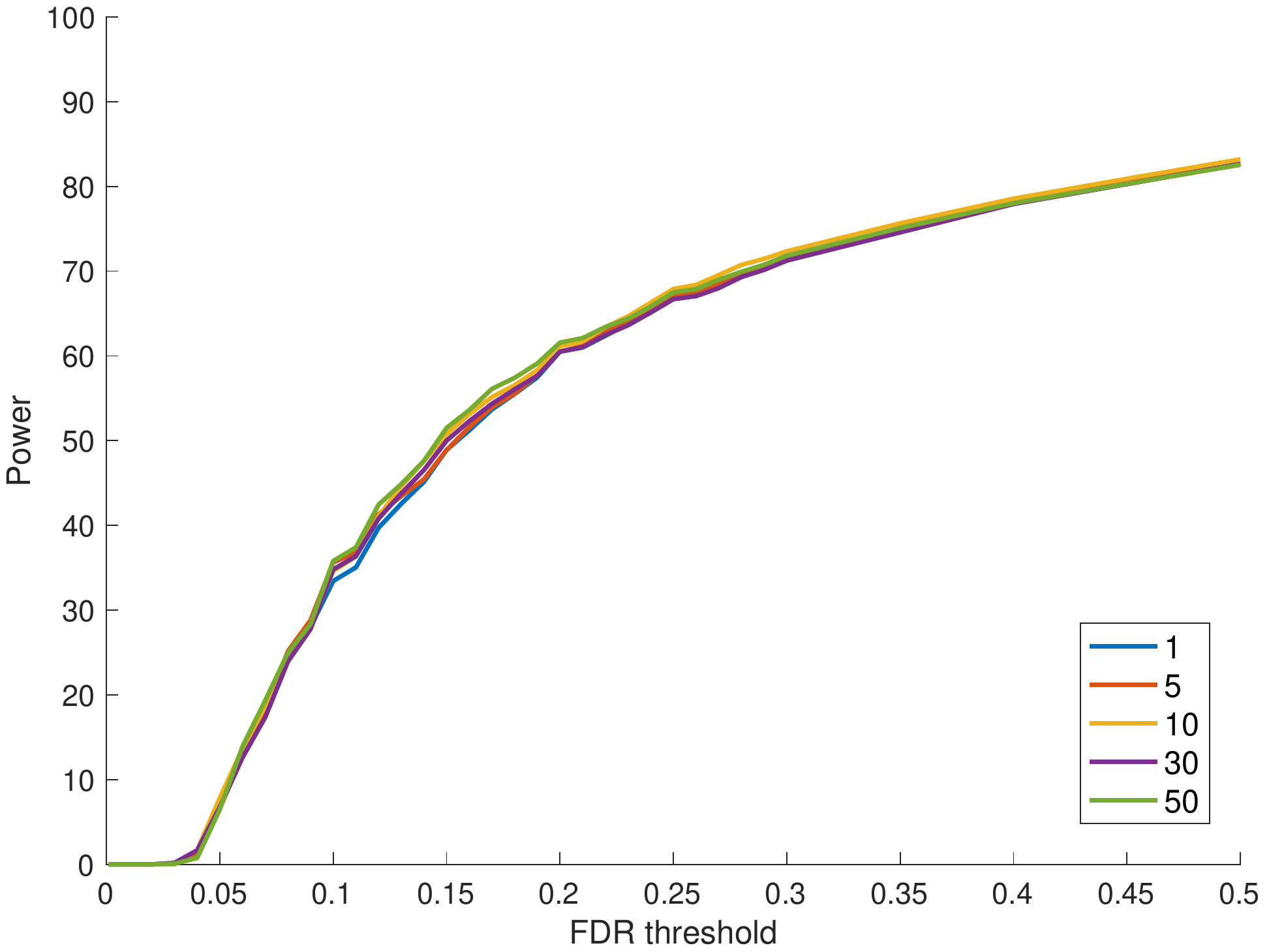} & \includegraphics[width=3in]{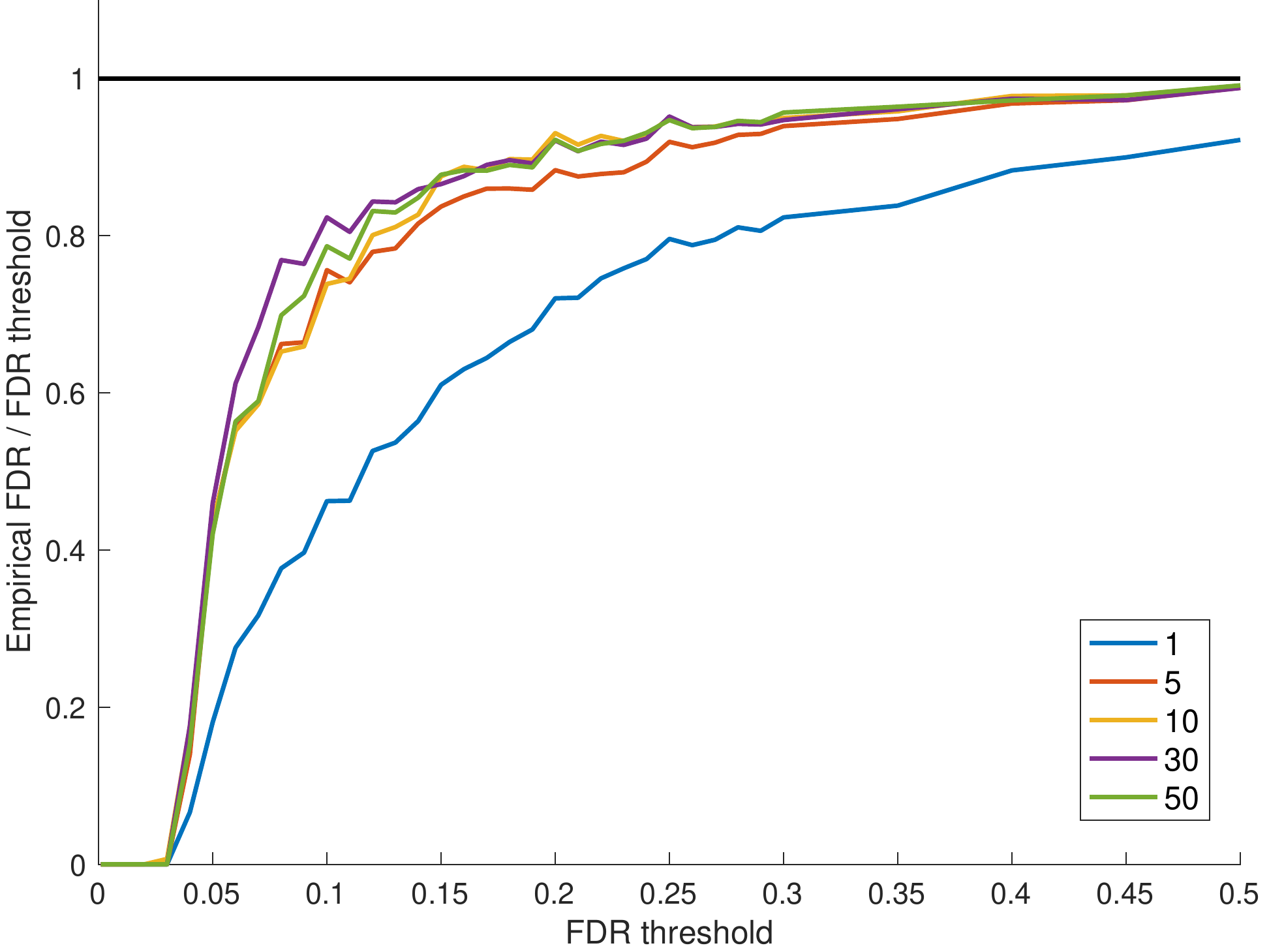}\tabularnewline
	\end{tabular}\caption{\textbf{Power increases with the number of batches (II).}
		Same as \suppfig\ref{fig:supp_varying_b_n600} except $b\in\left\{ 1,5,10,30,50\right\} $
		and $n=3000,p=1000$ (\suppsec\ref{subsec:n3000_p1000_d3_varying_b}).\label{fig:supp_varying_b_n3000}}
\end{figure}

\begin{figure}
	\centering %
	\begin{tabular}{ll}
		A. Max using $b=1$ vs.~$b=40$ batches & B. Mirror\tabularnewline
		\includegraphics[width=3in]{figures/n800_nsko1_3_nboot32_r2376_r2396_batch_paired_power_diff_max} & \includegraphics[width=3in]{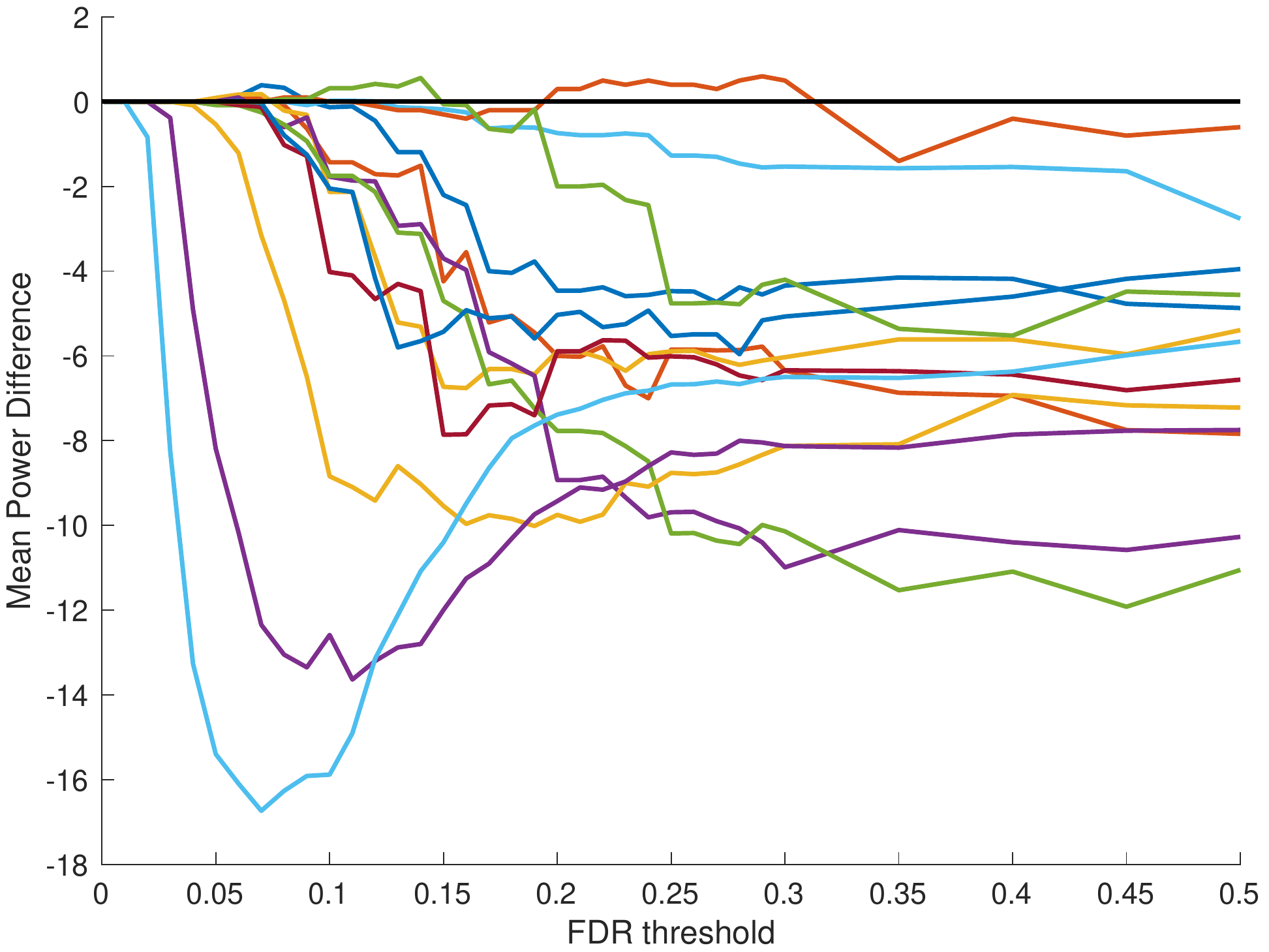}\tabularnewline
		C. batched-knockoff+ & D. knockoff+ \tabularnewline
		\includegraphics[width=3in]{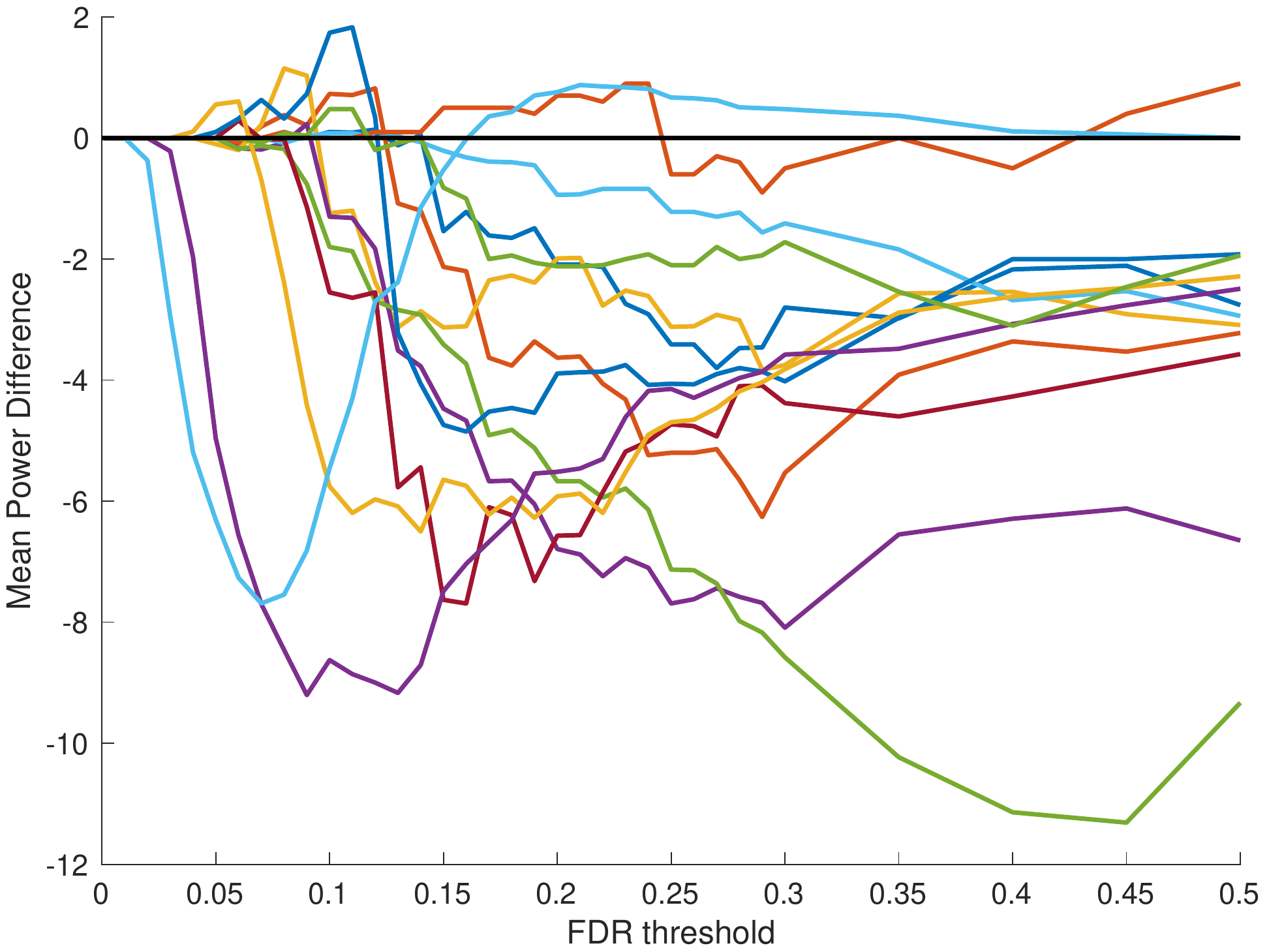} & \includegraphics[width=3in]{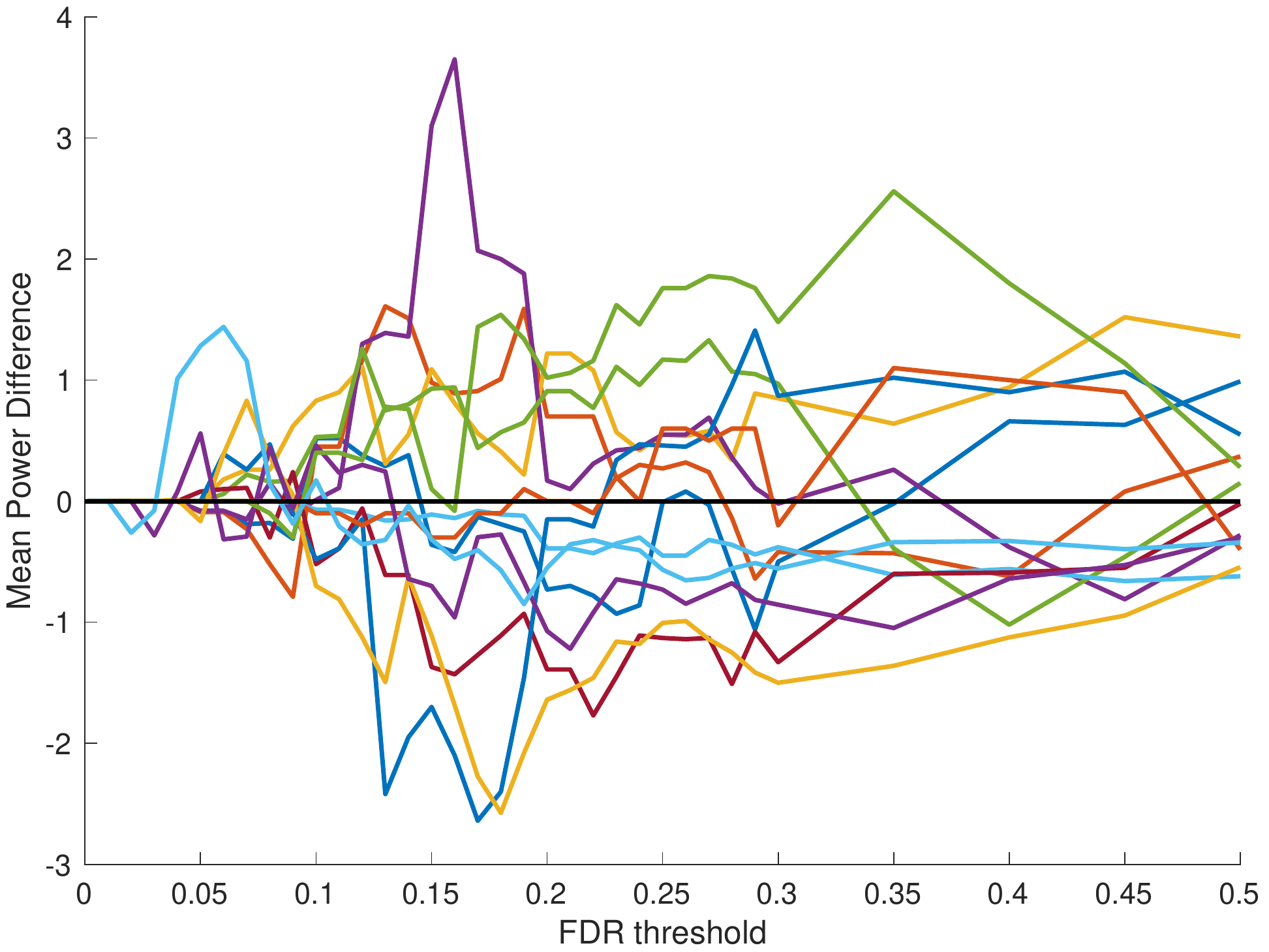}\tabularnewline
	\end{tabular}\caption{\textbf{Batching increases the power (single vs.~40 batches). }Each
		of the panels shows the difference in the power of one method applied to
		multiple datasets using $b=1$ ($n=800,p=200$, \suppsec\ref{subsec:n800_p200_d1_3_b1}),
		and $b=40$ batches ($n=800,p=200$, \suppsec\ref{subsec:n800_p200_d1_3_b40}).
		The design of the experiment involved drawing a new set of 1K datasets for each value of $b$.
		(D) knockoff+' does not use batching so variations are simply due to the differences in the randomly drawn datasets. \label{fig:supp_batching_power_1vs40}}
\end{figure}

\begin{figure}
	\centering %
	\begin{tabular}{ll}
		A. Empirical FDR (max) & B. Empirical FDR (mirror)\tabularnewline
		\includegraphics[width=3in]{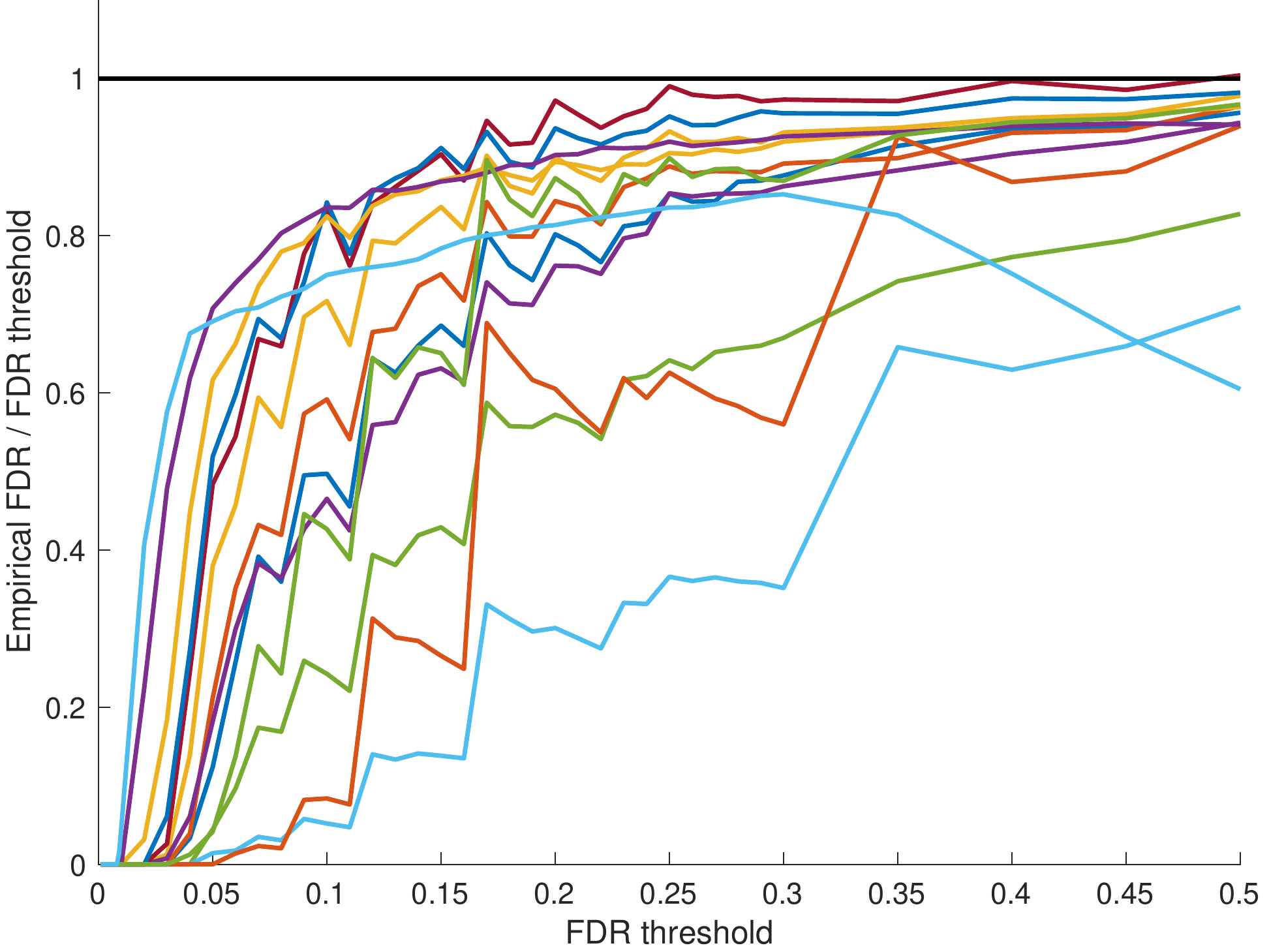} & \includegraphics[width=3in]{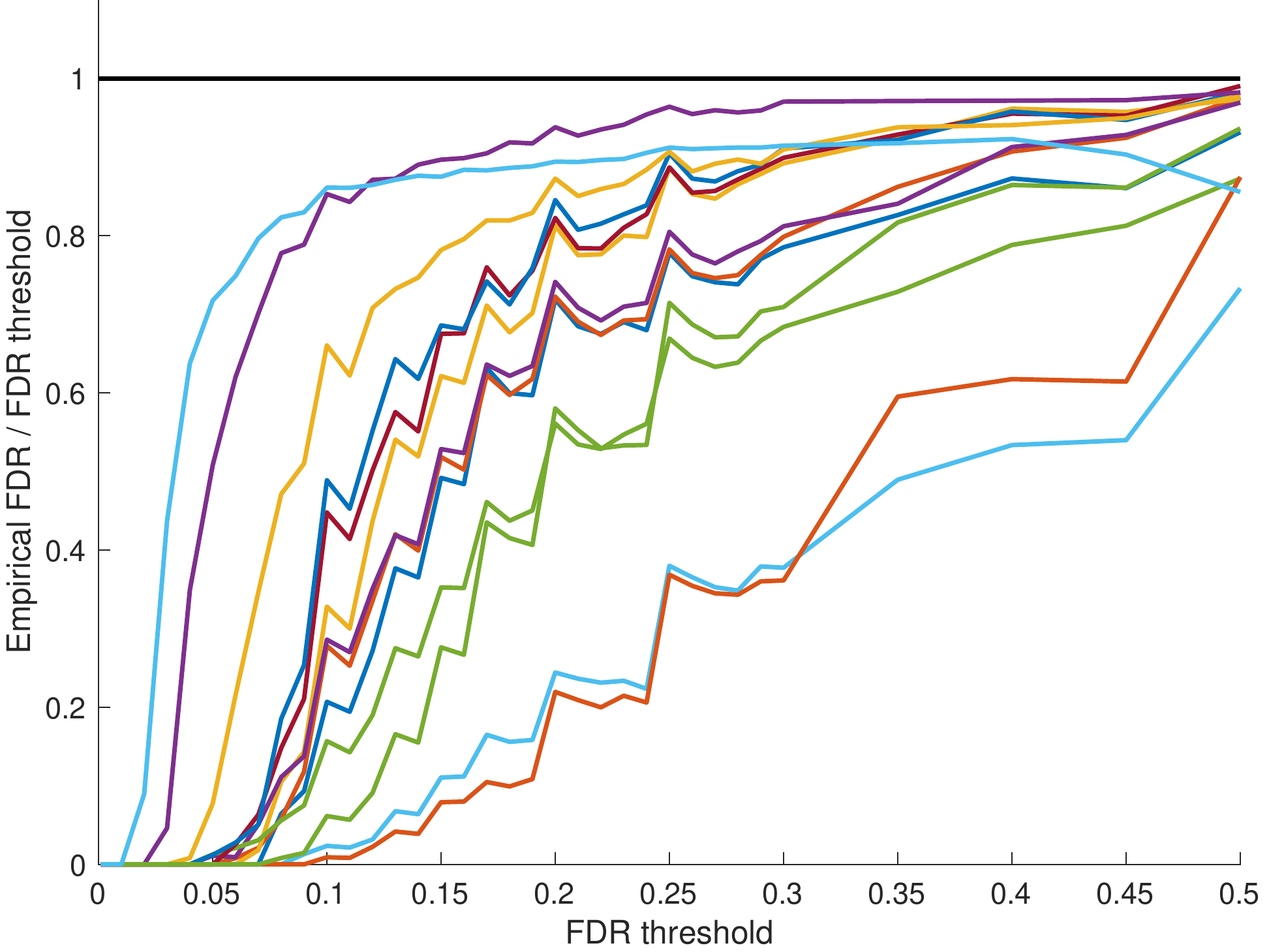}\tabularnewline
		C. Empirical FDR (batched-knockoff+) & D. Empirical FDR (knockoff+)\tabularnewline
		\includegraphics[width=3in]{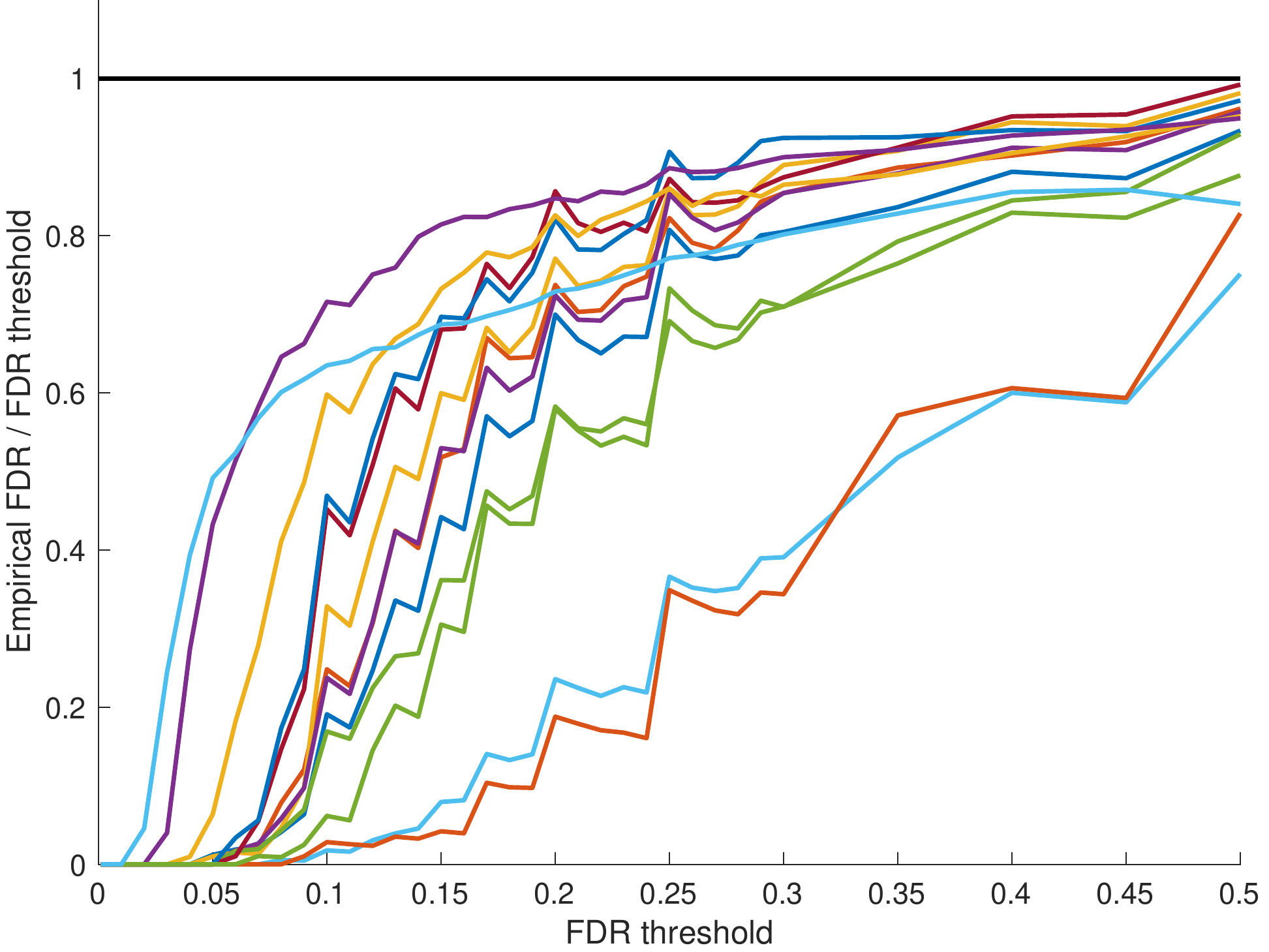} & \includegraphics[width=3in]{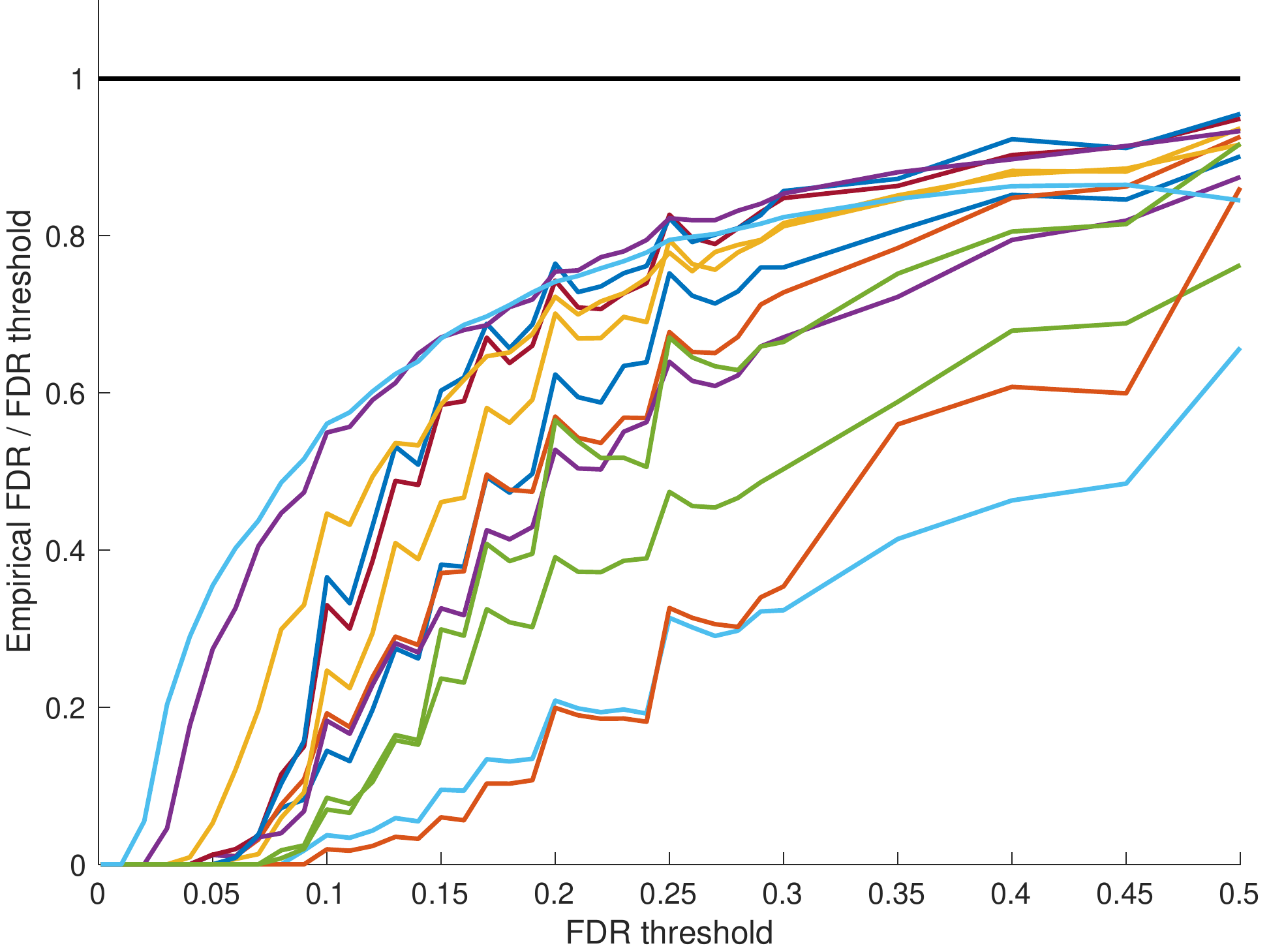}\tabularnewline
	\end{tabular}\caption{\textbf{Batching does not seem to compromise the finite sample FDR control. }Each of the
		panels shows the ratio of the empirical FDR to the FDR threshold of
		one method applied to multiple datasets using $b=40$ batches ($n=800,p=200$,
		\suppsec\ref{subsec:n800_p200_d1_3_b40}). The graphs show that
		in all the cases the methods seem to essentially control the FDR.
		(D) knockoff+' control of the FDR is guaranteed in this setting where $n\ge2p$. \label{fig:supp_batching_power_1vs40_FDR}}
\end{figure}

\begin{figure}
	\centering %
	\begin{tabular}{ll}
		A. Max vs.~knockoff+ & B. Empirical FDR (max)\tabularnewline
		\includegraphics[width=3in]{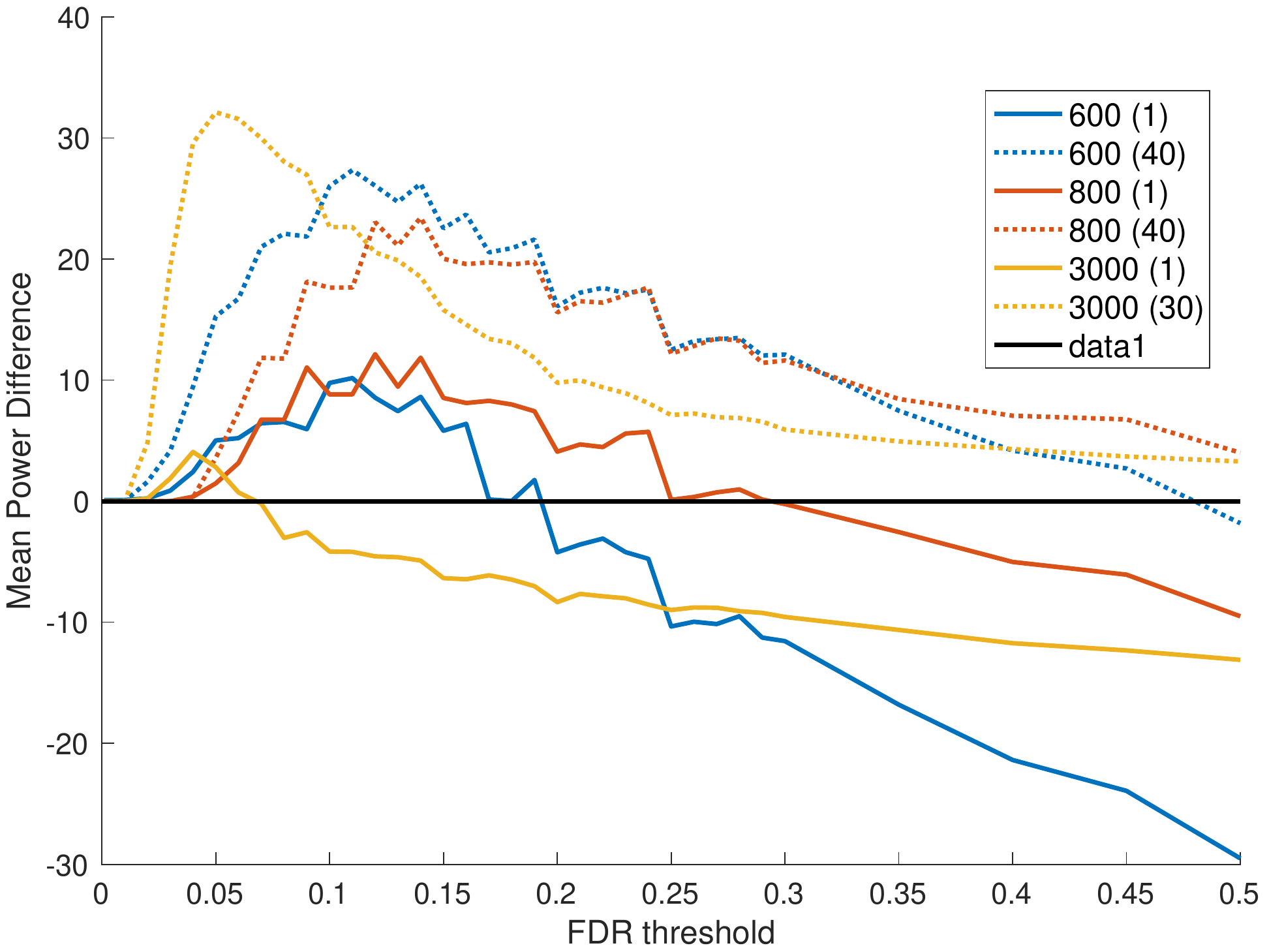} & \includegraphics[width=3in]{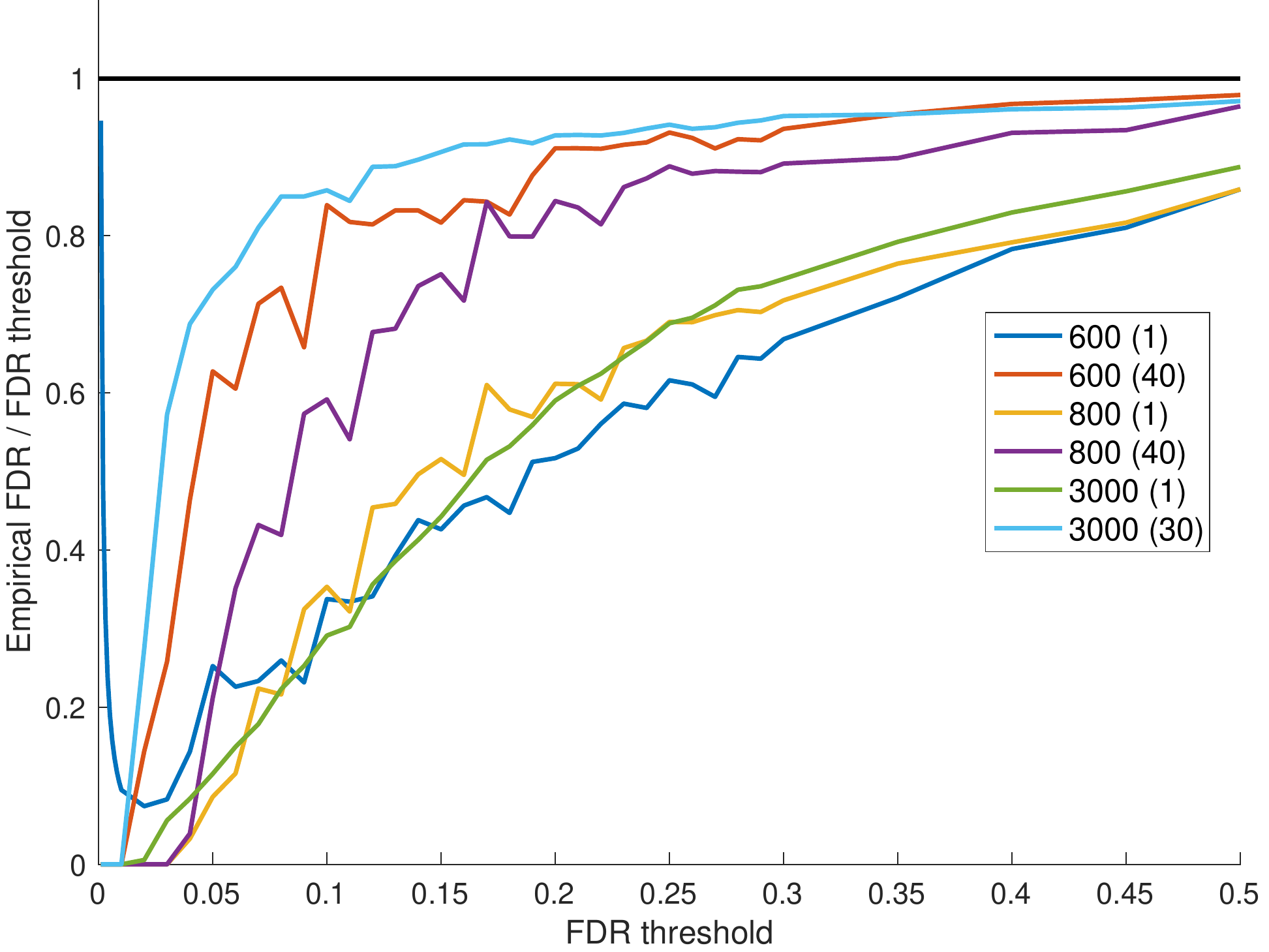}\tabularnewline
		C. Mirror vs.~knockoff+ & D. Empirical FDR (mirror)\tabularnewline
		\includegraphics[width=3in]{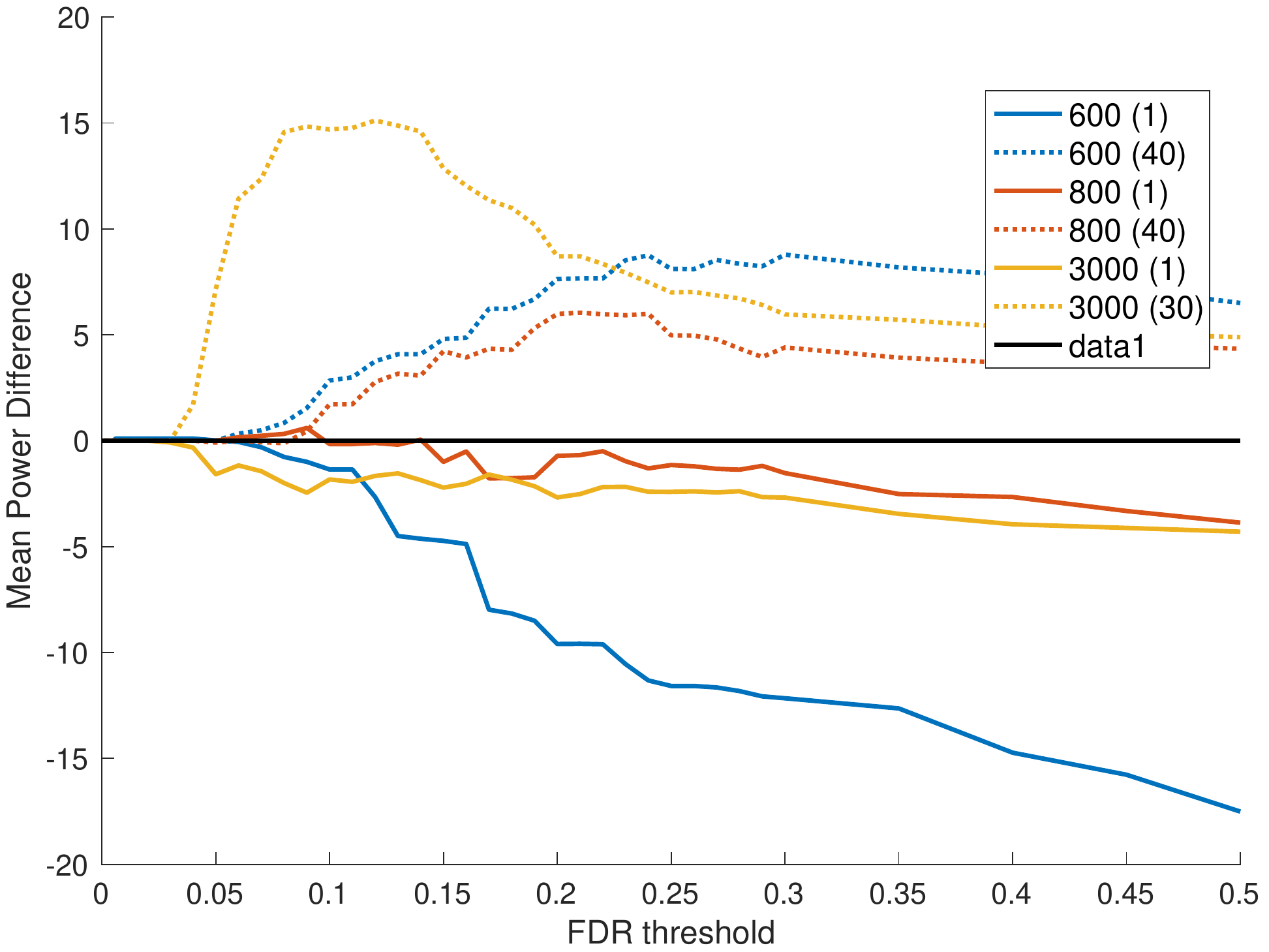} & \includegraphics[width=3in]{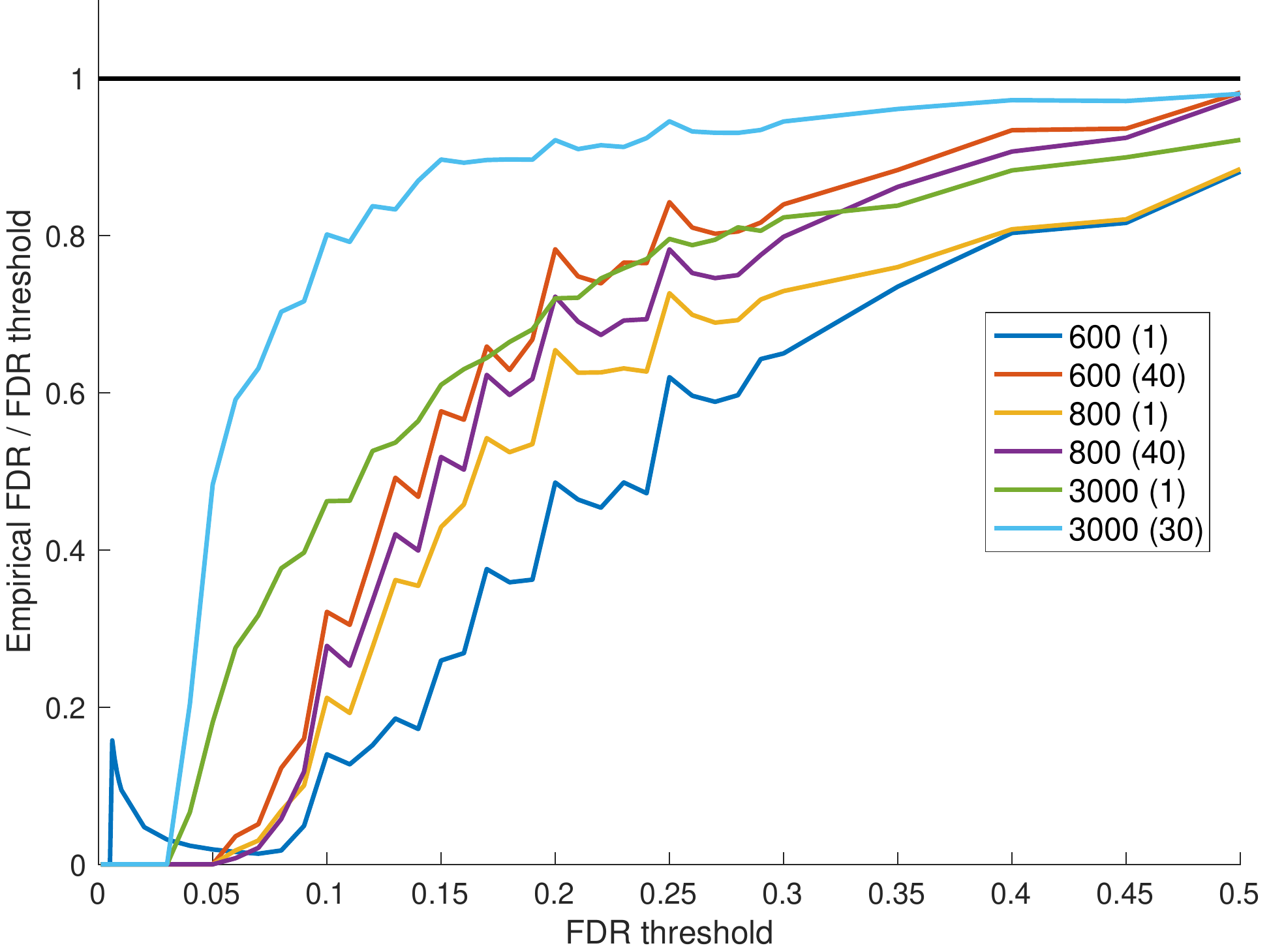}\tabularnewline
		E. Batched-knockoff+ vs.~knockoff+ & F. Empirical FDR (batched-knockoff+)\tabularnewline
		\includegraphics[width=3in]{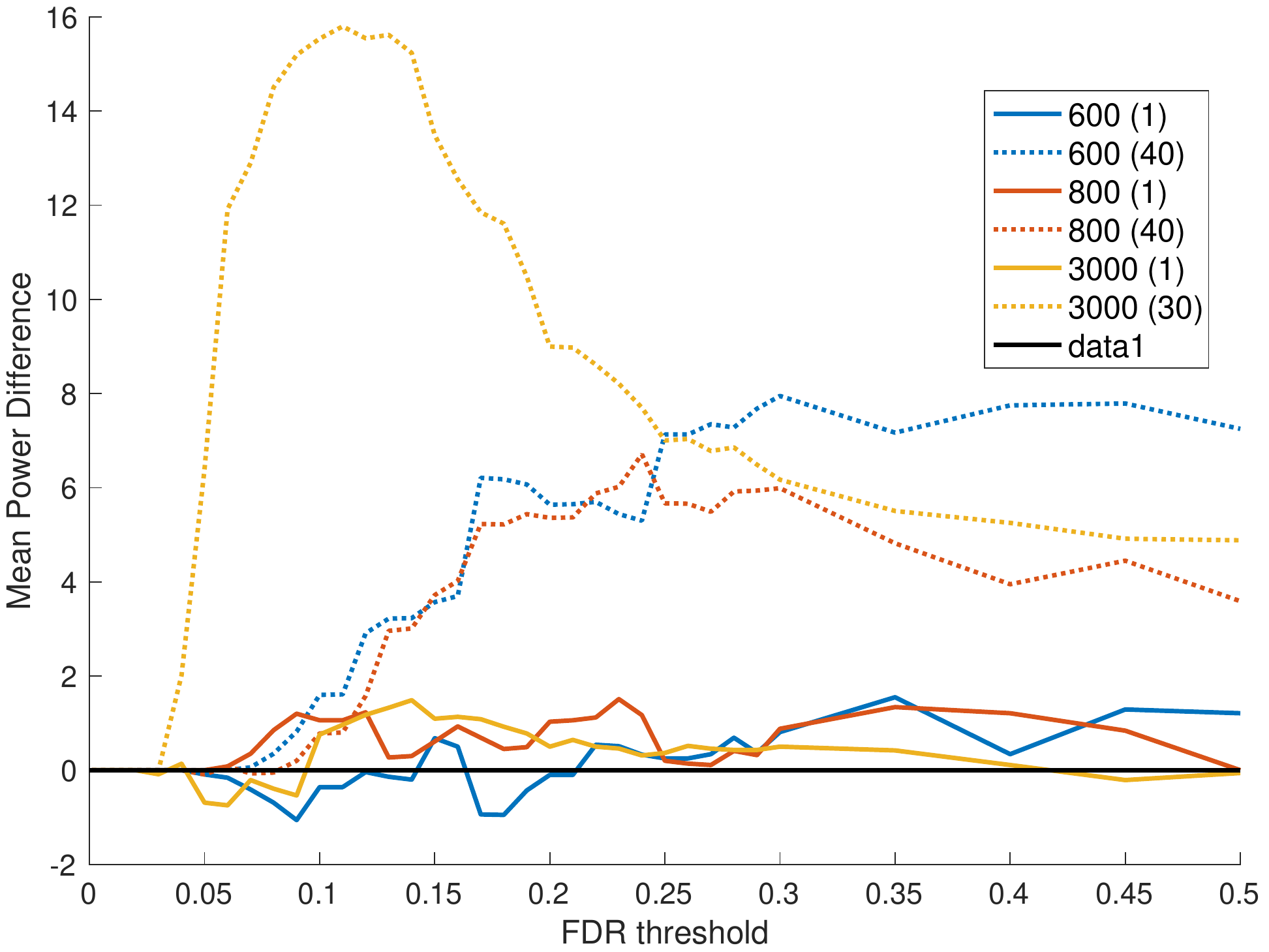} & \includegraphics[width=3in]{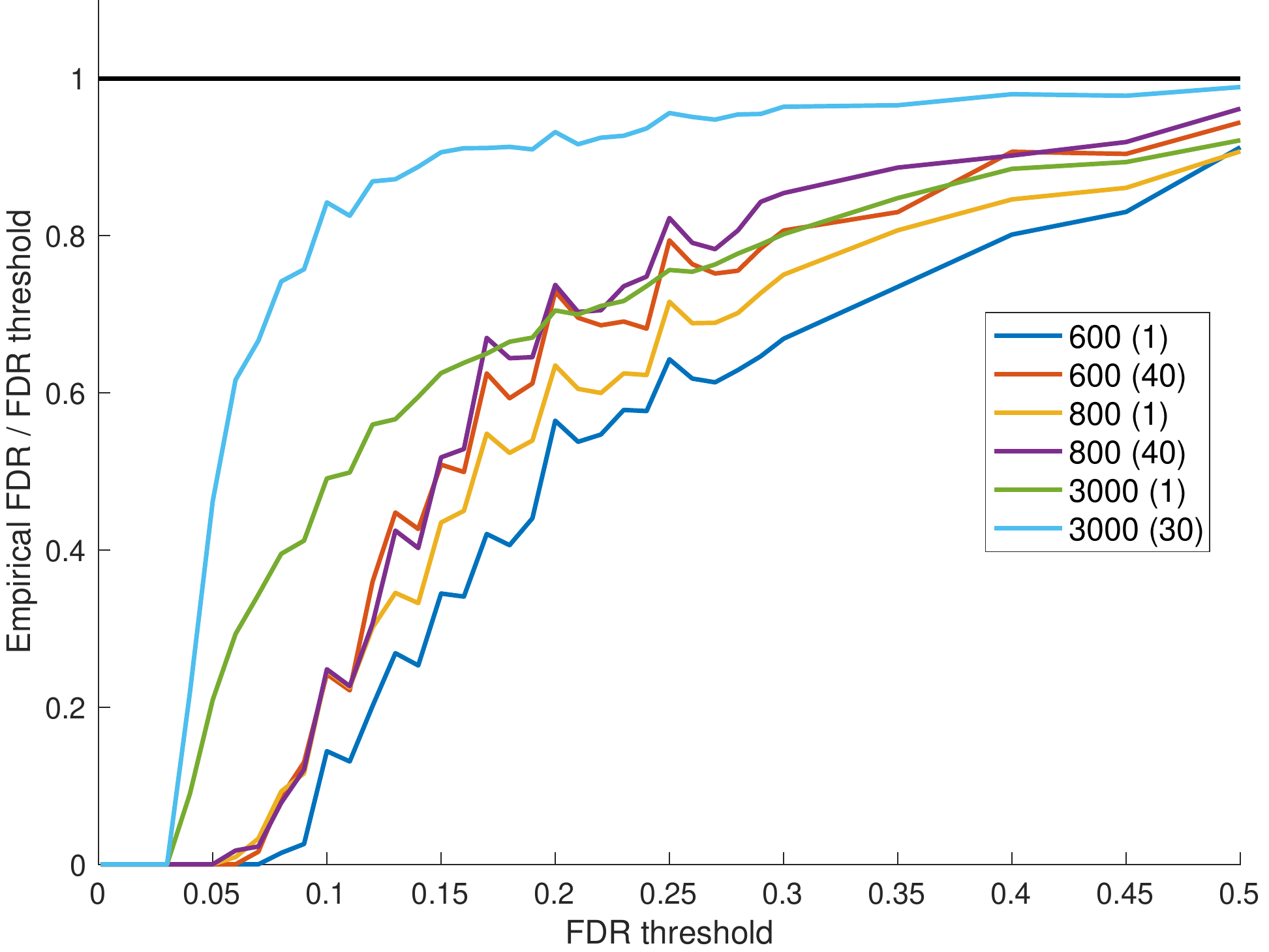}\tabularnewline
	\end{tabular}\caption{\textbf{When the batching effect is more pronounced. }Each of the left column panels
		shows the difference in the power of one method vs.~knockoff+ applied
		using either $b=1$ or $b>1$ batches: $b=30$ for $n=3000,p=1000,K=30,A=3.5,d\in\left\{ 1,3\right\}$,
		and $b=40$ for $n=800,p=200,K=10,A=3.0,d\in\left\{ 1,3\right\}$,
		and for $n=600,p=200,K=10,A=2.8,d\in\left\{ 1,11\right\}$, $\rho=0$
		in all cases, (\suppsec\ref{subsec:Eclectic-bacthing-example})
		The design of the experiment involved drawing a new set of 1K datasets
		for each value of $b$ but knockoff+ and the method to which it is
		compared were applied to the \emph{same} dataset each time. Negative
		values indicate knockoff+ is more powerful. Each right column panel
		uses the same datasets as the panel to its left to show the ratio
		of the empirical FDR of the considered method to the FDR threshold.
		The ratios are all below 1 indicating the methods seem to control
		the FDR in all these cases.\label{fig:batching_larger_sets}}
\end{figure}

%

\begin{figure}[h]
	\centering %
	\begin{tabular}{ll}
		A. LBM vs.~multi-knockoff ($d\ge7$) & B. LBM vs.~multi-knockoff ($d\le3$)\tabularnewline
		\protect\includegraphics[width=3in]{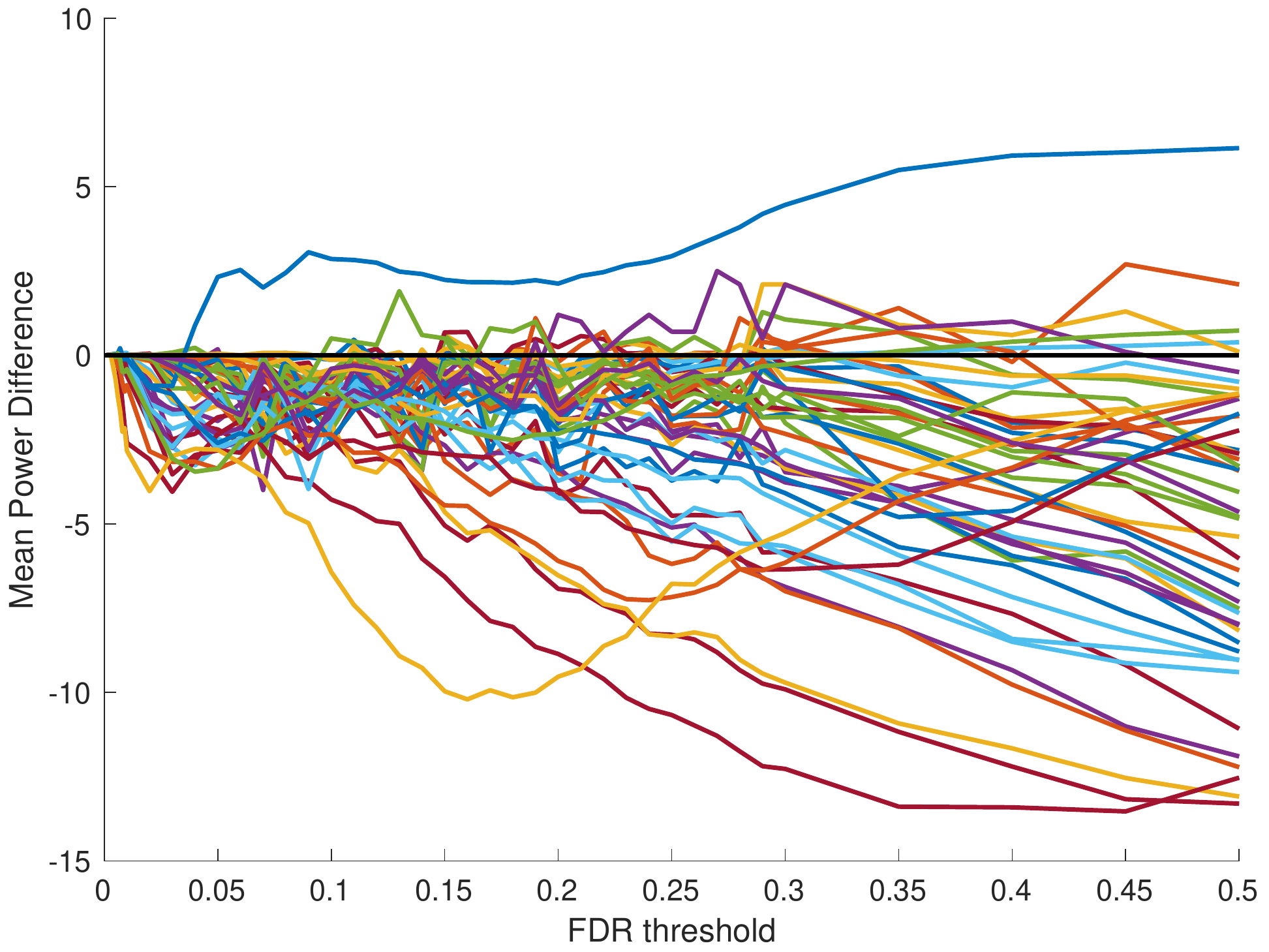} & \protect\includegraphics[width=3in]{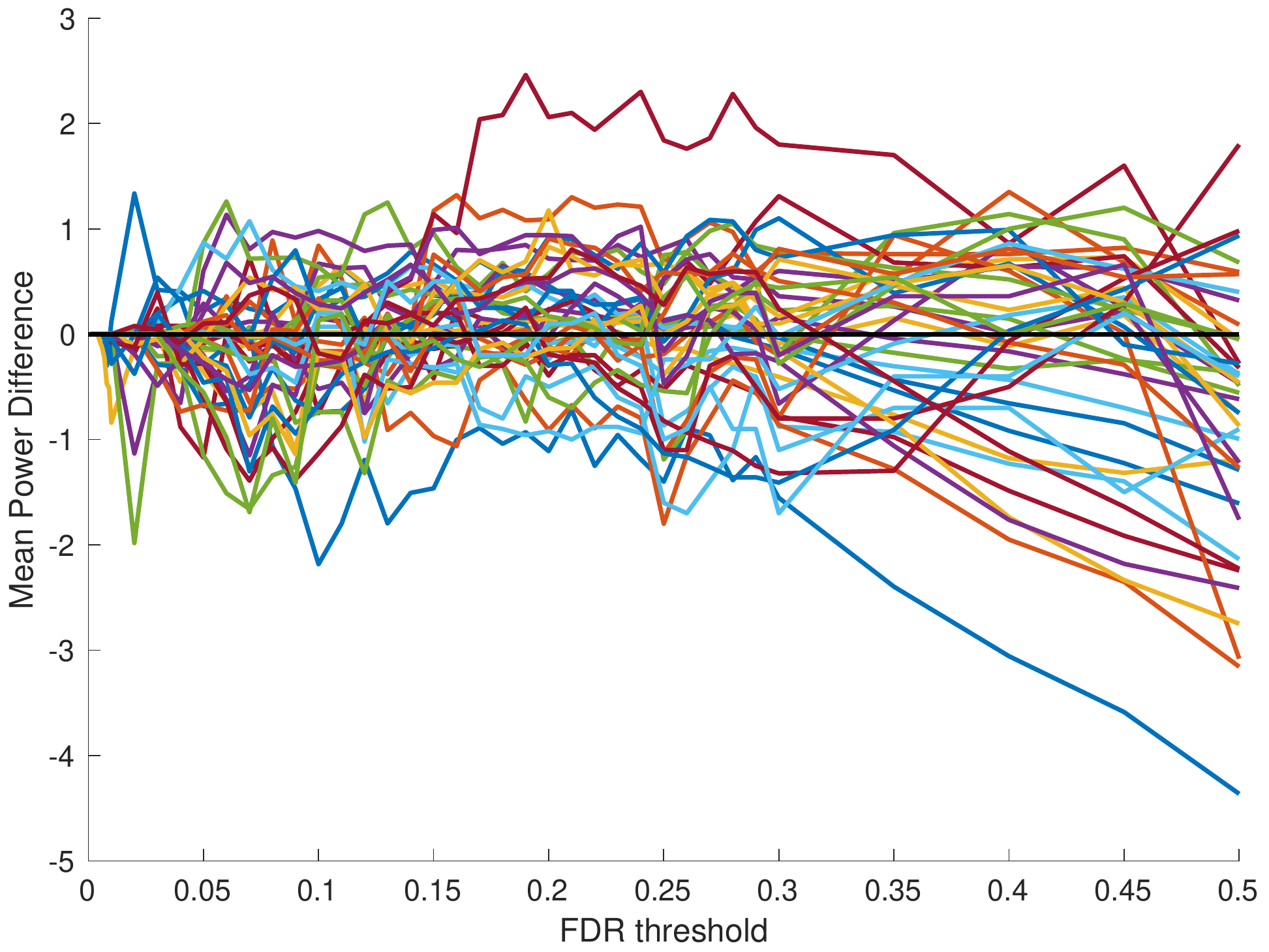}\tabularnewline
		C. Empirical FDR (multi-knockoff) & D. Empirical FDR (LBM)\tabularnewline
		\protect\includegraphics[width=3in]{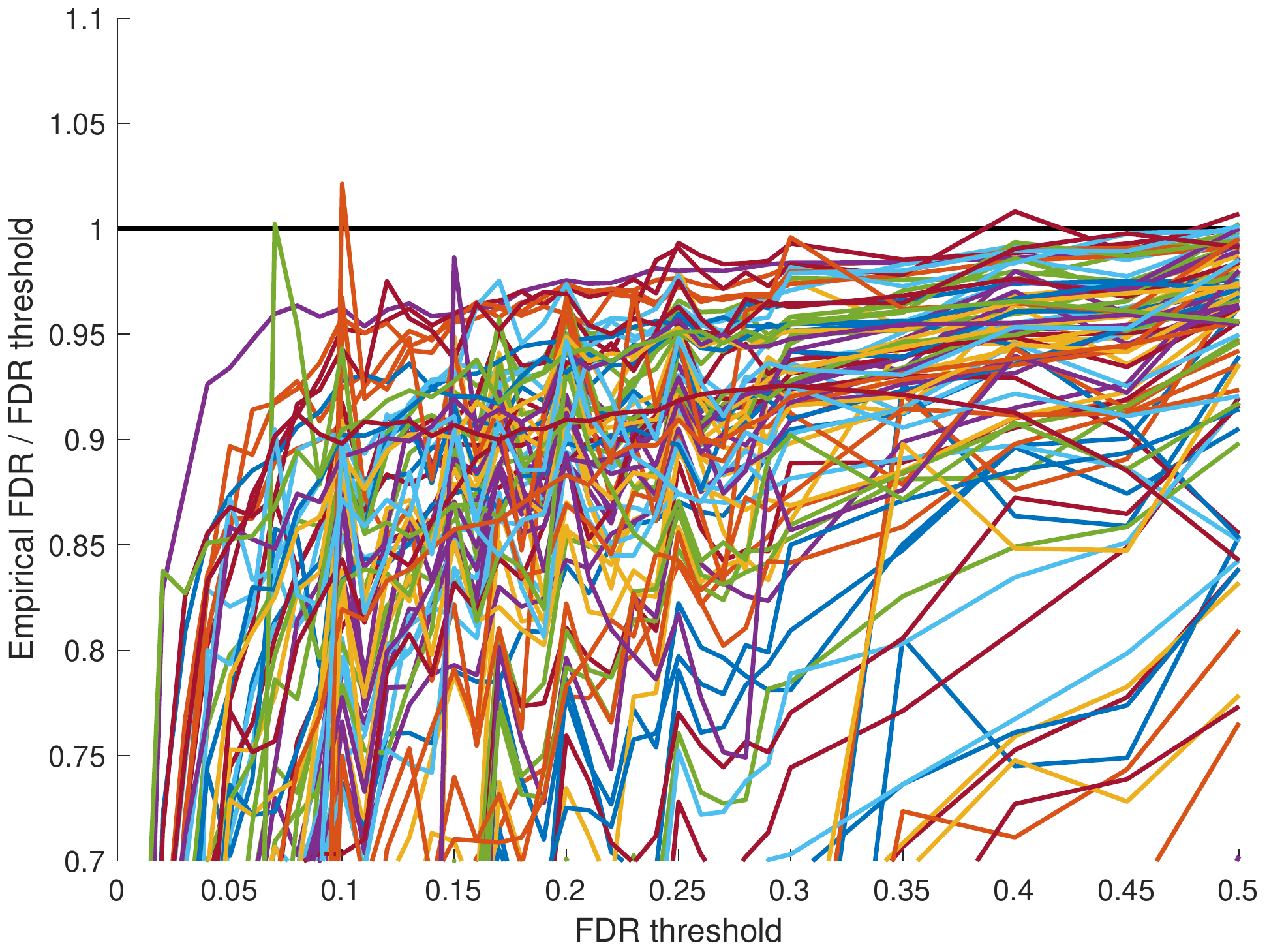} & \protect\includegraphics[width=3in]{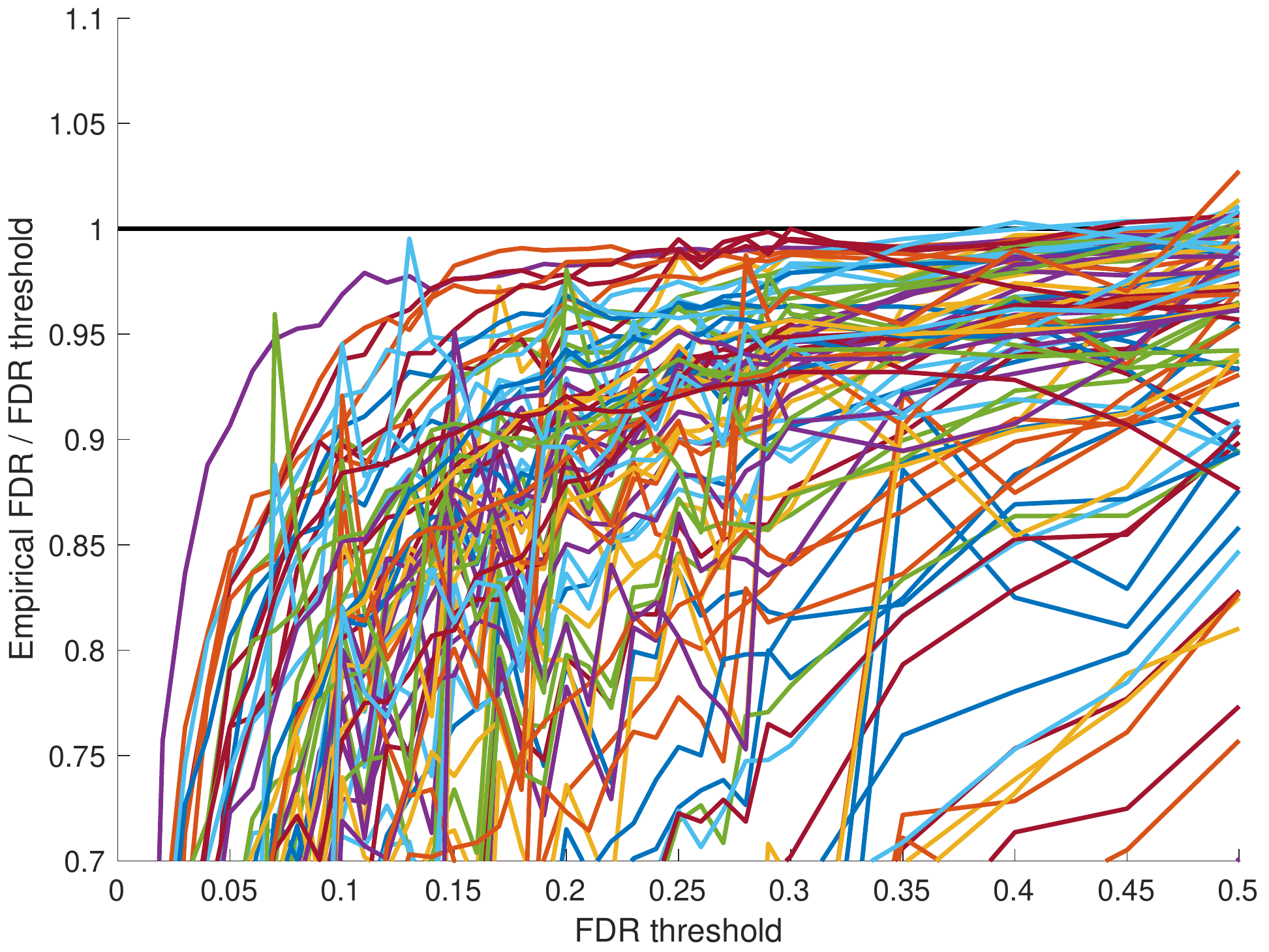}\tabularnewline
	\end{tabular}\protect\caption{\textbf{Multi-knockoff vs.~LBM. }Comparison of the two resampling-based methods
		for selecting the ($c,\lam$) tuning parameters.
		Both methods were applied to all the datasets in our combined collection
		of experiments, which spans a wide range of parameter values and
		is described in \suppsec\ref{subsec:Combined-dataset}. (A) Power
		difference between LBM and multi-knockoff (negative numbers mean multi-knockoff
		is better). Only experiments with $d\ge7$ knockoffs are shown. 
		The one example where LBM is moderately better than multi-knockoff
		(cyan colored) corresponds to a realistically borderline 80\% proportion
		of features in the model: $K=160$ and $p=200$ ($n=600$, $d=11$). (B)
		Same as A but with $d\le3$ (same as $d<7$ in this case).
		(C-D) Empirical FDR on the entire ``combined'' set.
		 \label{fig:mKO-vs-LBM}}
\end{figure}

\begin{figure}
	\centering %
	\begin{tabular}{ll}
		A. Knockoff+ vs.~multi-knockoff & B. Batched-knockoff+ vs.~multi-knockoff\tabularnewline
		\protect\includegraphics[width=3in]{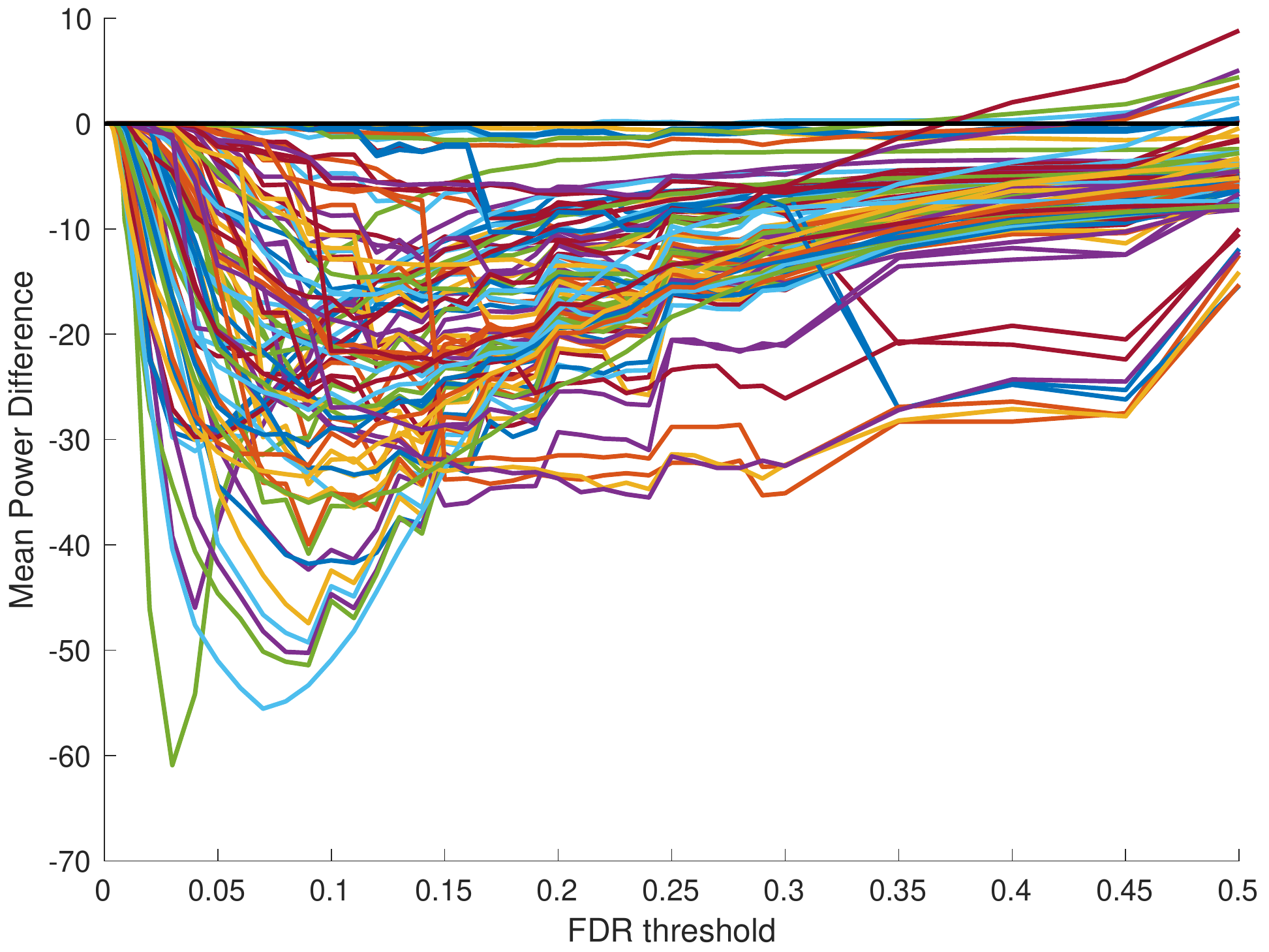} & \protect\includegraphics[width=3in]{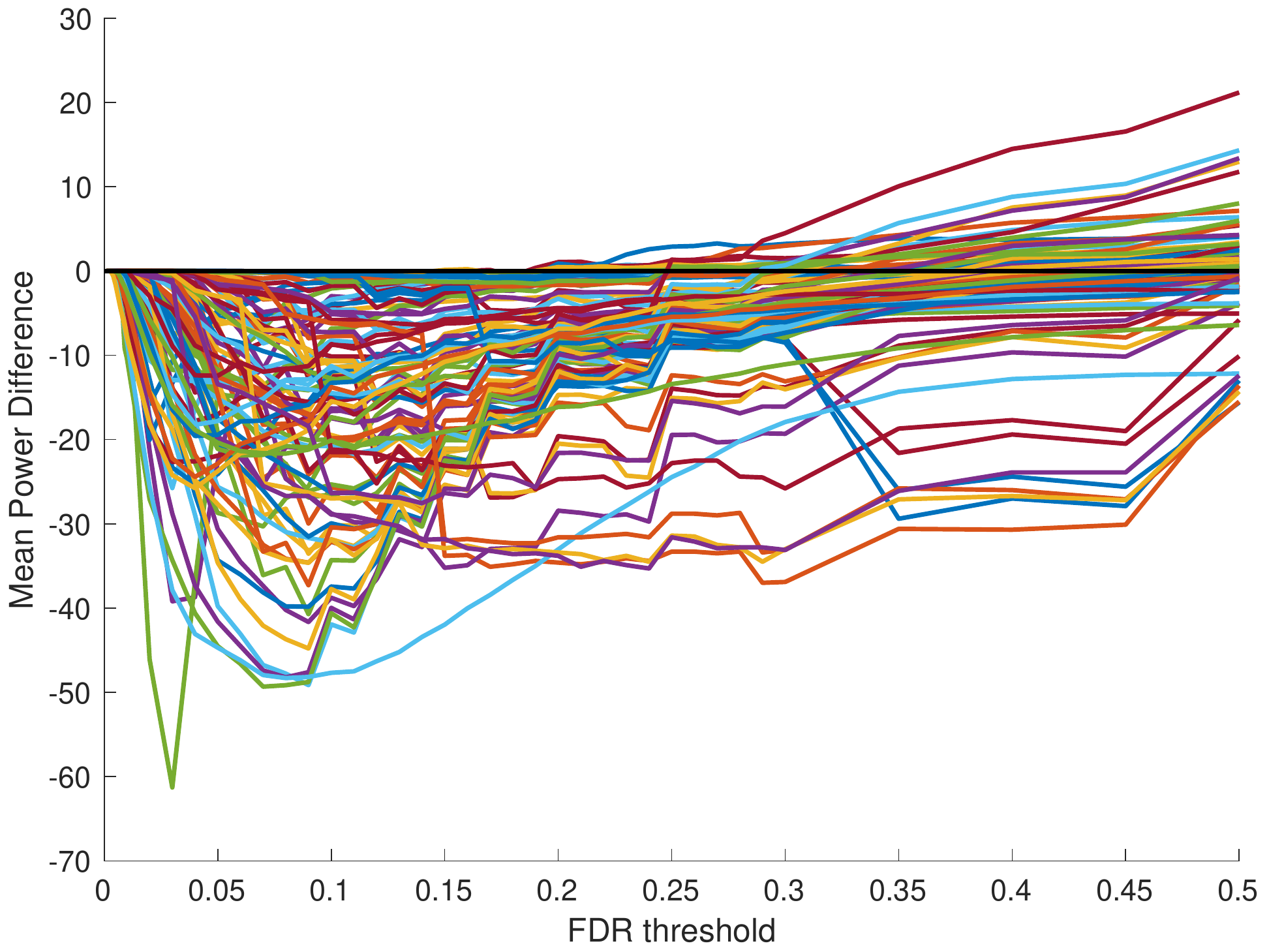}\tabularnewline
		C. Mirror vs.~multi-knockoff & D. Max vs.~multi-knockoff\tabularnewline
		\protect\includegraphics[width=3in]{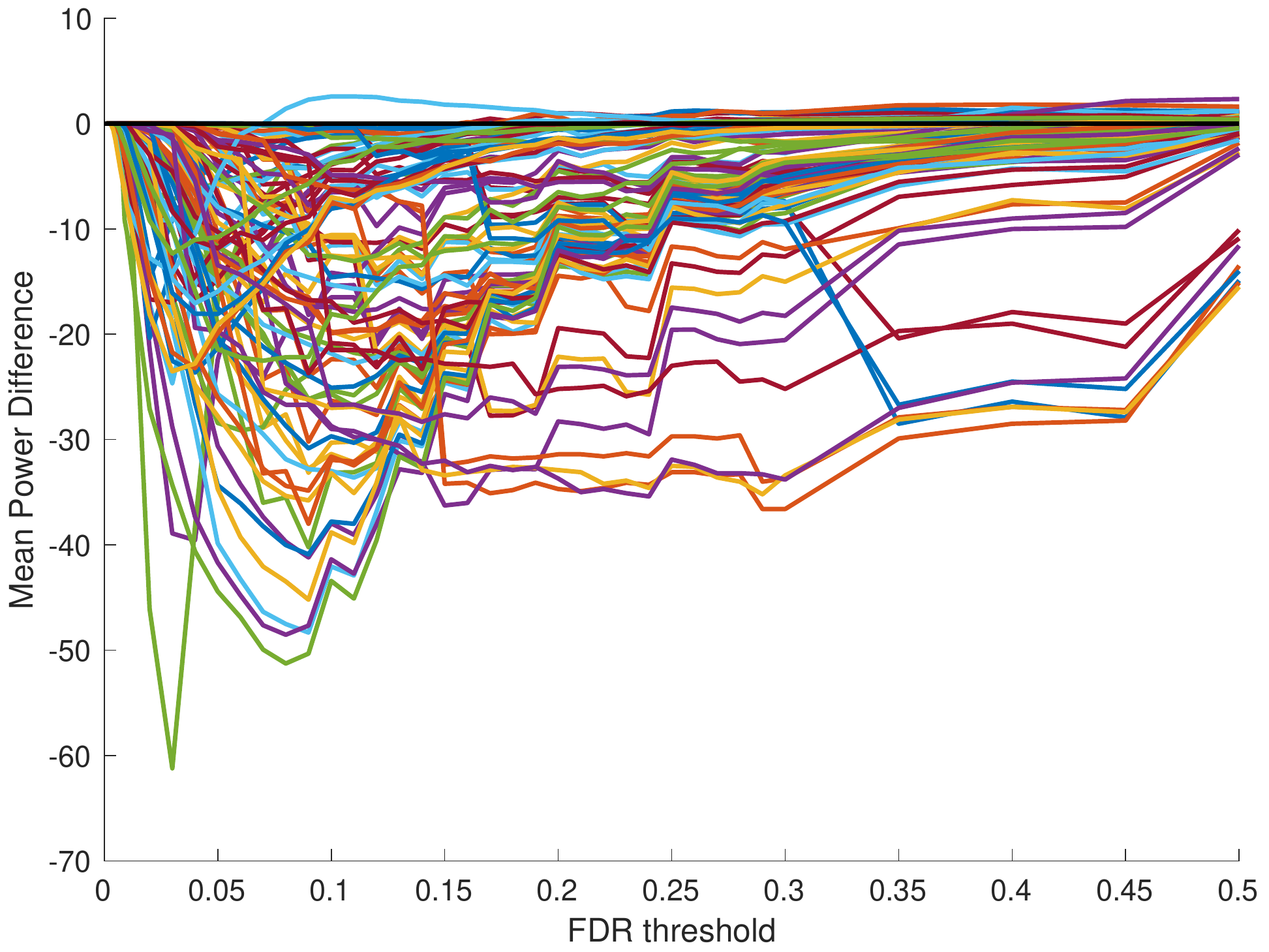} & \protect\includegraphics[width=3in]{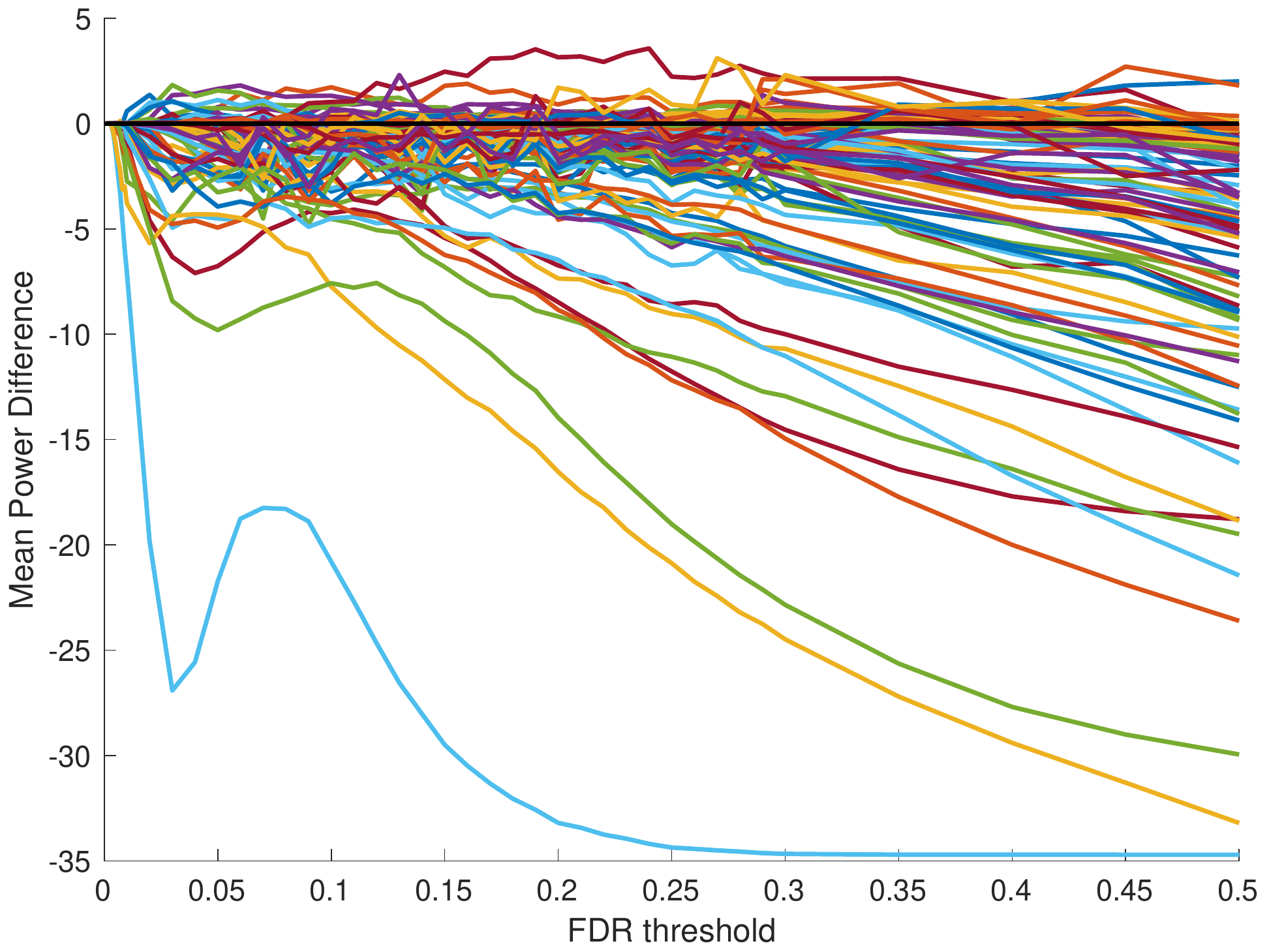}\tabularnewline
		E. LBM vs.~multi-knockoff & F. Multi-knockoff-select vs.~multi-knockoff\tabularnewline
		\protect\includegraphics[width=3in]{figures/n3000_n800_n600_r2376_r2398_power_mKO_vs_LBM_} & \protect\includegraphics[width=3in]{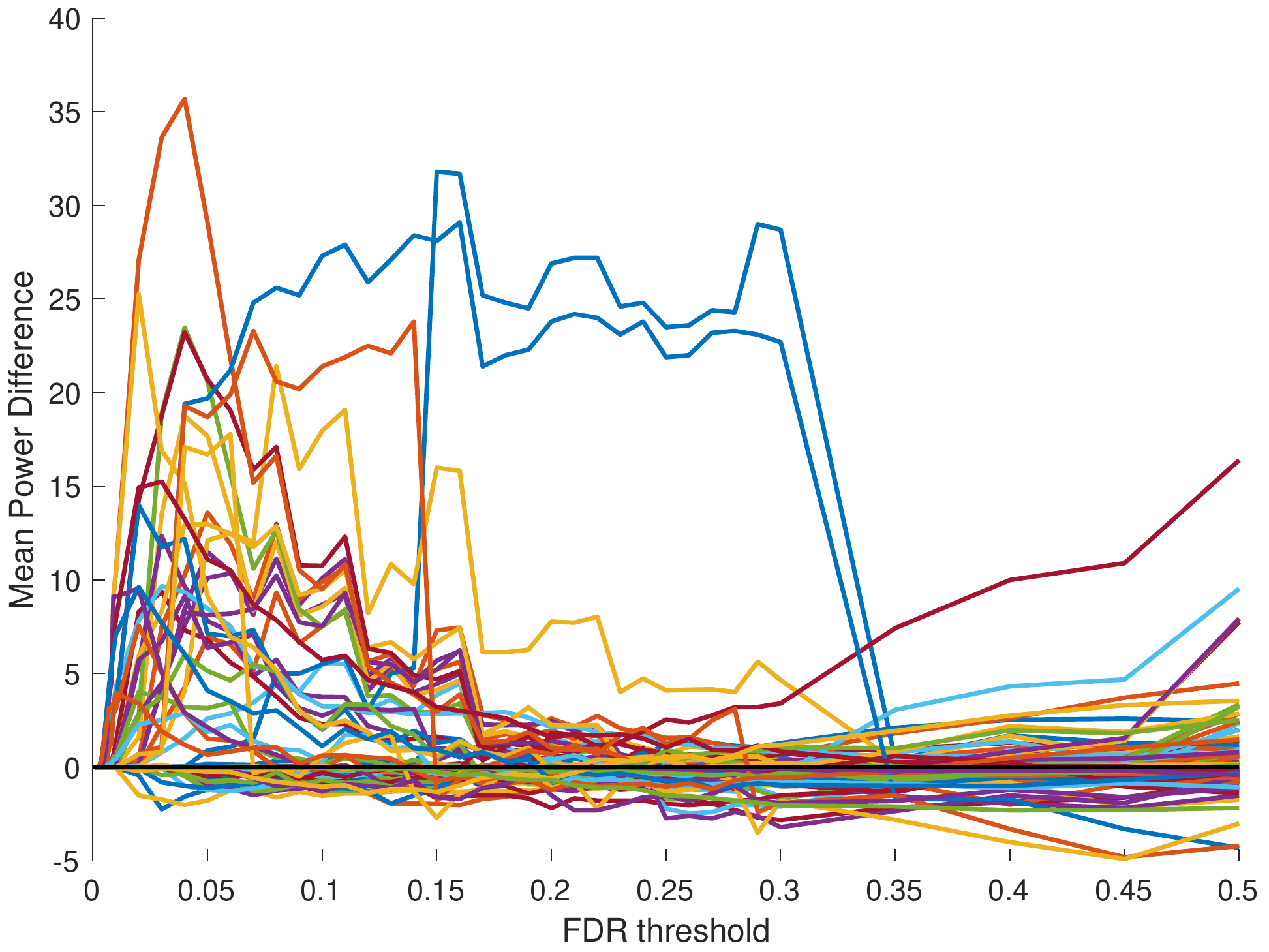}\tabularnewline
	\end{tabular}\protect\caption{\textbf{Power difference vs.~multi-knockoff.} Each panel shows the
		difference in power between one of the methods considered in this
		paper and multi-knockoff. All the methods were applied to all the datasets
		that are included the combined set described in \suppsec\ref{subsec:Combined-dataset}.
		Note that the scale of the $y$-axis varies across the panels. \label{fig:all-vs-multiKO}}
\end{figure}

%

\begin{figure}
	\centering %
	\begin{tabular}{ll}
		A. The optimal \# of knockoffs, $d$, varies (max) & B. The optimal $d$ varies (multi-knockoff)\tabularnewline
		\protect\includegraphics[width=3in]{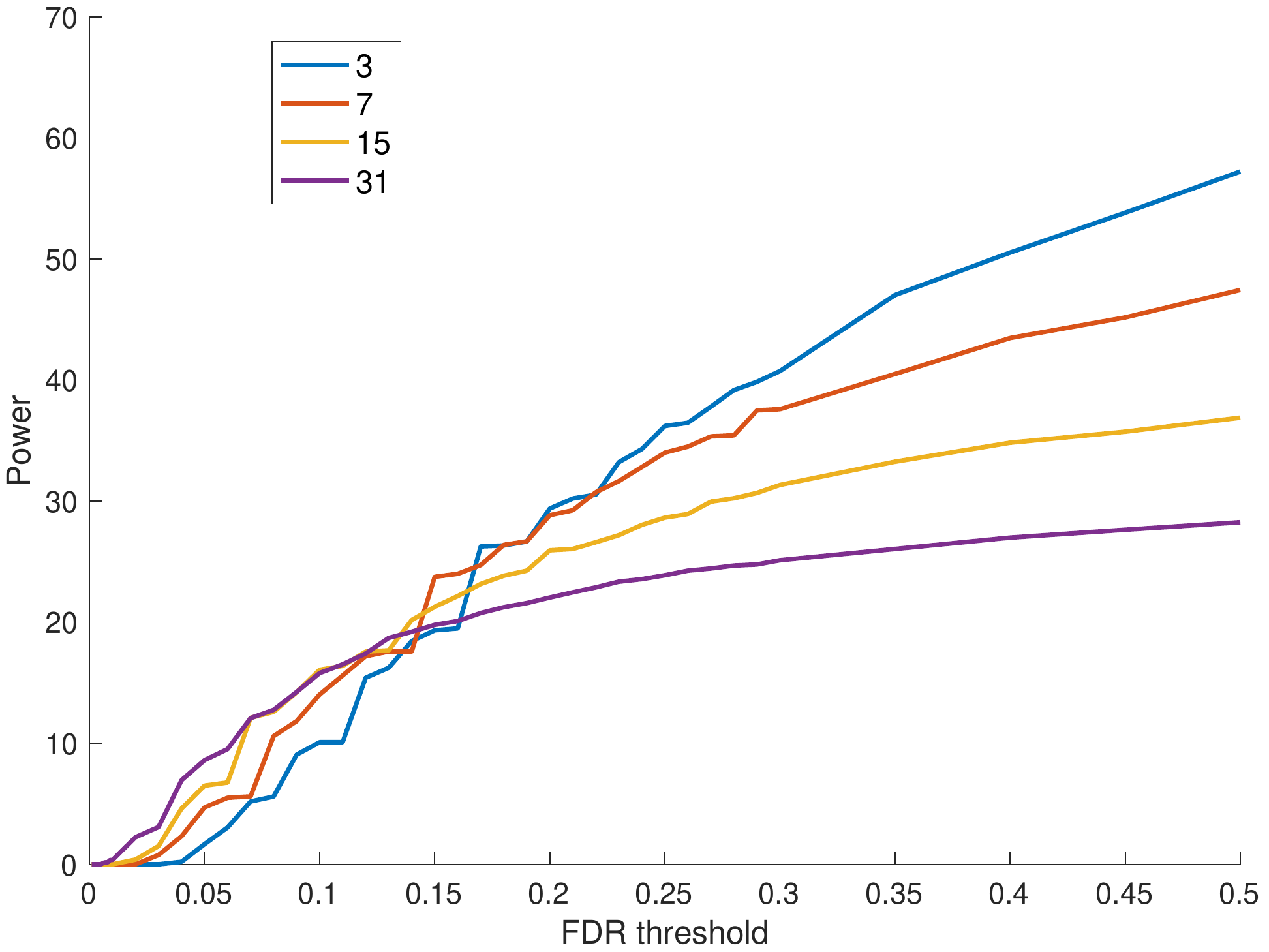} & \protect\includegraphics[width=3in]{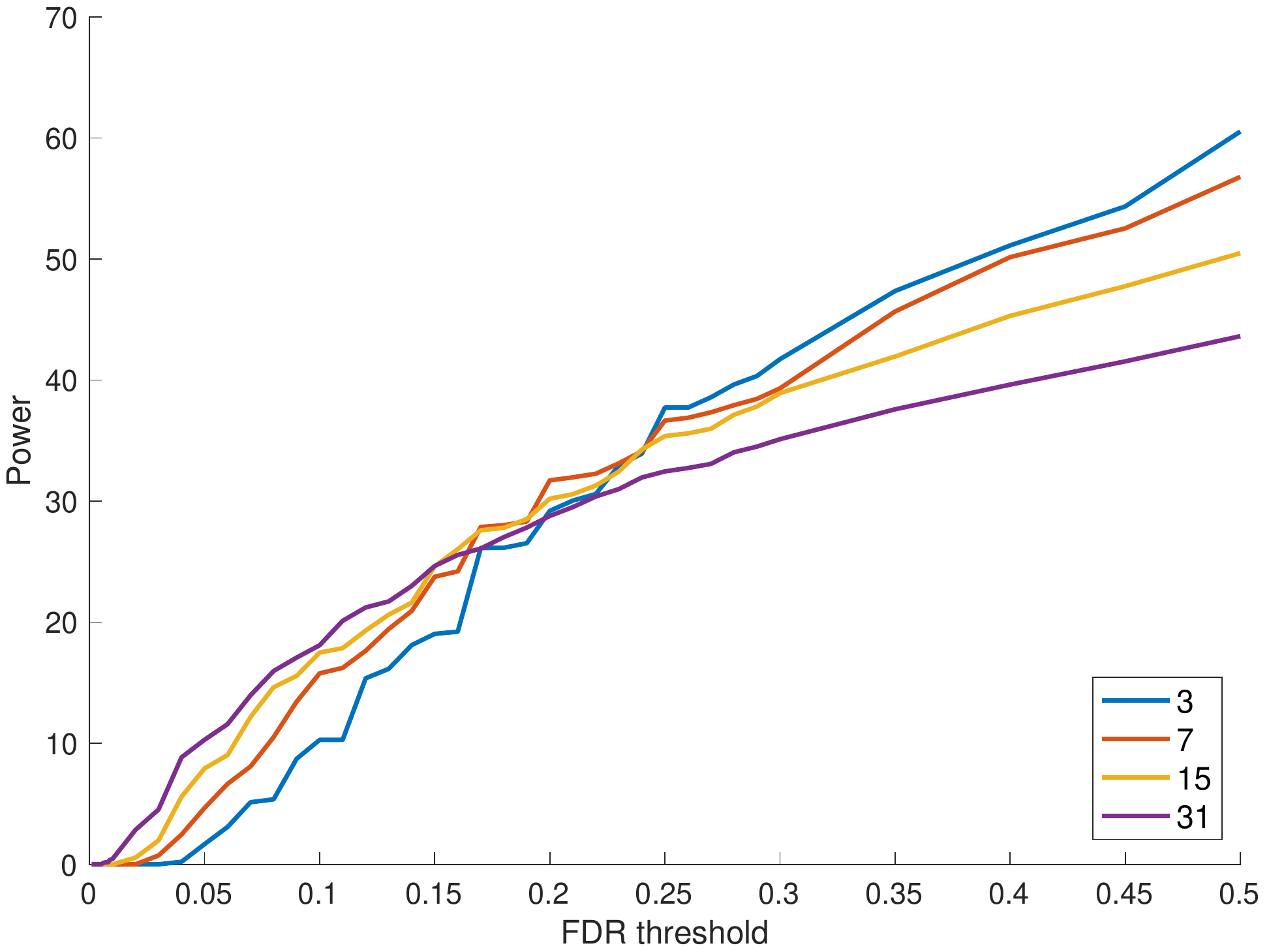}\tabularnewline
		C. Same as B but with multi-knockoff-select & D. multi-knockoff vs. multi-knockoff-select\tabularnewline
		\protect\includegraphics[width=3in]{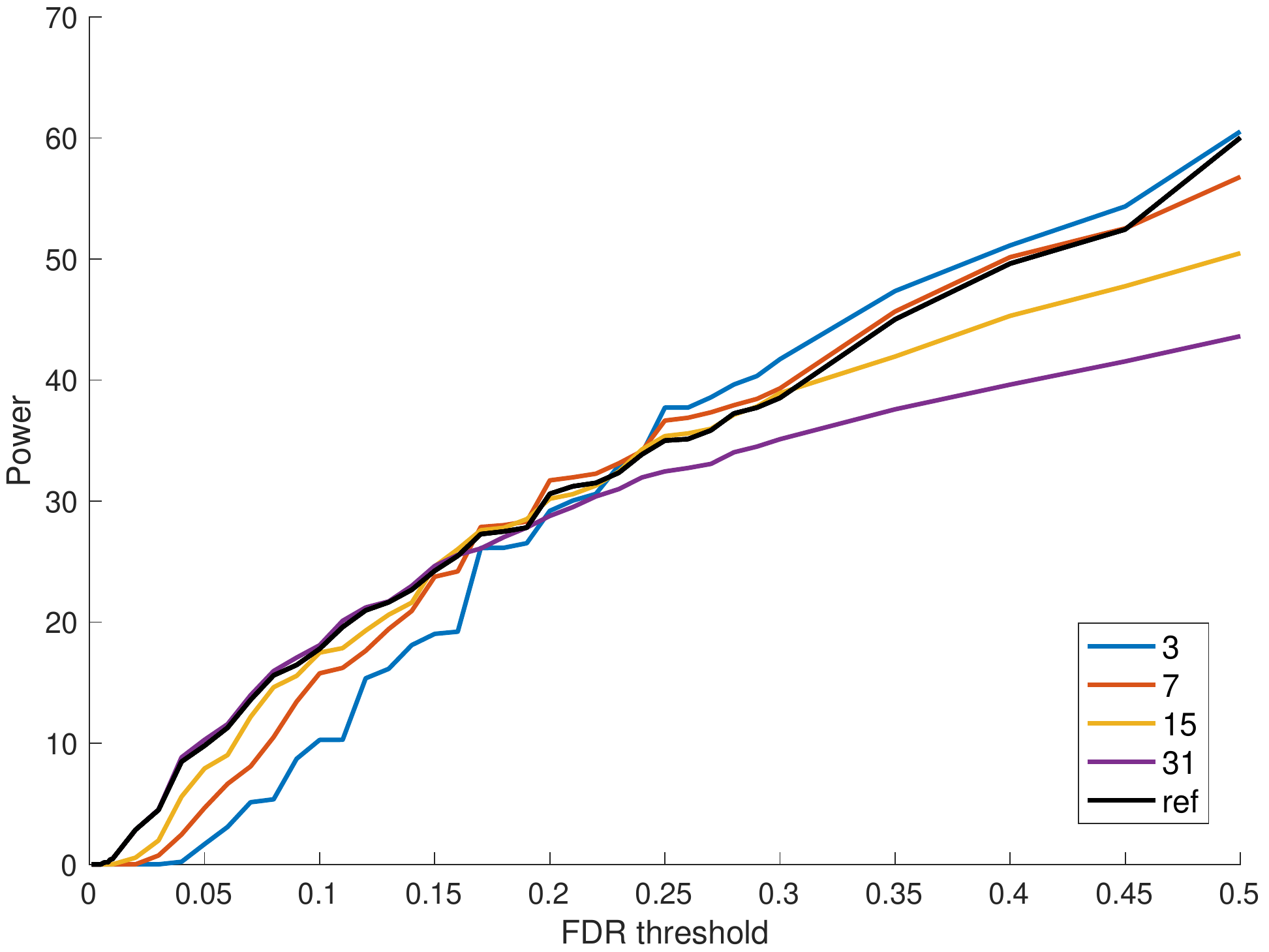} & \protect\includegraphics[width=3in]{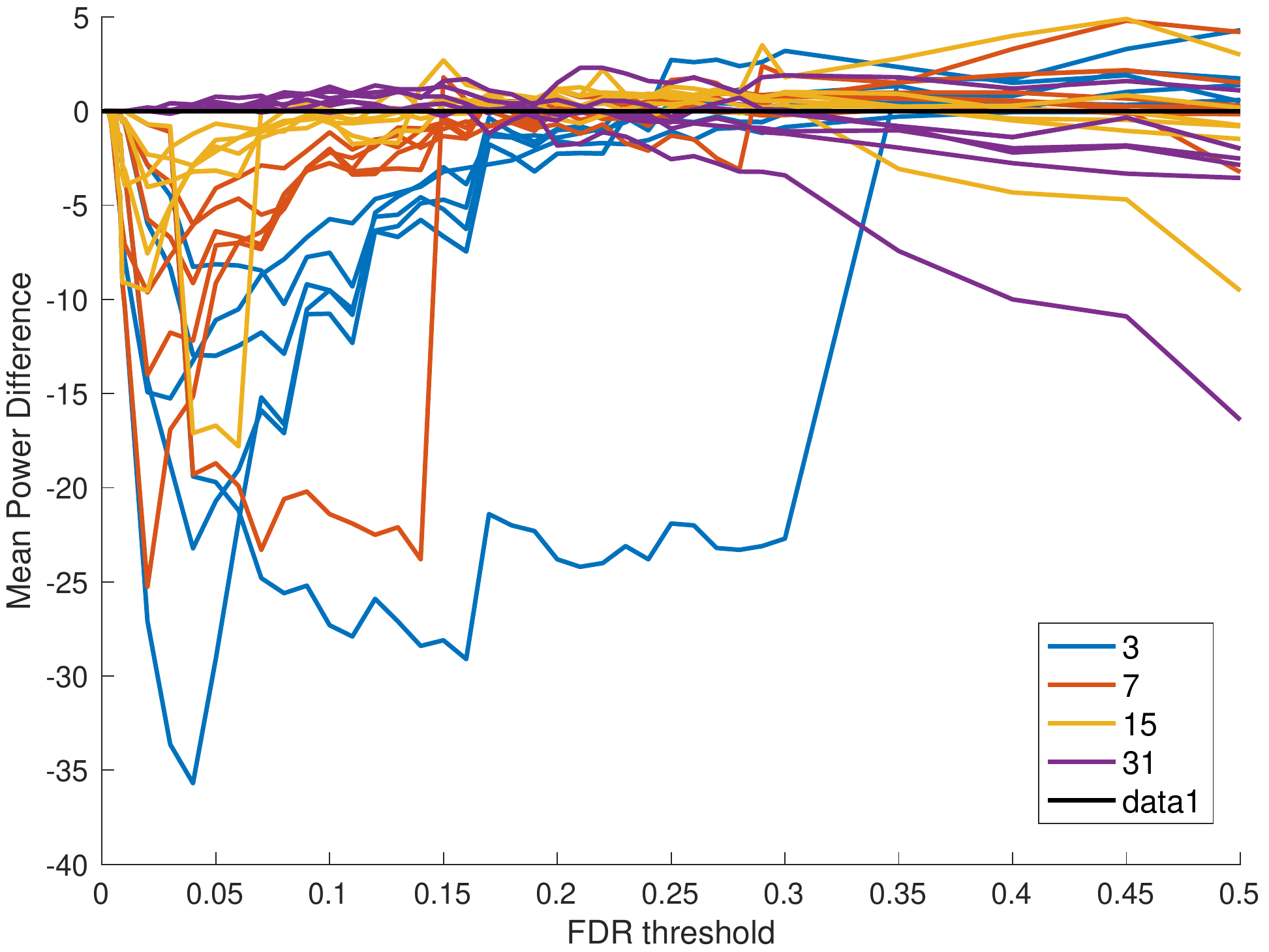}\tabularnewline
		E. knockoff+ vs. multi-knockoff-select & F. Empirical FDR (multi-knockoff-select)\tabularnewline
		\protect\includegraphics[width=3in]{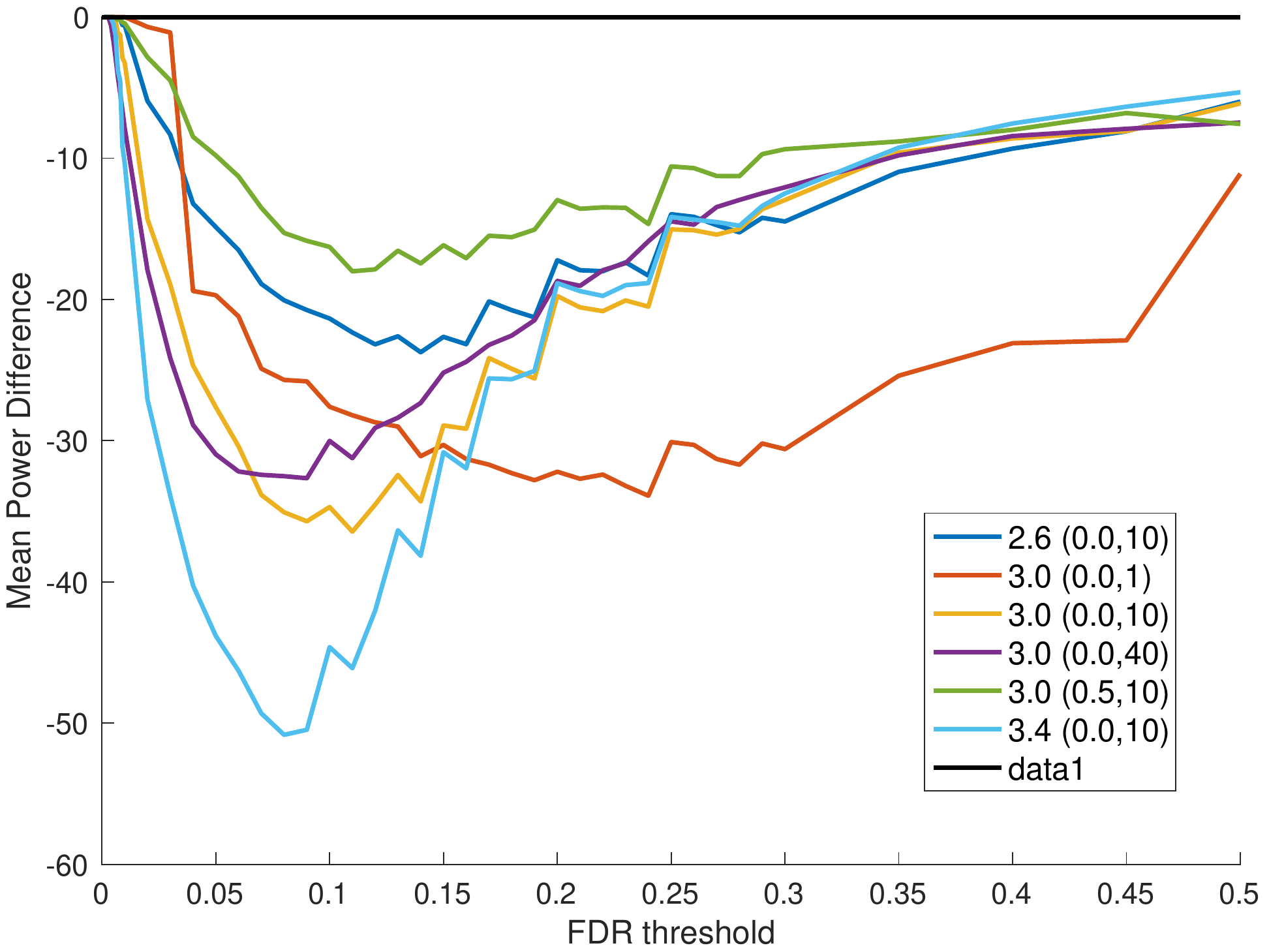} & \protect\includegraphics[width=3in]{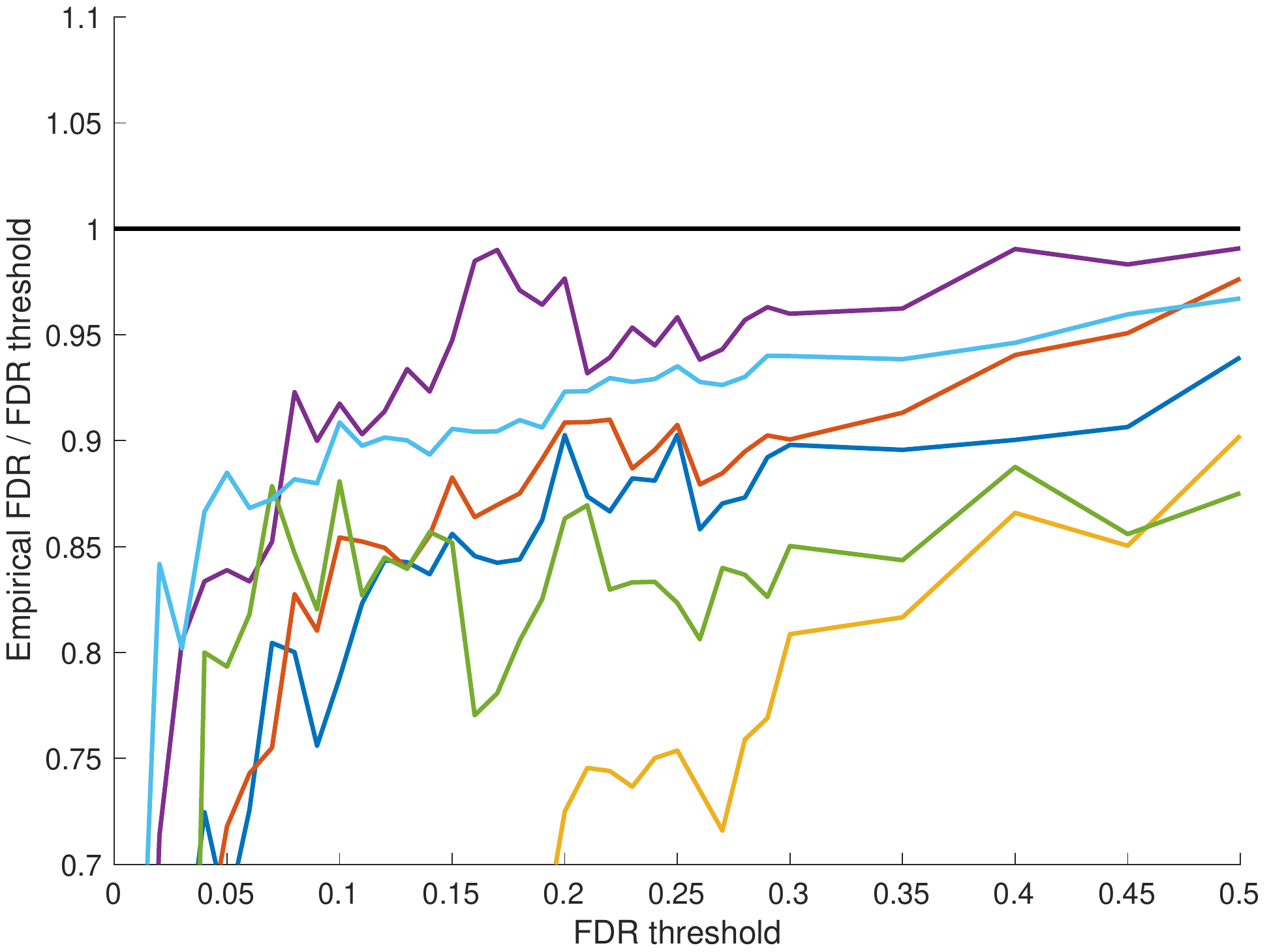}\tabularnewline
	\end{tabular}\protect\caption{\textbf{Varying the number of knockoffs} All the plots were
		created using the set of experiments defined in \suppsec\ref{subsec:n600_p200_d1_3_7_15_31_b40}.
		When applying any multiple-knockoff selection method to a randomly
		drawn dataset we used the increasing sequence of $d=3,7,15,31$ knockoffs so
		we can examine how the method's power varies with $d$. (A-B) The
		optimal value of $d$ varies for the max method and multi-knockoff
		in this example where $n=600$, $p=200$, $K=10$ with $A=3.0$ and
		0 feature correlation $\Theta_{0}=I_{p}$. (C) multi-knockoff-select  (black ``ref''
		curve) seems to do a good job at tracking the near-optimal value of $d$
		(same data as in panel B). (D) The difference in average power of
		multi-knockoff vs. multi-knockoff-select using all six datasets in
		\suppsec\ref{subsec:n600_p200_d1_3_7_15_31_b40}. (E) Same as in
		panel D but now the comparison is with knockoff+, which is evidently
		uniformly weaker than multi-knockoff-select. (F) Empirical evidence
		that multi-knockoff-select controls the FDR when applied to these
		datasets  (same legend as panel E). \label{fig:multiple_nKO}}
\end{figure}

\begin{figure}
	\centering %
	\begin{tabular}{ll}
		A. Knockoff+ vs.~multi-knockoff-select & B. Batched-knockoff+ vs.~multi-knockoff-select\tabularnewline
		\protect\includegraphics[width=3in]{figures/n3000_n800_n600_r2376_r2398_power_mKOsel_vs_KO_} & \protect\includegraphics[width=3in]{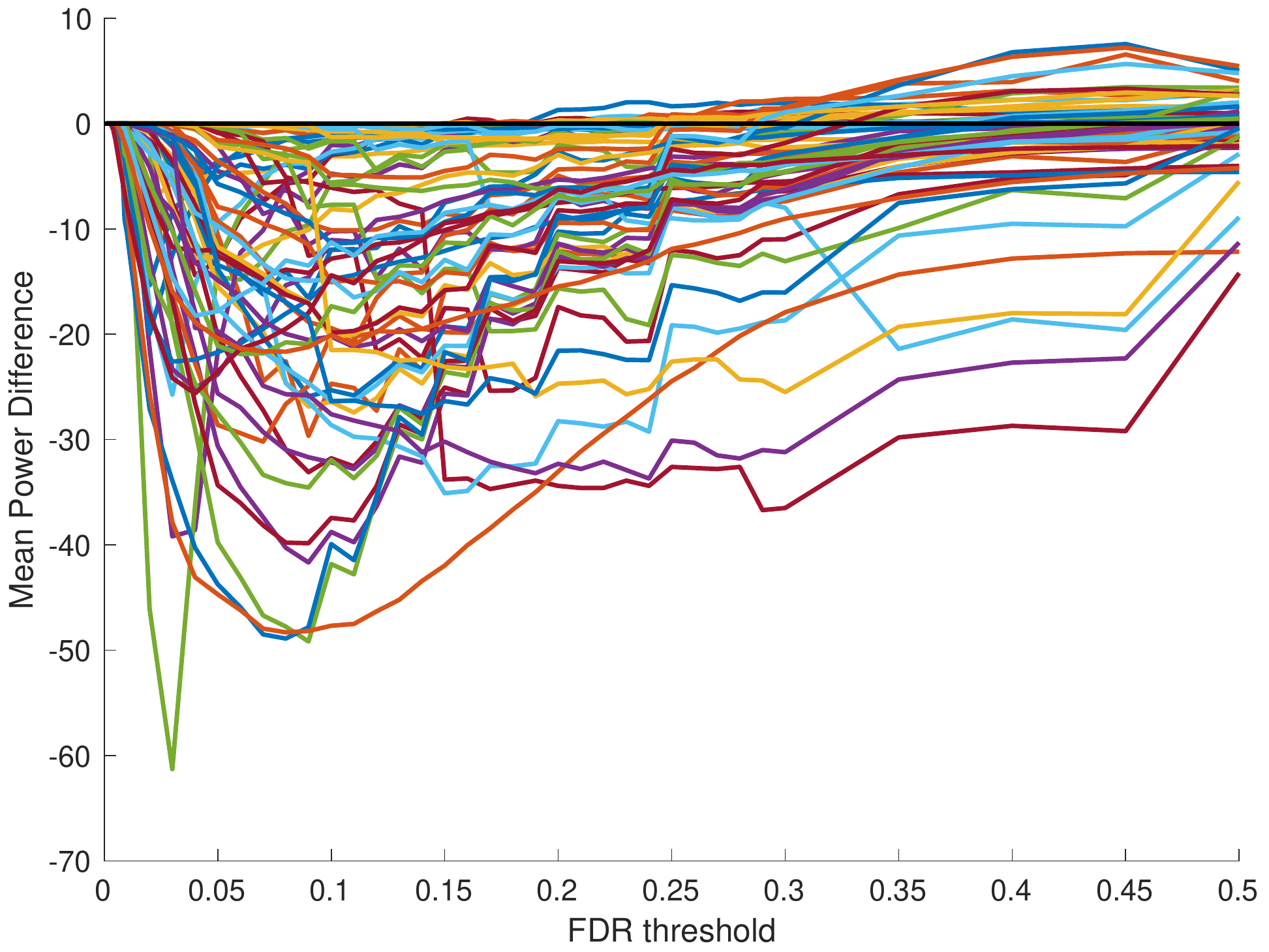}\tabularnewline
		C. Mirror vs.~multi-knockoff-select & D. Max vs.~multi-knockoff-select\tabularnewline
		\protect\includegraphics[width=3in]{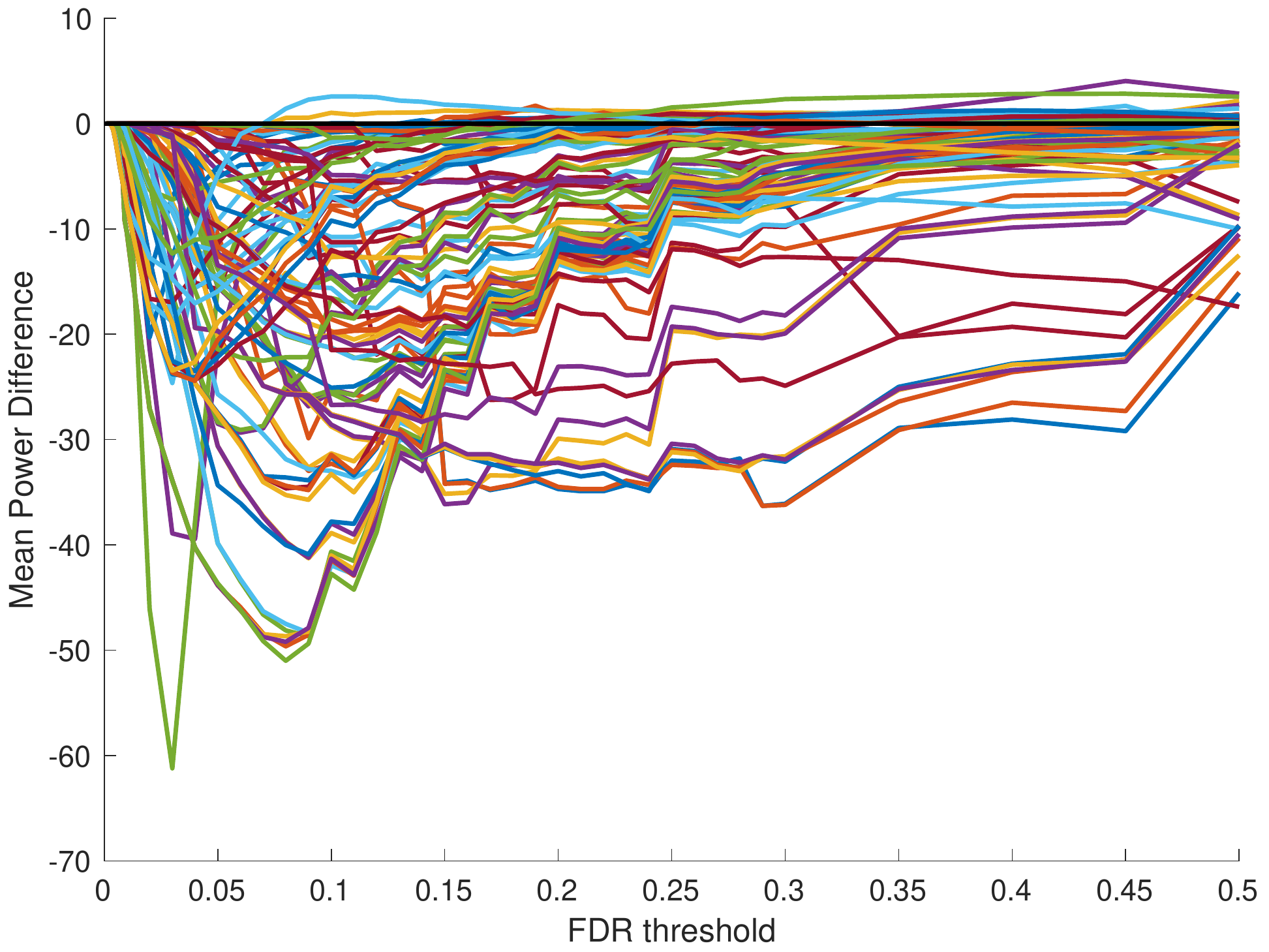} & \protect\includegraphics[width=3in]{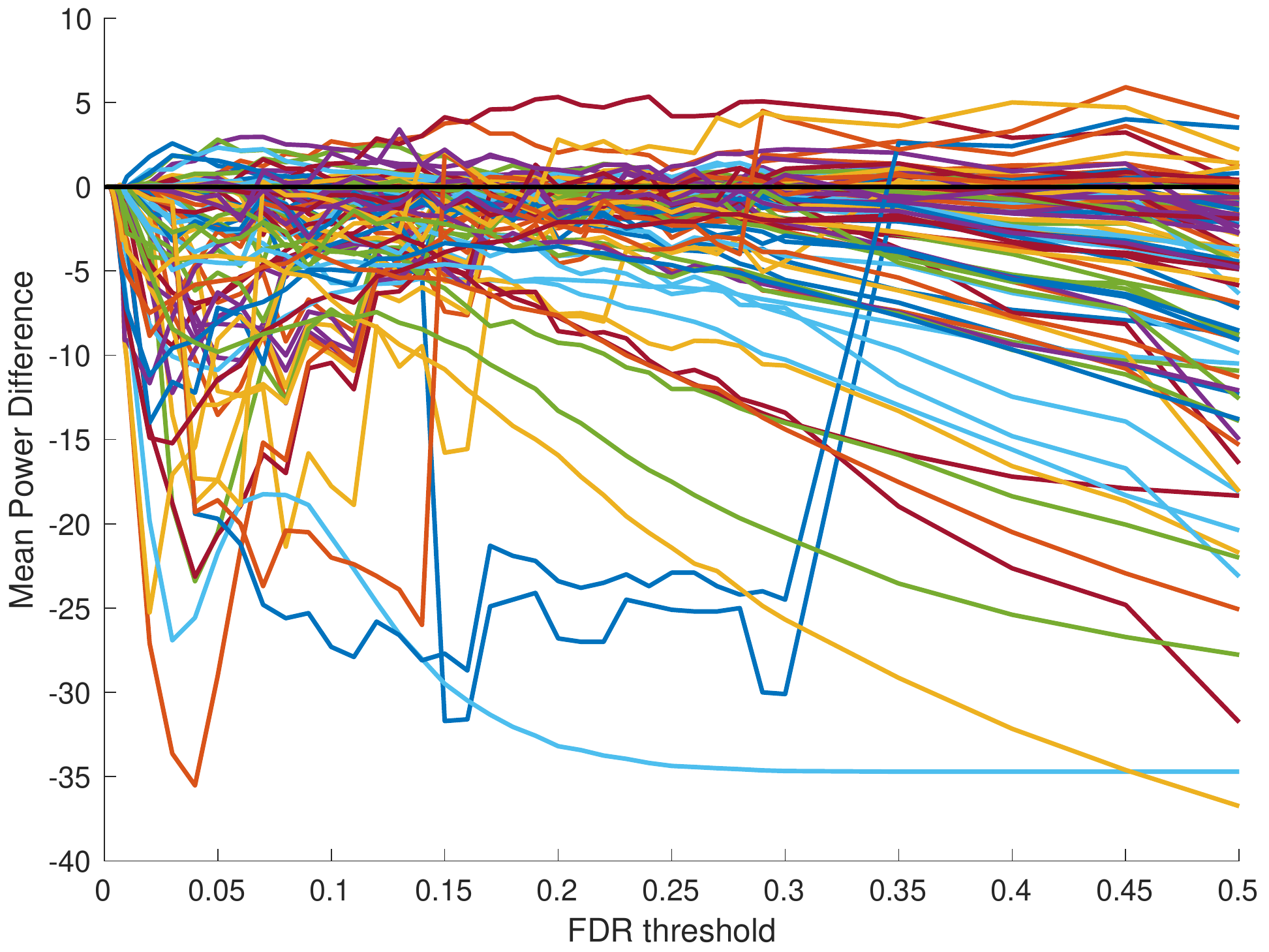}\tabularnewline
		E. LBM vs.~multi-knockoff-select & F. Multi-knockoff vs.~multi-knockoff-select\tabularnewline
		\protect\includegraphics[width=3in]{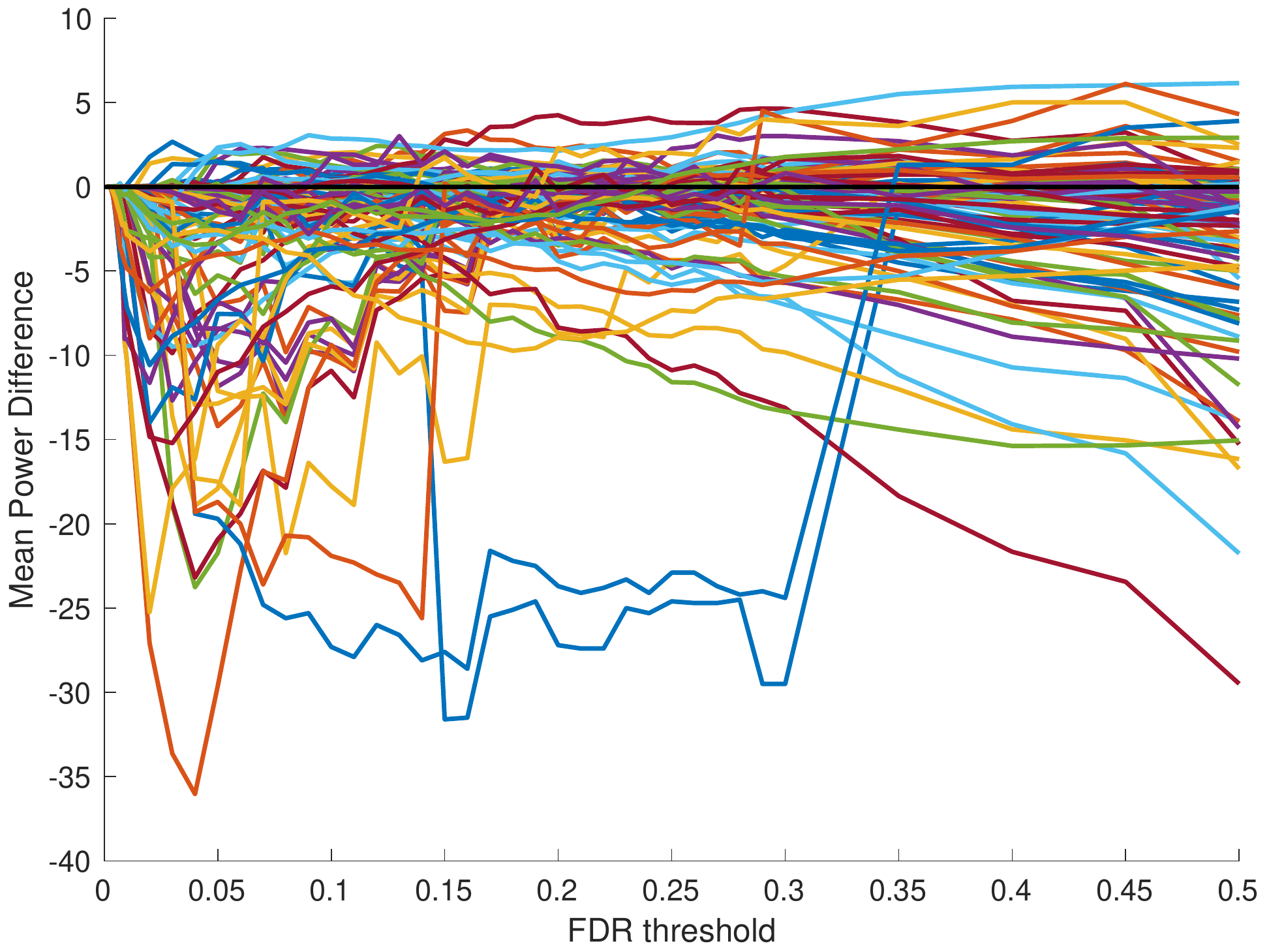} & \protect\includegraphics[width=3in]{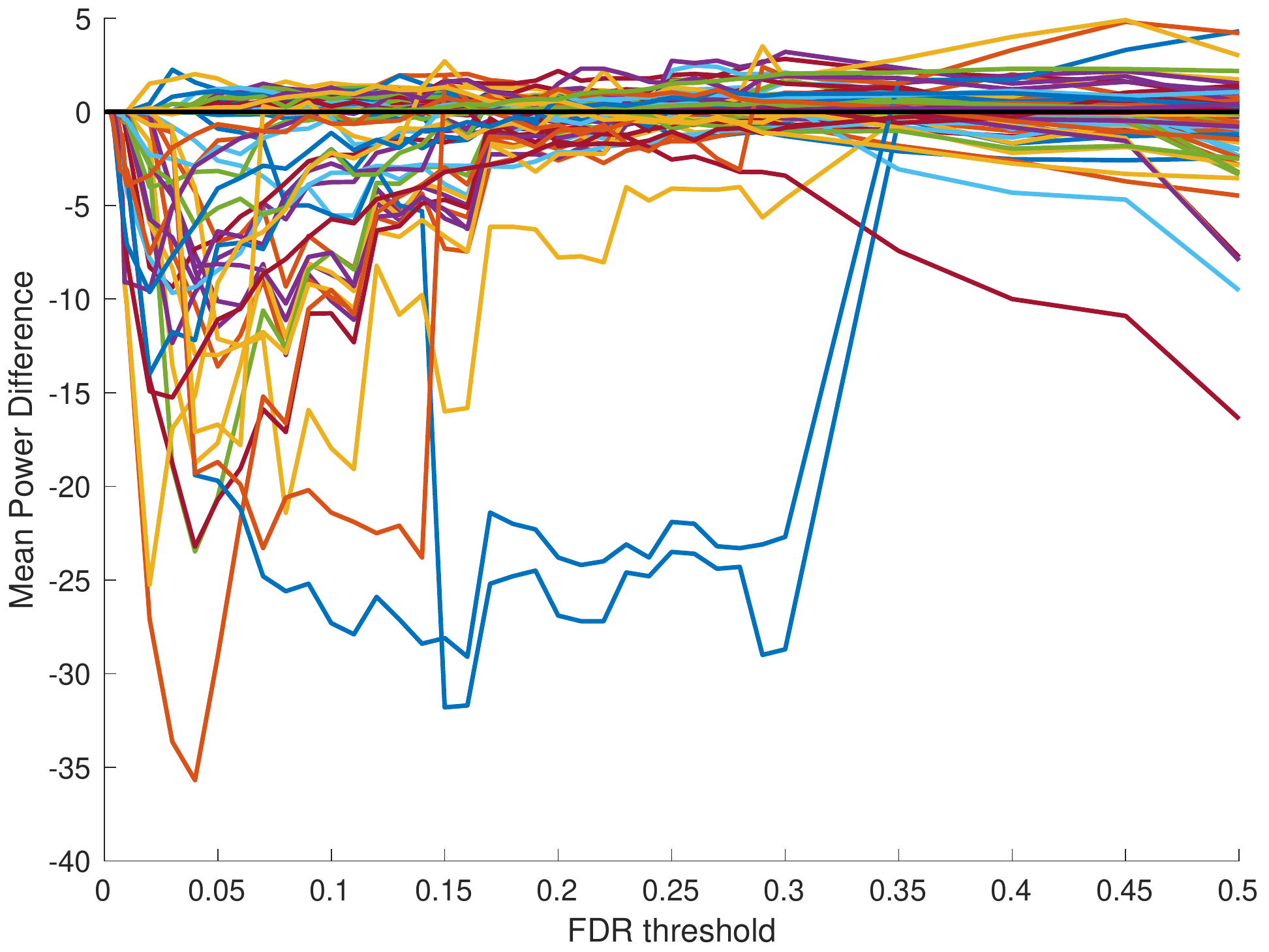}\tabularnewline
	\end{tabular}\protect\caption{\textbf{Power difference vs.~Multi-knockoff-select.} Each panel shows
		the difference in power between one of the methods considered in this
		paper and multi-knockoff-select. All the methods were applied to all the
		datasets that are included the combined set described in \suppsec\ref{subsec:Combined-dataset}.
		Note that the scale of the $y$-axis varies across the panels.\label{fig:all-vs-multiKO-select}}
\end{figure}

\begin{figure}
	\centering %
	\begin{tabular}{ll}
		A. Empirical FDR (multi-knockoff-select) & B. Knockoff+ vs.~multi-knockoff-select\tabularnewline
		\protect\includegraphics[width=3in]{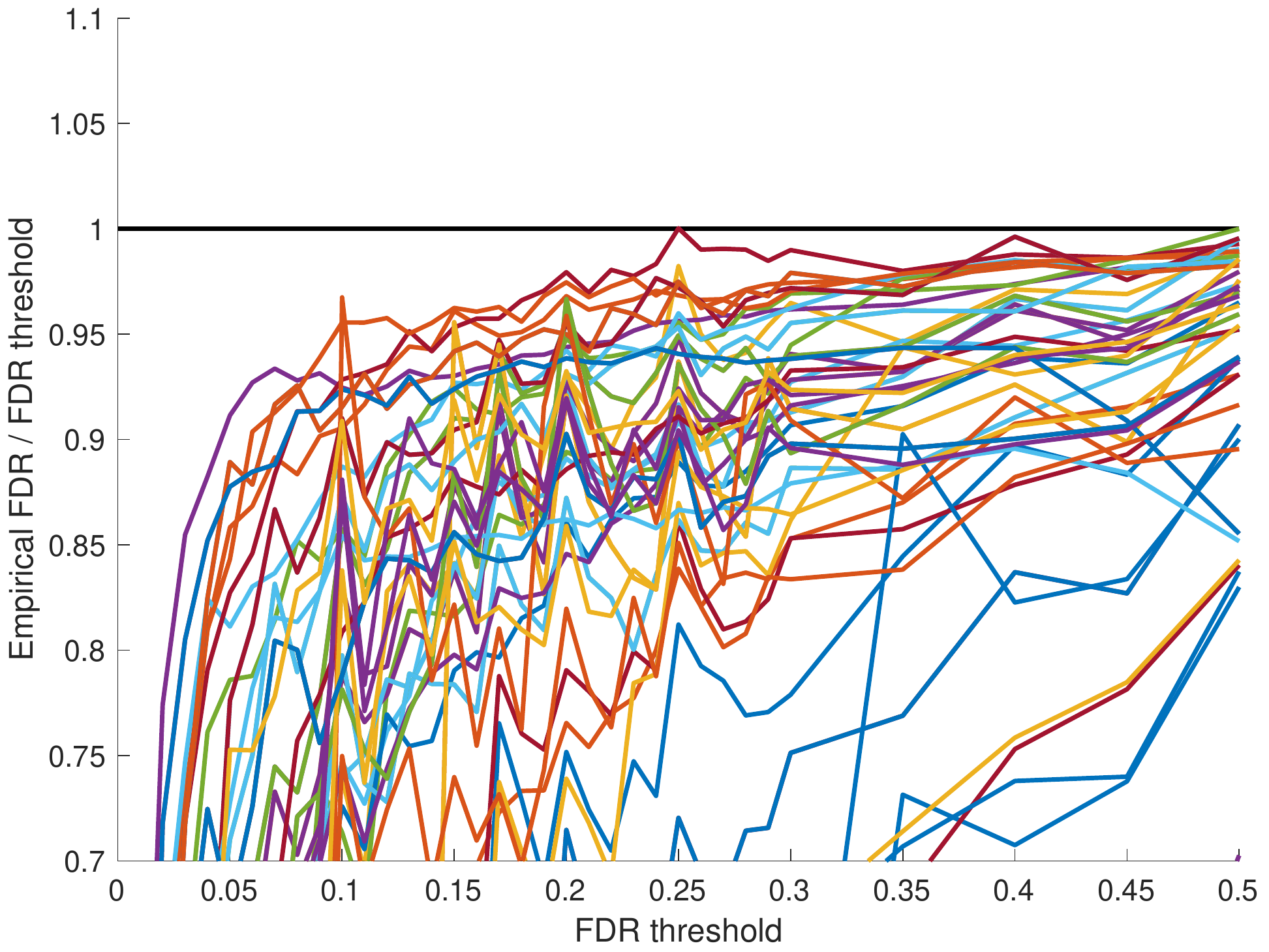} & \protect\includegraphics[width=3in]{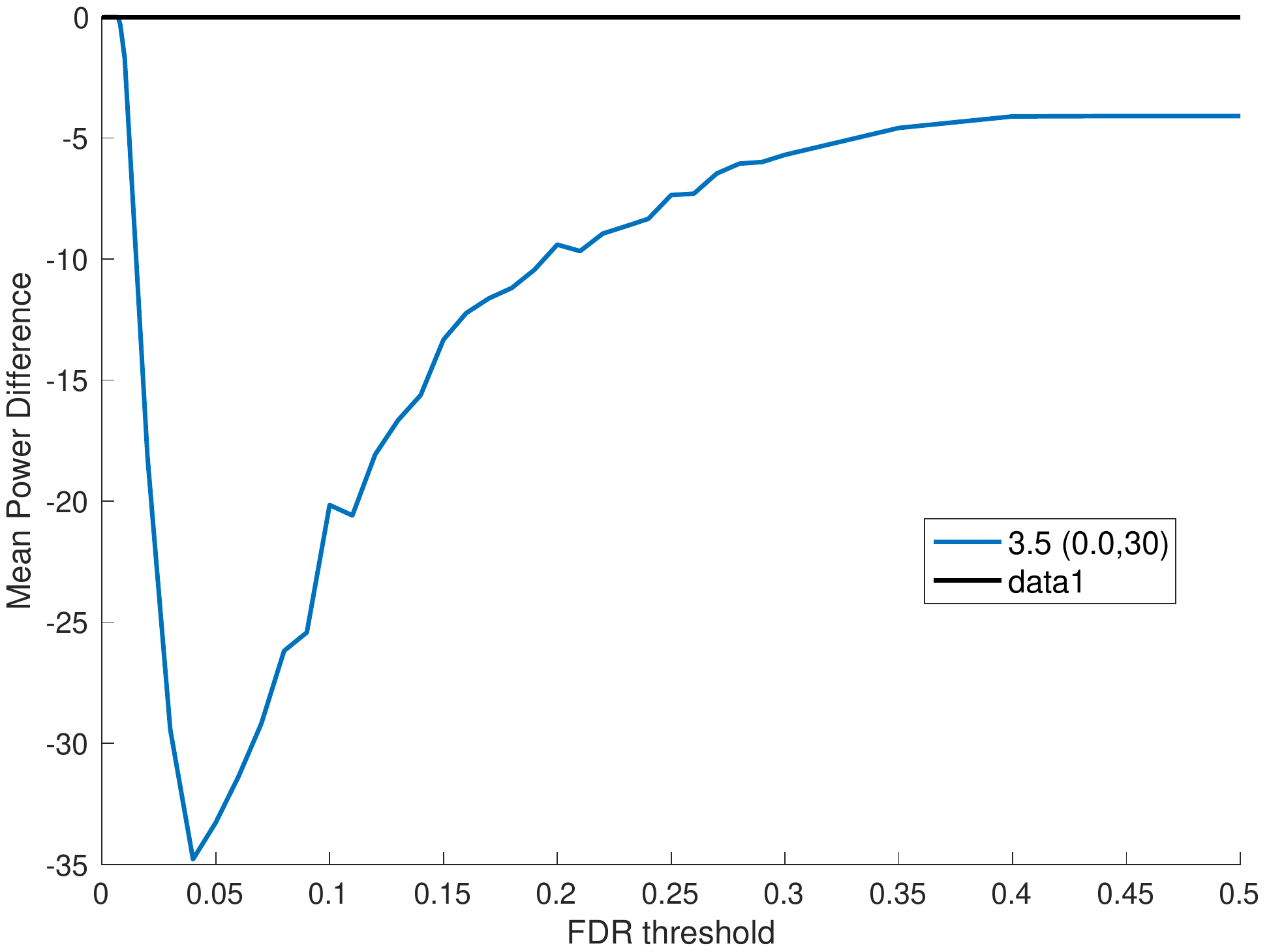}\tabularnewline
		C. Empirical FDR  (multi-knockoff-select) & \tabularnewline
		\protect\includegraphics[width=3in]{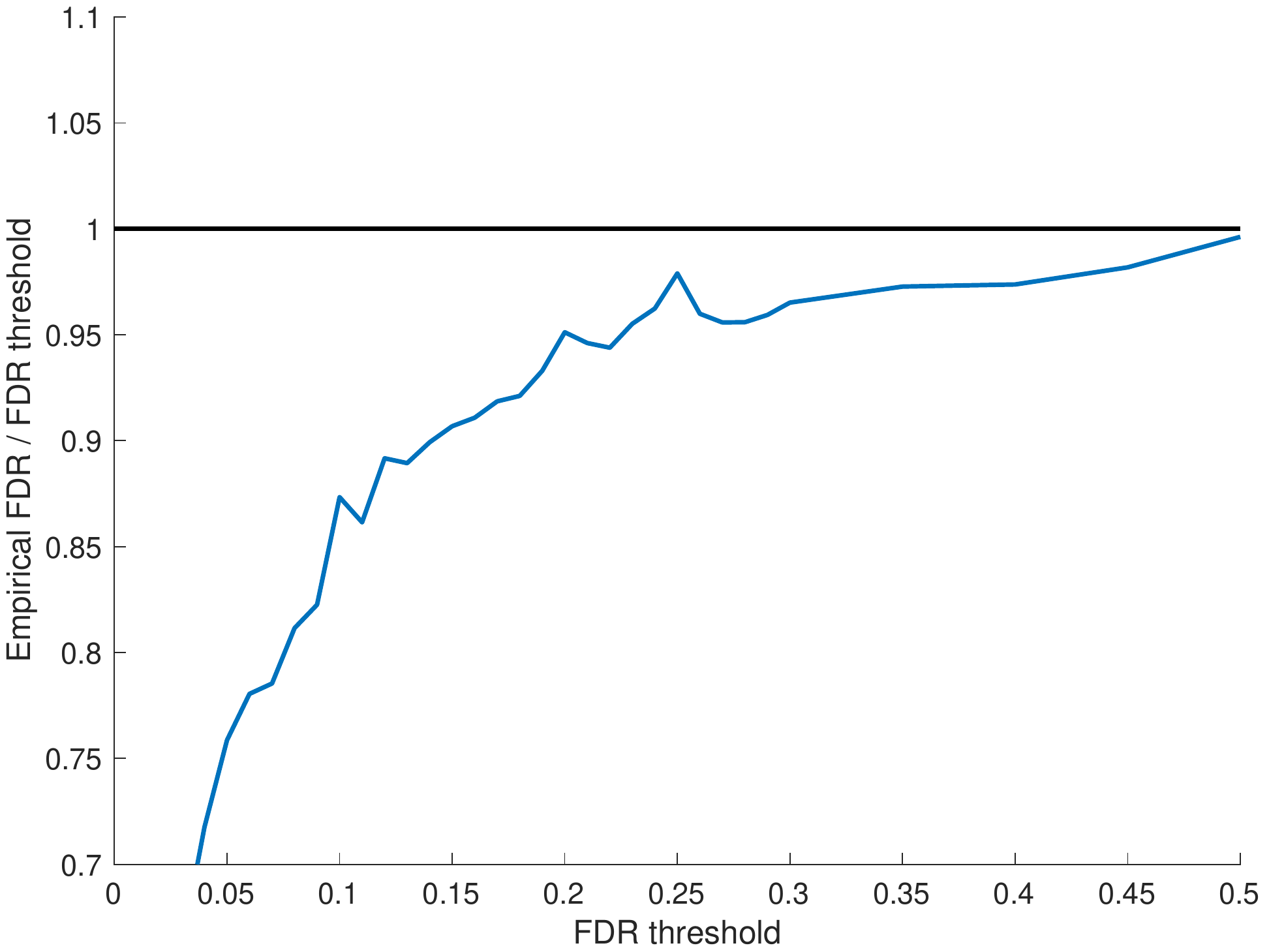} & \tabularnewline
	\end{tabular}\protect\caption{\textbf{More on multi-knockoff-select.} (A) The FDR of multi-knockoff-select
		is empirically controlled on the wide set of parameter values of the
		combined set (\suppsec\ref{subsec:Combined-dataset}). (B) Multi-knockoff-select
		(with $d\in\left\{ 1,3,7\right\} $ and $b=50$ batches) is uniformly
		more powerful than knockoff+ on the $n=3000$, $p=1000$, $K=30$,
		$A=3.5$, $\Theta_{0}=I_{p}$ example of \cite{barber:controlling}.
		(C) Empirical FDR of multi-knockoff-select on the same data used in
		panel B.\label{fig:supp_mKOsel_more}}
\end{figure}


\begin{figure}[h]
	\centering %
	\begin{tabular}{ll}
		A. $\rho=0$ & B. $\rho=0.5$\tabularnewline
		\protect\includegraphics[width=3in]{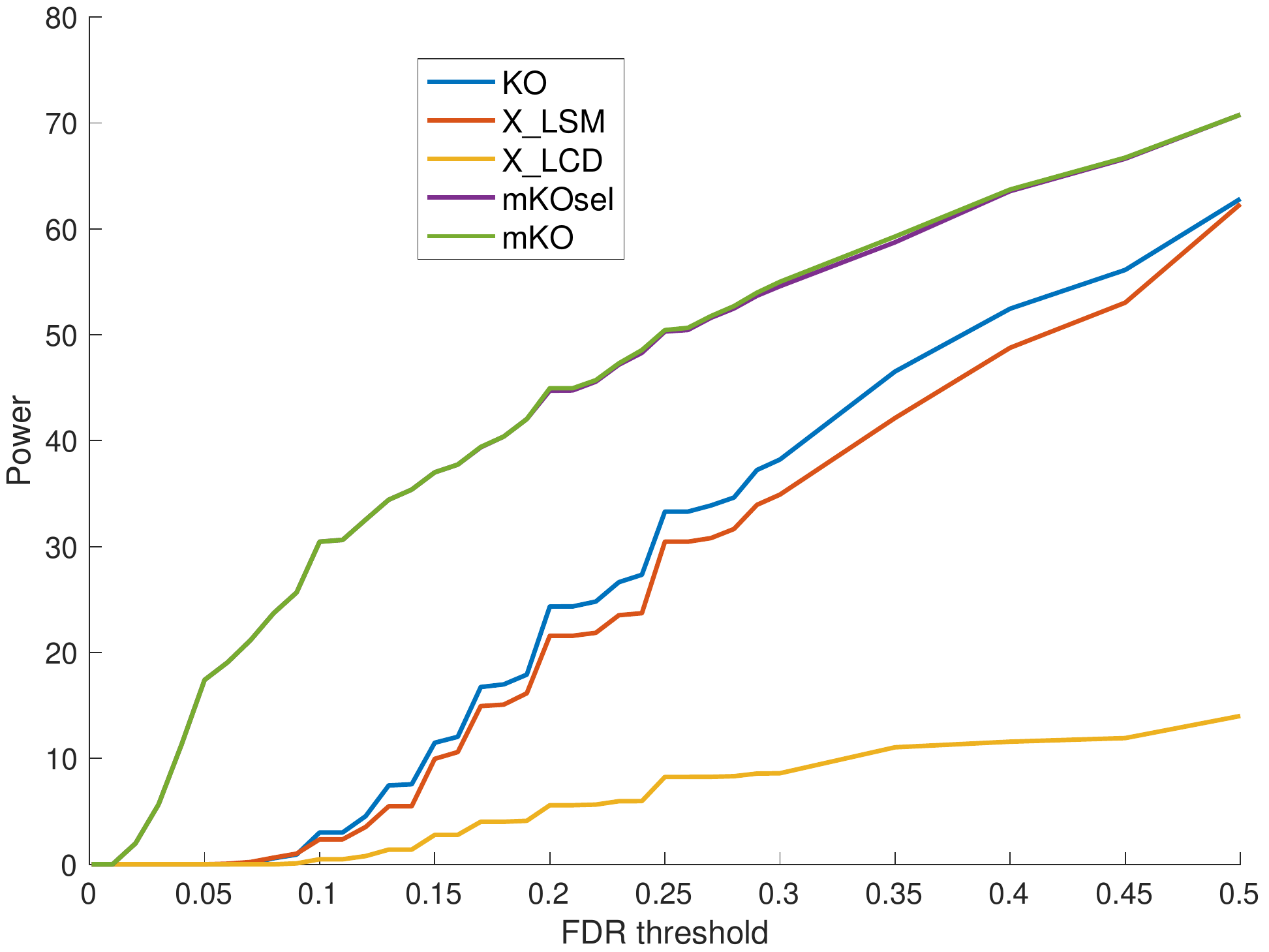} & \protect\includegraphics[width=3in]{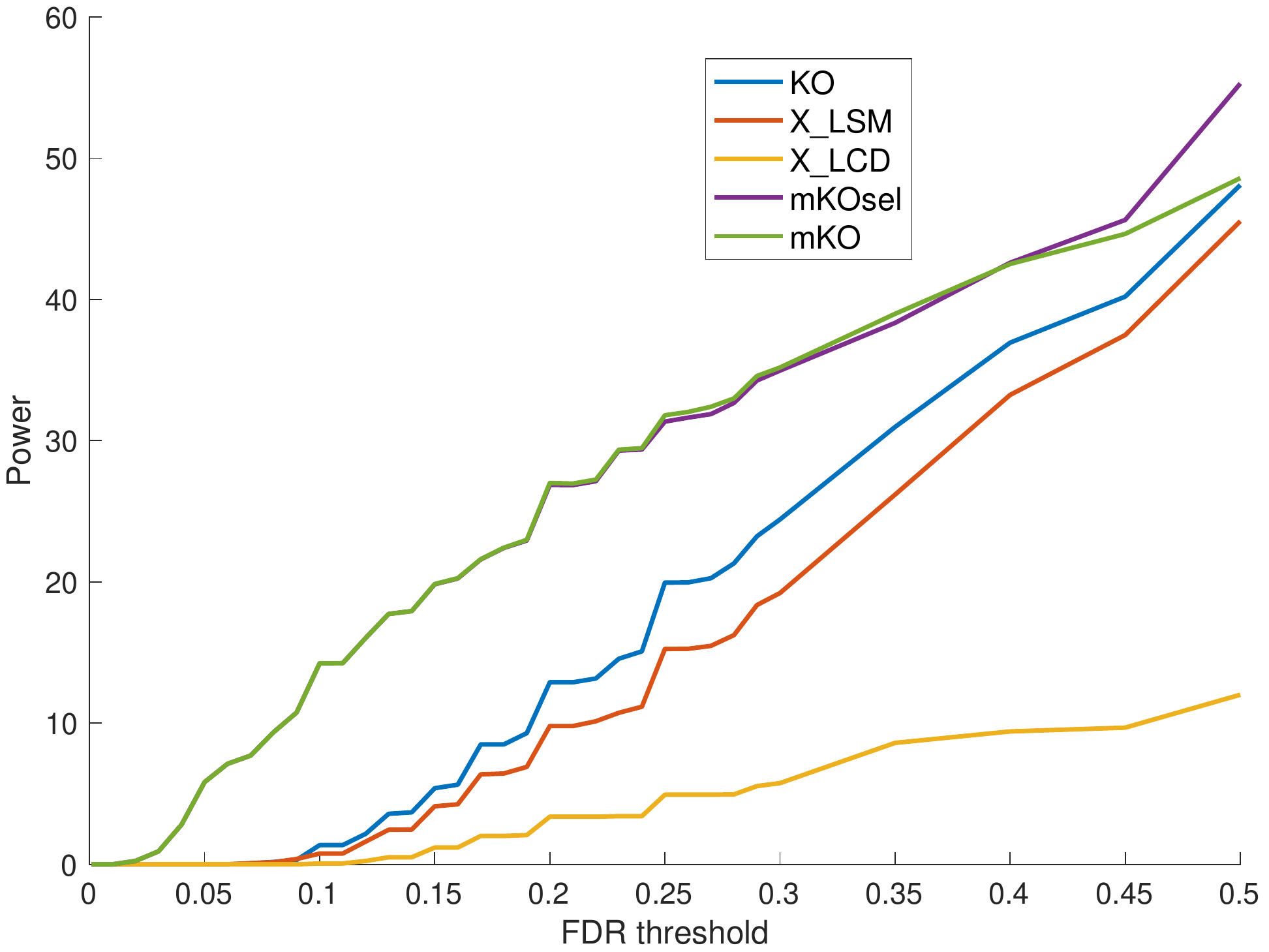}\tabularnewline
		C. $\rho=0.9$ & D. Empirical FDR ($\rho=0$)\tabularnewline
		\protect\includegraphics[width=3in]{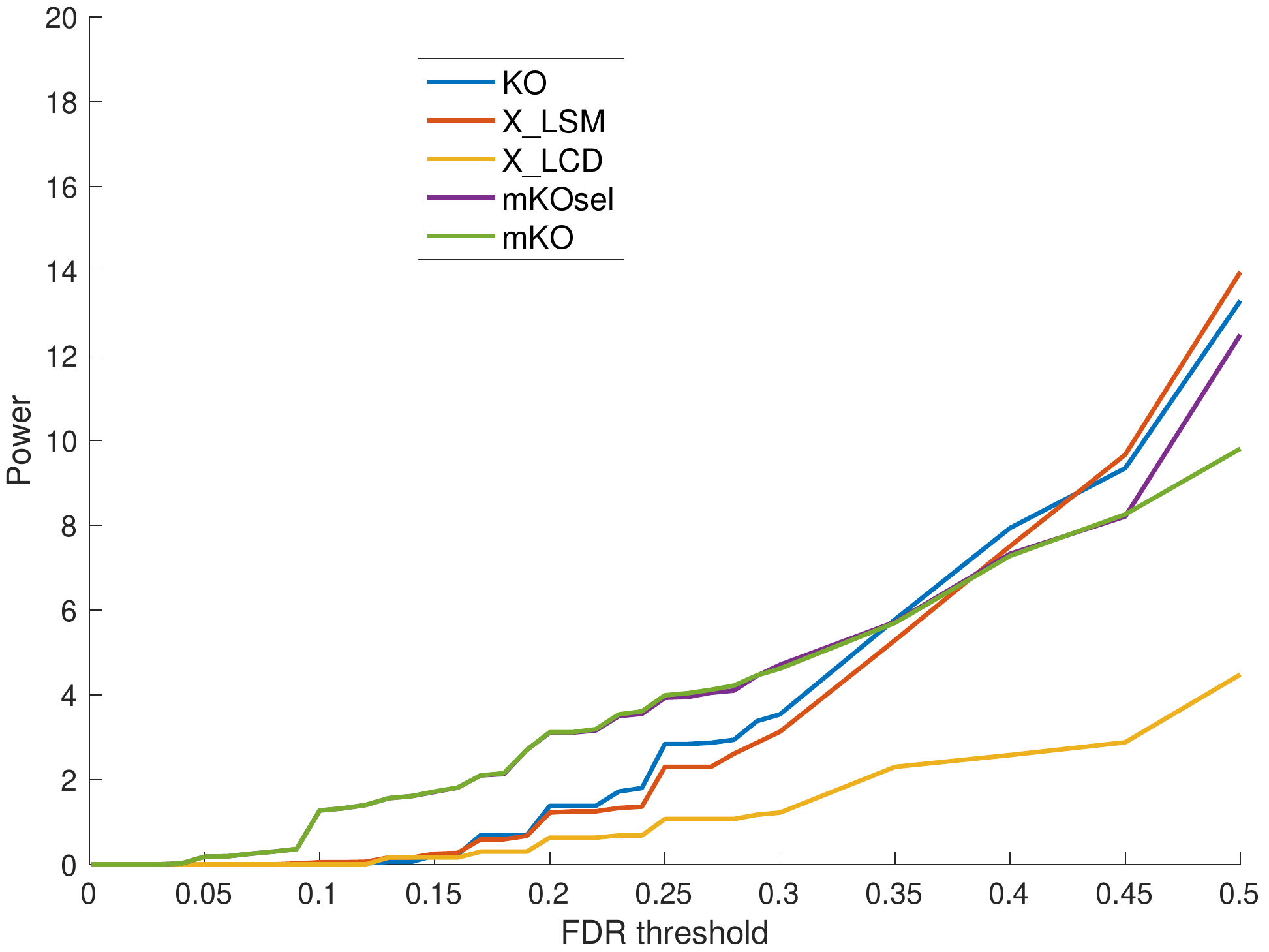} & \protect\includegraphics[width=3in]{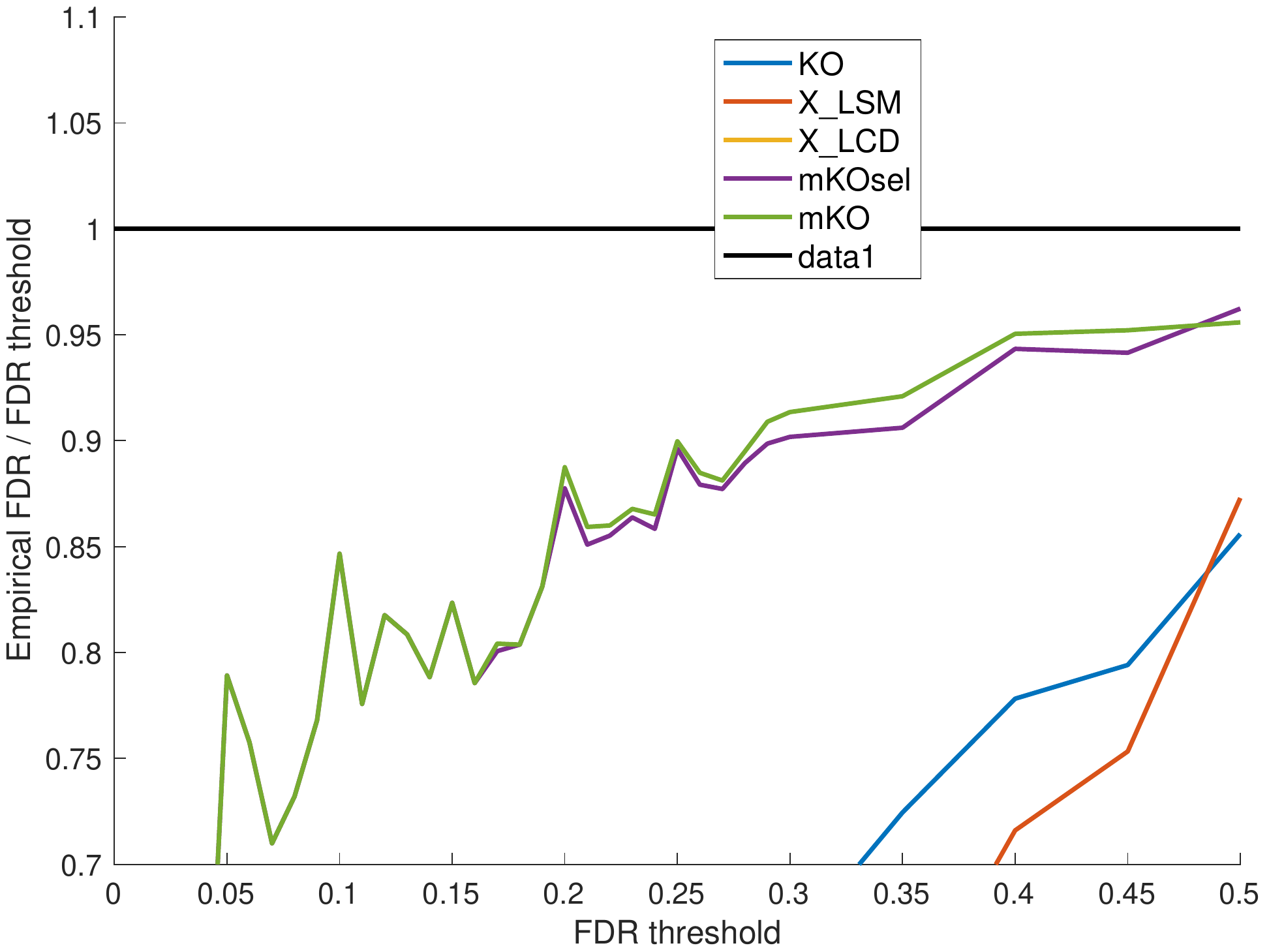}\tabularnewline
	\end{tabular}\protect\caption{\textbf{Comparison with model-X knockoffs. }In this figure we examine
		the performance of the model-X knockoffs. The data consisted of $n=600$,
		$p=200$, $K=10$, $A=2.8$ and a varied feature correlation strength
		as described in \suppsec\ref{subsec:The-model-X-dataset}. (A-C)
		The model-X LSM is on-par or below knockoff+ while the model-X LCD
		is generally significantly behind both. At the same time multi-knockoff
		and multi-knockoff-select dominates the single knockoff methods except
		when the FDR threshold is $\ge0.35$ and the feature correlation is
		very high ($\rho=0.9$). (D) For this case of $\rho=0$ the empirical
		FDR of LCD is always below 70\% of the threshold hence it does not appear in the plot.
		Similarly, for $\rho=0.5,0.9$ the empirical FDR
		of all the methods was consistently significantly below the threshold
		so we omitted those figures.\label{fig:supp_modelX}}
\end{figure}

\clearpage
	
\end{document}